\documentclass[11pt,a4paper]{article} 
\usepackage{graphicx}
\usepackage[margin=1in]{geometry} 
\usepackage{authblk} 
\usepackage[square,numbers]{natbib} 
\usepackage{amsthm}
\usepackage{booktabs} 
\usepackage[linesnumbered,vlined,commentsnumbered,ruled]{algorithm2e} 

\SetAlFnt{\small}
\SetAlCapFnt{\small}
\SetAlCapNameFnt{\small}
\SetAlCapHSkip{0pt}
\IncMargin{-\parindent}

\setcitestyle{numbers,square}

\usepackage{amsfonts}
\usepackage{xcolor}
\usepackage{enumitem} 
\usepackage{multirow}
\usepackage{bbm}
\usepackage{amsmath}

\usepackage[switch]{lineno}

\newtheorem{theorem}{Theorem}

\newtheorem{claim}{Claim}
\newtheorem{example}{Example}
\newtheorem{definition}{Definition}
\newtheorem{observation}{Observation}
 \newtheorem{lemma}{Lemma}

\newcommand{\MMS}{\mathsf{MMS}}

\newcommand{\Div}{\mathsf{div}}

\newcommand{\cM}{\mathcal{M}}

\newcommand{\len}{{\rm len}}

\newcommand{\fv}{\mathbf{v}}

\newcommand{\A}{\mathbf{A}}
\newcommand{\B}{\mathbf{B}}

\begin{document}

\title{Approximate Maximin Share with Subjective Divisibility: Beating the 1/2 Barrier}

\author[1]{Xiaohui Bei}
\author[2]{Ke Ding}
\author[2]{Bo Li}
\author[2]{Fangxiao Wang}

\affil[1]{Nanyang Technological University \\ \texttt{xhbei@ntu.edu.sg}}
\affil[2]{The Hong Kong Polytechnic University \\ \texttt{\{cocoding77, comp-bo.li, fangxiao.wang\}@polyu.edu.hk}}

\date{} 

\maketitle

\begin{abstract}

Maximin share (MMS) stands out as a central notion in fair resource allocation. It is known that exact MMS fairness is not always attainable, especially when agents differ along two dimensions: their valuations and their perceptions of the divisibility of resources. The former case with heterogeneous valuations has been widely studied in the literature. The latter, referred to as \emph{subjective divisibility} by Bei et al., [Games Econ. Behav. 2025], remains much less explored.

We study MMS approximation under subjective divisibility. First, we prove that even in the unary valuation setting, where all items have equal value, the optimal approximation ratio is $\frac{2}{3}$. This result is somewhat surprising since in the objective setting, even when agents have heterogeneous valuations, the best possible approximation ratio is at least $\frac{7}{9}$ [Huang and Zhou, 2025]. We then address the general case with both valuation heterogeneity and subjective divisibility. Previous work shows the existence of a $\frac{1}{2}$-approximate MMS allocation. In this paper, we develop new algorithmic techniques that overcome the difficulties posed by subjective divisibility, and improve the approximation guarantee to $\frac{5}{9}$.
Finally, we complement this result with small-agent cases. For up to four agents, we give polynomial-time algorithms that compute $\frac{2}{3}$-approximate MMS fair allocations. These bounds are tight. 

Our results deepen the understanding of MMS fairness under heterogeneous valuations and subjective divisibility, and provide a new perspective for this emerging model.
\end{abstract}

\begin{titlepage}

\maketitle


\setcounter{tocdepth}{2} 
\tableofcontents

\end{titlepage}

\section{Introduction}

Fair division studies how to allocate resources among multiple agents with different valuations in a fair way.
Classic work focused on divisible resource allocation, exemplified by the cake-cutting problem, where fairness notions like proportionality (PROP)~\citep{steinhaus1949division} can always be satisfied and efficiently computed\footnote{
PROP requires that each agent receives a portion of resources that is valued by no less than $\frac{1}{n}$ fraction of her value for all resources, where $n$ is the number of agents.}.
In contrast, when goods are indivisible, PROP may become unattainable. This has motivated relaxations of the concept such as the \textit{maximin share guarantee} (MMS)~\citep{DBLP:conf/bqgt/Budish10}, which requires each agent receives at least the value she could secure by partitioning the goods herself and picking last.
Although exact MMS allocations cannot always be guaranteed~\citep{DBLP:journals/jacm/KurokawaPW18,DBLP:conf/wine/FeigeST21,DBLP:conf/soda/AkramiG24,DBLP:journals/corr/abs-2510-10423}, constant approximations are known, with the best current bound $\frac{7}{9}$ \citep{DBLP:journals/corr/abs-2511-13056}. 

While most existing studies assume goods are either fully divisible or indivisible, many real-world scenarios require allocating both types of resources at the same time.
For example, when dividing marital or heritage assets, there may be divisible goods such as money and indivisible ones like a house or a car. To formalize this, \citet{DBLP:journals/ai/BeiLLLL21} introduced the mixed goods model, which considers allocations involving both divisible and indivisible items. This model has since inspired a line of work on fairness in the mixed setting \citep{DBLP:journals/aamas/BeiLLW21,DBLP:conf/approx/BhaskarSV21,DBLP:conf/ijcai/LiLLT23,DBLP:journals/corr/abs-2410-06877,DBLP:conf/ijcai/Li0LW24}.

While the mixed goods model captures both divisible and indivisible resources, it assumes agents agree on which items are divisible. In practice, this is often not the case. A given resource may be viewed as divisible by one agent but indivisible by another, depending on how it is used. For example, in the canonical cake-cutting model, a cake may be divisible for someone who wants to consume it, but indivisible for someone who wishes to gift it as a whole. Similarly, a time slot for using a university venue may be indivisible to a professor teaching a lecture, but divisible to a student doing her course assignment; a bundle of building materials may be divisible to a reseller but indivisible to a builder who needs the entire quantity to complete a project. These cases reflect a more general phenomenon called \textit{subjective divisibility}, where the perceived divisibility of a resource varies across agents.
\citet{DBLP:journals/corr/abs-2310-00976} initiated the study of fair division under subjective divisibility. This setting generalizes the mixed goods model by allowing each agent to have her own view on whether a good is divisible. It introduces a new dimension of heterogeneity into the problem: beyond agents' differing valuations, there is now disagreement on the structure of the resources themselves.

This richer model raises new theoretical questions. When agents are identical both in valuation and divisibility view, MMS fairness coincides with egalitarian welfare and is always attainable. When heterogeneity arises only from valuations, the problem is well studied under the divisible, indivisible, and mixed settings. However, when agents differ in how they perceive the divisibility of items, much less is known. To our knowledge, the only prior result is due to \citet{DBLP:journals/corr/abs-2310-00976}, who showed that MMS fairness becomes significantly harder: even with identical values, exact MMS may not be achievable. They proposed a $\frac12$-approximate MMS allocation for general instances and a $\frac23$-approximation for three agents.

In this paper, we study how subjective divisibility -- alone and in combination with valuation heterogeneity -- affects the feasibility and approximation of MMS fairness. We present new techniques to design algorithms that improve the existing approximation bounds. Our results are summarized in the next subsection.

\subsection{Contribution Overview}

We study the problem of MMS fair allocation with subjective divisibility, where agents may have different opinions on whether an item is divisible. 
To understand the impact of subjective divisibility, we first examine a simple setting in which all items have equal value (i.e., unary valuations).
Surprisingly, even in this basic case, the optimal approximation ratio of MMS fairness is only $\frac{2}{3}$ (see Theorem \ref{the:una}).
The proof uses a matching-based argument and a careful classification of agents by the number of items they view as divisible.
This result stands in contrast to the classic indivisible setting with heterogeneous valuations,
where $\frac{7}{9}$-MMS is always achievable \citep{DBLP:journals/corr/abs-2511-13056}.

We then study the general setting with heterogeneous valuations and subjective divisibility.
In this setting, \citet{DBLP:journals/corr/abs-2310-00976} gave an algorithm that guarantees a $\frac{1}{2}$-approximate MMS allocation.
Improving this guarantee is challenging, as many standard techniques often fail. For example, in the indivisible setting, assigning a single item to an agent does not affect the MMS of others. 
Under subjective divisibility, however, removing an item can substantially decrease an agent's MMS value in the reduced instance. 
As a result, the commonly used large item reduction technique for indivisible goods becomes inapplicable.
Similarly, the identical ordering (IDO) reduction, which aligns all agents' value orderings and is widely used for indivisible items \citep{DBLP:journals/teco/BarmanK20}, cannot preserve agents' divisibility views and thus cannot be used in our setting.

To address these challenges, we develop new techniques that control how each allocation affects the remaining agents. 
We begin by identifying large and shareable items and allocating them to a carefully chosen subset of agents.
Agents who receive large items are easily satisfied, while those who receive shareable items -- denoted by $X$ -- will participate in the subsequent phases of the algorithm.
For the remaining agents, intuitively, the problem could be more manageable if few items are divisible (since we can leverage methods from the indivisible setting) or if few items have large values (since it is easier to balance agents' values using small items). 
Based on this insight, our algorithm characterizes a set of {\em critical} agents, denoted by $Z$, who have a number of medium-value items, with a sufficient proportion of these items being divisible. 
The remaining agents are denoted by $Y$.
For the critical agents in $Z$, we prove that it is always possible to allocate each of them two medium-value items without adversely affecting agents in $X$ and $Y$ -- a key technical step in our approach. 
For agents in $X$ and $Y$, we employ a two-step algorithm that treats all remaining items as indivisible, allowing us to apply techniques such as IDO reduction to the resulting sub-instance and thereby simplifying the analysis.
Another technical challenge lies in establishing bounds on the guaranteed value for agents in $Y$, for which we adopt different approaches to bound their values of the items allocated to the other agents.  
This layered approach enables us to navigate the complexities introduced by subjective divisibility.

Beyond the general case of an arbitrary number of agents, we also study instances with only three or four agents (See Theorems \ref{3.1} and \ref{4agent}).
For three agents, \citet{DBLP:journals/corr/abs-2310-00976} established the existence of a tight $\frac{2}{3}$-approximate MMS allocation.
We provide an alternative proof for this result that also yields a polynomial-time algorithm. 
Extending this approach to four agents requires significantly more technical care, but we show that the same $\frac{2}{3}$ approximation remains tight for four agents.
To achieve this, we introduce the notions of nice and temporary reductions, along with a check-and-withdraw strategy for revising temporary reductions. These tools enable us to characterize the optimal structures of MMS allocations across various scenarios.
Determining the optimal approximation ratio for five or more agents remains an intriguing open problem.

\subsection{Related Work}
\label{sec:intro:related}
Since the seminal work of \citet{steinhaus1949division}, the theory of fair allocation has been extensively studied. Early research focused on divisible resources, which lead to the well-known cake-cutting problem \citep{books/daglib/0017734,Procaccia_cake_16,LindnerR16}. In contrast, allocating indivisible items poses greater challenges, as ideal fairness notions such as envy-freeness \citep{GS58,Varian74} and proportionality \citep{steinhaus1949division} are often unachievable. This has motivated a large body of work on relaxed fairness concepts, including envy-freeness up to one or any item (EF1, EFX) \citep{lipton2004approximately,DBLP:conf/bqgt/Budish10,DBLP:journals/teco/CaragiannisKMPS19}, and maximin share (MMS) fairness \citep{DBLP:conf/bqgt/Budish10,DBLP:conf/soda/AkramiG24}, which are more robust in the presence of indivisibilities. 

In the context of MMS fairness, \citet{DBLP:journals/jacm/KurokawaPW18} gave the first $\frac{2}{3}$-approximation algorithm, which was later improved to $\frac{3}{4}$ by \citet{DBLP:conf/sigecom/GhodsiHSSY18} and \citet{DBLP:journals/ai/GargT21}, and more recently to $\frac{10}{13}$ by \citet{DBLP:journals/corr/abs-2510-10423} and $\frac{7}{9}$ by \citet{DBLP:journals/corr/abs-2511-13056}. Beyond the general case, \citet{DBLP:journals/corr/abs-2205-05363} proved the existence of $\frac{11}{12}$-approximate MMS allocations for three agents, and \citet{DBLP:conf/sigecom/GhodsiHSSY18,DBLP:conf/sigecom/BabaioffF22} showed that a $\frac{4}{5}$-approximation exists for four agents. On the negative side, \citet{DBLP:journals/jacm/KurokawaPW18} provided the first counterexample showing that exact MMS allocations may not exist. \citet{DBLP:conf/wine/FeigeST21} strengthened this by proving that no algorithm can guarantee better than a $\frac{39}{40}$-approximation for three agents, $\frac{67}{68}$ for four agents, and $1 - \frac{1}{n^4}$ for general $n$.

Mixed divisibility has emerged as an important direction in fair division and has attracted growing attention in recent years \citep{DBLP:conf/approx/BhaskarSV21,DBLP:journals/corr/abs-2410-06877,DBLP:journals/jair/LiuLSW24}. For envy-freeness, \citet{DBLP:journals/ai/BeiLLLL21} proved the existence of envy-freeness for mixed goods (EFM) in special cases, and approximate-EFM allocations for general instances. \citet{DBLP:journals/corr/abs-2302-13342} showed that a maximum Nash welfare (MNW) allocation satisfies EF1M for agents with additive valuations, and EFXM for those with binary valuations. \citet{DBLP:conf/ijcai/LiLLT23} further designed truthful and EFM mechanisms for several special cases. For maximin share fairness, \citet{DBLP:journals/aamas/BeiLLW21} studied how the presence of divisible goods affects the worst-case MMS approximation. We refer readers to \citet{DBLP:journals/jair/LiuLSW24} for a more comprehensive overview of fair allocation with mixed items.

\section{Preliminaries} \label{sec:prelim}

Let $[k]= \{1,\ldots,k\}$ for any integer $k\ge 1$.
A fair allocation instance consists of $n$ agents $N=\{a_1,\ldots,a_n\}$ and $m$ goods $M=\{g_1, \ldots, g_m\}$.
Each agent $a_i \in N$ has a non-negative utility $v_i(g)$ for each good $g \in M$, and each good is valued positively by at least one agent.

Goods may be divisible for some agents but not for others. For a subset $S\subseteq M$ of items and an agent $a_i\in N$, denote by $S^\text{IND}_i$ and $S^\text{DIV}_i$ the sets of indivisible and divisible items in $S$ for agent $a_i$, respectively. Thus, $M = M^\text{IND}_i \cup M^\text{DIV}_i$ for each $a_i$.

If a good $g$ is indivisible for $a_i$, she gains value $v_i(g)$ only if she receives the entire item; otherwise, she gets zero value. If $g$ is divisible for $a_i$, then her value is proportional to the fraction she receives, i.e., the divisible items are \emph{homogeneous}. To formalize this, each good $g \in M$ is modeled as an interval of length $1$. If agent $a_i$ receives a subinterval $\Tilde{g} \subseteq g$ of length $x \in [0,1]$, denoted by $\Tilde{g} = x \cdot g$, then her utility is
\[
v_i(\tilde{g})= v_i(x\cdot g) = \left\{ 
\begin{array}{ll}
     x \cdot v_i(g) & \textrm{ if $g \in M^\text{DIV}_i$}\\
     v_i(g) & \textrm{ if $g \in M^\text{IND}_i$ and $x = 1$}\\
     0 & \textrm{ if $g \in M^\text{IND}_i$ and $x < 1$}.
\end{array}
\right.
\] 

By $S\subseteq M$, we allow $S$ to contain fractional items. For any $\Tilde{g} \subseteq g$, we write $\Tilde{g} \in S$ to indicate that a fraction $\len(\Tilde{g})$ of item $g$ is included in $S$, and $S\setminus {\Tilde{g}}$ to denote that this fraction is removed from $S$.

For any item $g$ and any $0 \le \alpha \le v_i(g)$, define $\len_i(g,\alpha)$ to be the smallest fraction of $g$ that has value at least $\alpha$ for agent $a_i$:
\[
\len_i(g,\alpha) = \inf\{0 \le x \le \len(g) \mid v_i(x\cdot g) \ge \alpha\}.
\]

We assume that agents have additive utilities: for any $S\subseteq M$, $v_i(S) = \sum_{g \in S} v_i(g)$. Let $\fv=(v_1,\ldots,v_n)$, and denote an instance by $I = \langle N, M, \fv \rangle$.

For any integer $k > 1$, let $\Pi_k(S)$ denote the set of all $k$-partitions of $S$, i.e., tuples $(S_1,\ldots,S_k)$ such that $S_i \cap S_j = \emptyset$ for $i \ne j$ and $\bigcup_{i=1}^k S_i = S$. We allow each $S_i$ to contain fractional items. An \textit{allocation} $\A=(A_1,\ldots,A_n)$ is an $n$-partition of $M$, where $A_i$ is the bundle allocated to agent $a_i$.


We consider the fairness notion of \textit{maximin share (MMS)}. For an instance with subjective divisibility, the MMS of agent $a_i$ is defined as 
\[
\MMS_i=\mathop{\max}_{\A\in \Pi_n(M)}\mathop{\min}_{j \in [n]}v_i(A_j).
\]
A partition $\A = (S_1,\ldots, S_n) \in \Pi_n(M)$ is an \textit{MMS partition} for agent $a_i$ if $\min_{j\in [n]} v_i(S_j) = \MMS_i$.

\begin{definition}[$\alpha$-MMS]
Given $\alpha\in[0,1]$, an allocation $\A= (A_1, \ldots, A_n)$ is called \emph{$\alpha$-approximate MMS fair} (or \emph{$\alpha$-MMS}) if $v_i(A_i) \geq \alpha \cdot \MMS_i$ for every agent $a_i \in N$.
\end{definition}

The following example illustrates the notion of approximate MMS fairness under subjective divisibility. It is adapted from \citet{DBLP:journals/corr/abs-2310-00976}, who used it to establish an upper bound on the achievable approximation ratio.

\begin{example}\label{example:upper}[\citet{DBLP:journals/corr/abs-2310-00976}]
Suppose $N=\{a_1, a_2\}$ and $M=\{g_1, g_2, g_3\}$. The valuations and divisibility of the items are given in Table~\ref{tab:upper}. For agent $a_1$, her MMS partition evenly divides $g_3$ and combines each half with one of $g_1$ or $g_2$: $X_1 = \{g_1, \frac{1}{2}\cdot g_3\}$ and $X_2 = \{g_2, \frac{1}{2}\cdot g_3\}$, which gives $\MMS_1 = 1.5$. Similarly, $\MMS_2 = 1.5$ for $a_2$, where $g_2$ is divided.

However, due to subjectivity, if $g_2$ is divided, agent $a_1$ values it at $0$, and similarly if $g_3$ is divided, agent $a_2$ values it at $0$. Hence, any allocation of the three items must leave at least one agent with a bundle valued at most $1$. This gives an MMS approximation upper bound of $\frac{1}{1.5}=\frac23$.
\end{example}

\begin{table}[h]
\centering
\renewcommand{\arraystretch}{1.2}
\begin{tabular}{c|ccc}
\hline
Agent & Good $g_1$ & Good $g_2$ & Good $g_3$ \\
\hline
$a_1$ & $1$ (IND) & $1$ (IND) & $1$ (DIV) \\
$a_2$ & $1$ (IND) & $1$ (DIV) & $1$ (IND) \\
\hline
\end{tabular}
\medskip
\caption{Valuations and divisibility in Example~\ref{example:upper}. IND and DIV denote indivisible and divisible.}
\label{tab:upper}
\end{table}

\citet{DBLP:journals/corr/abs-2310-00976} also generalized this example to any number of agents and showed that no algorithm can guarantee better than a $\frac{2}{3}$-MMS allocation for any $n \ge 2$.

\section{Warm-Up: Subjective Divisibility with Unary Valuations}
\label{sec:unary}

To understand the impact of subjective divisibility in isolation, we begin with a simplified setting where all agents have identical (unary) valuations. Specifically, we assume $v_i(g_j)=1$ for every agent $a_i\in N$ and item $g_j\in M$. As shown in Example~\ref{example:upper}, no algorithm can guarantee better than a $\frac{2}{3}$-approximate MMS allocation in this setting.

We show that this bound is tight: a $\frac{2}{3}$-MMS allocation always exists and can be computed efficiently.

\begin{theorem}\label{the:una}
The optimal approximation ratio of MMS fairness with unary valuations is $\frac{2}{3}$.
\end{theorem}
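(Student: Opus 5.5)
The upper bound is already supplied by Example~\ref{example:upper} and its generalization by \citet{DBLP:journals/corr/abs-2310-00976}. What remains is to show that a $\frac{2}{3}$-MMS allocation always exists under unary valuations.

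The first step is to compute each agent's MMS. Let $m=|M|$ and $d_i=|M^\text{DIV}_i|$. Since every item has value $1$, $a_i$'s MMS partition gives the indivisible items out as evenly as possible and then uses the $d_i$ divisible units to water-fill the poorest bundles. If $m\ge n$, this yields
\[
\MMS_i=\min\Big\{\tfrac{m}{n},\ \max\big\{t:\ \textstyle\sum_{j}\max(0,t-c_j)\le d_i\big\}\Big\},
\]
where the $c_j$ are the most balanced integer loads of the $m-d_i$ indivisible items. In particular $\MMS_i\ge\lfloor m/n\rfloor$ and $\MMS_i\le m/n$. If $m<n$, then $\MMS_i=\min(m/n,\ d_i/(n-m+d_i))$, because the indivisible items alone fill at most $m-d_i$ bundles and the divisible units must cover the remaining $n-m+d_i$ bundles together with their own.

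The second step handles the case $m\ge n$. Let $q=\lfloor m/n\rfloor\ge 1$. If $q\ge 2$, give every agent $q$ whole items. This suffices because $q\ge\frac{2}{3}\cdot\frac{m}{n}$ whenever $m/n<q+1$ and $q\ge2$. If $q=1$, each agent needs value $\frac{2}{3}\MMS_i\le\frac{2}{3}\cdot\frac{m}{n}<\frac{4}{3}$. If $\frac{2}{3}\MMS_i\le 1$ for every agent, one whole item per agent is enough. Otherwise $\MMS_i>\frac{3}{2}$, which is impossible since $\MMS_i\le m/n<2$ only allows $\MMS_i\in(\frac{3}{2},2)$; such agents therefore need slightly more than one item. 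The plan is to give each agent one whole item and then distribute the leftover $m-n$ items. For the demanding agents, the leftover items are assigned by a matching between leftover items and agents that view them as divisible, splitting each matched item among several agents. The demand can be met because $\MMS_i>1$ requires $d_i$ to be large relative to $n$.

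The third step handles the case $m<n$, which I expect to be the main obstacle. Classify agents by $d_i$. An agent with $d_i=0$ has $\MMS_i=0$ and can be ignored. For the others, $\MMS_i\approx d_i/(n-m+d_i)$, so an agent viewing more items as divisible has a higher target. I would sort agents by $d_i$ in decreasing order and run a matching-based assignment on a bipartite graph between items and agents, with an edge whenever the item is divisible for the agent. Each divisible item can be cut into pieces that are simultaneously useful to every agent viewing it as divisible, so the question reduces to a fractional $b$-matching or flow problem. In this flow, agent $a_i$ demands $\frac{2}{3}\MMS_i$, and item $g$ supplies $1$ but may only be sent along its edges.

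Feasibility of the flow will follow from Hall's condition. For any set $T$ of agents, the items divisible for at least one agent in $T$ must number at least $\frac{2}{3}\sum_{i\in T}\MMS_i$. Let $d=\max_{i\in T}d_i$. The union of these items contains at least $d$ items. I then need
\[
\tfrac{2}{3}\,|T|\,\frac{d}{n-m+d}\le d,
\]
which is equivalent to $|T|\le\frac{3}{2}(n-m+d)$. This bound is not automatic, because $|T|$ can be as large as $n$. The slack must therefore come from the indivisible items: an item indivisible for a whole group can be handed to one agent in the group as a whole unit, worth $1\ge\frac{2}{3}\MMS_i$ to her. Those agents are removed first, via a maximum matching that assigns whole items to agents. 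The Hall argument is then run only on the unmatched agents, using the fact that the matching is maximum (a K\"onig-type deficiency argument) to bound how many agents remain relative to the items left. Closing exactly the gap between $\frac{2}{3}$ and this bound is the delicate calculation, and the same structure should also settle the boundary subcase $q=1$ in the second step.
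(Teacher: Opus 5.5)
Your upper bound, MMS formulas and easy cases ($\lfloor m/n\rfloor\ge 2$, or $\lfloor m/n\rfloor=1$ with every $\MMS_i\le\frac32$) are correct and match the paper. The gap is that the two cases carrying all the difficulty are only sketched, and you explicitly leave the decisive step open.

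\textbf{The case $\lfloor m/n\rfloor=1$.} Write $m=n+b$ with $0<b<n$; an agent is critical ($\MMS_i>\frac32$) iff $d_i>n-b$. Your plan of giving everyone one whole item and then splitting the $b$ leftovers fails as stated. Take $n=4$, $m=7$, and let every agent view only $g_1,g_2$ as divisible. Then all four agents have $\MMS_i=\frac53$ and need $\frac{10}{9}>1$. If $g_1,\dots,g_4$ are the whole items handed out, the three leftovers are indivisible for everyone and cannot serve four agents. The shared items must be fixed first and the whole items drawn from the rest. You then have to prove that the item count closes when $|N_c|>b$.

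The paper does this with a maximum 2-to-1 matching of critical agents to commonly divisible items $M_s$. Each matched agent gets half of her item plus one whole item from $M\setminus M_s$, and each unmatched critical agent gets two whole items, so $n+|N_c|-|M_s|$ items are used. Feasibility is Lemma~\ref{lem:unary:case2.3}, $|M_s|\ge|N_c|-b$, proved by contradiction through an augmenting structure and an edge count that uses $d_i\ge n-b+1$. Your sentence ``the demand can be met because $d_i$ is large relative to $n$'' is exactly this lemma, unproved.

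\textbf{The case $m<n$.} Your Hall condition fails, as you note, and the proposed patch is not well defined. Under unary valuations every agent values every entire item at $1$, so a ``maximum matching assigning whole items to agents'' imposes no constraint. What matters is which items are consumed whole versus cut, and how each choice affects the other agents' divisibility views.

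The missing idea is a sequential argument in \emph{increasing} order of $\MMS_i$, the opposite of your decreasing-$d_i$ order. It maintains the following invariant for each remaining agent $a_j$: the number of entire remaining items indivisible for $a_j$, plus $\lfloor v_j(\text{remaining items divisible for }a_j)/\MMS_j\rfloor$, is at least the number of remaining agents. The current agent $a_i$ does one of three things:
\begin{itemize}
\item takes an $\MMS_i$-fraction of an entire item divisible for her, if one exists;
\item otherwise takes $\MMS_i$ worth of already-cut pieces divisible for her, if available;
\item otherwise is deferred and later receives an entire item.
\end{itemize}
Agents not yet processed satisfy $\MMS_i\le\MMS_j$, and deferred agents already have zero divisible slack. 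Hence each step lowers every remaining agent's count by at most one, and deferred agents are guaranteed enough entire items. This is why the order matters: serving a high-$\MMS$ agent first can strip more than $\MMS_j$ of a lower agent's divisible value in one step. The argument yields an exact MMS allocation when $m<n$, so the $\frac23$ slack you were trying to exploit is not even needed there.
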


\begin{proof}
Let $a=\lfloor m/n \rfloor$ and $b=m \bmod n$, so that $m=a\cdot n + b$ with $0\le b<n$. Let $\Div_i=|M_i^\text{DIV}|$ denote the number of divisible items in $M$ for agent $a_i$. Then the MMS value of $a_i$ is:

\[
\MMS_i =
\begin{cases}
a + \frac{\Div_i}{\Div_i + n - b} & \text{if } \Div_i \le b, \\
a + \frac{b}{n} & \text{if } \Div_i > b.
\end{cases}
\]

Note that when $\Div_i > b$, changing any $\Div_i - b$ divisible items to indivisible does not decrease $\MMS_i$, since $\MMS_i = a + \frac{b}{n}$ remains unchanged. This transformation makes the allocation problem strictly harder, as divisible items can be treated as indivisible, but not vice versa. Therefore, for the purpose of analyzing worst-case approximation guarantees, it suffices to assume $\Div_i \leq b$ for all agents.



\medskip

We prove the theorem by analyzing three cases: Case 1 where $a=0$, Case 2 where $a=1$, and Case 3 where $a\ge 2$. 

\medskip

\textbf{Case 1:} $a=0$.
The problem is trivial if $b=0$. 
Thus, it is assumed that $0<b<n$ and $0<\MMS_i<1$ for every agent. 
For this case, we show that we can compute an MMS allocation.


The outline of the algorithm is as follows.
We initialize two sets of agents $N_c=N$ and $N_r=\emptyset$.  
Let $a_i \in N_c$ be the agent whose MMS is the smallest. 
If there is an entire item $g$ that is divisible for agent $a_i$, allocate $\len(g, \MMS_i)$ fraction of this item to $a_i$.
Otherwise, we check if the total value of all current fractional items in $M^{DIV}_i$ is at least $\MMS_i$; if so, allocate a minimum set of these items with value $\MMS_i$ to $a_i$.
In either case, agent $a_i$ is satisfied and exits the algorithm with her allocation.
If both attempts fail, we put $a_i$ in $N_r$ and move to the next round.
We repeat the above procedures until $N_c$ is empty. 
Then we give each agent in $N_r$ an entire item and allocate the remaining items arbitrarily.
The algorithm is formally described in Algorithm \ref{alg:una:n}.

\begin{algorithm}
\caption{Computing an MMS allocation for unary valuations when $m<n$.}\label{alg:una:n}
\KwIn{A unary instance $I = \langle{N,M, \fv}\rangle$ where $v_i(g) = 1$ for all $a_i\in N, g\in M$ and $m<n$.} 

\KwOut{An MMS allocation $(A_1,\ldots,A_n)$.}
 
Initialize $A_i \leftarrow \emptyset$, for all $a_i\in N$ and $N_c\leftarrow N$, $N_r\leftarrow\emptyset$.\\

\While{$N_c \neq \emptyset$}{
\label{step:alg:unary:case1}
Let $a_i \in \arg\min_{a_j\in N_c}\MMS_j$ be an agent who has the minimum MMS value.

\eIf{there exists an entire item $g\in M$ that is divisible for $a_i$}{
 Set $A_i\leftarrow \{\MMS_i\cdot g\}$, $M \leftarrow M \setminus A_i$, and $N_c\leftarrow N_c\setminus \{a_i\}$. \label{step:unary:case1}\\
}{
\eIf{ $v_i(M_i^\text{DIV})\geq \MMS_i$
}{
\While{$v_i(A_i)<\MMS_i$}{
\label{step:unary:case2}

 Let $g\in M_i^\text{DIV}$ be the remaining item with the smallest index that is divisible for agent $a_i$. 
 

 $A_i\leftarrow A_i \cup \{\min\{\len(g),\MMS_i-v_i(A_i)\}\cdot g\}$. \\
 
}
 Set $M \leftarrow M \setminus A_i$ and $N_c \leftarrow N_c\setminus \{a_i\}$.\\
}{
  Set $N_c \leftarrow N_c\setminus \{a_i\}$ and $N_r\leftarrow N_r\cup \{a_i\}$.  \\
}
}}

\For{$a_i\in N_r$}{
Choose an entire item $g\in M$ and set $A_i\leftarrow \{g\}$, $M \leftarrow M\setminus A_i$ and $N_r \leftarrow N_r \setminus a_i$. \label{step:unary:case3}\\
}
 Choose one agent $a_i\in N$ arbitrarily and set $A_i\leftarrow A_i\cup M$. 
 
\textbf{return} $(A_1, \ldots, A_n)$
\end{algorithm}


To prove the above algorithm always returns an MMS allocation, it suffices to show that after the removal of every agent, the remaining agents' MMS values in the reduced instance do not decrease.
Note that for an agent $a_i\in N$ removed and a set of items $S\subseteq M$ assigned to her, 
after the reduction, agent $a_j$'s MMS does not decrease if
\begin{align}
\label{eq:unary:case1:1}
    |(M\setminus S)_j^\text{IND}|+\lfloor \frac{v_j((M\setminus S)_j^\text{DIV})}{\MMS_j}\rfloor\geq |N|-1.
\end{align}
This is because in the reduced instance, each indivisible item can be entirely put into one bundle and the divisible items can be evenly partitioned so that each bundle has a value of at least $\MMS_j$.

We next prove by induction that Inequality \eqref{eq:unary:case1:1} holds after the removal of every agent. 
Before the execution of the algorithm, $|M_j^\text{IND}|+\lfloor \frac{v_j(M_j^\text{DIV})}{\MMS_j}\rfloor\geq |N|$ holds for every agent $a_j\in N$ by the definition of $\MMS_j$.
Suppose Inequality \eqref{eq:unary:case1:1} holds when the reduced instance contains $1< k < |N|$ agents. 
Let $N_c$ and $N_r$ (where $|N_c|+|N_r| = k$) be the sets of agents defined in Algorithm \ref{alg:una:n} at this moment and $a_i \in N_c$ has the smallest MMS value.
Let $a_j \in N\setminus\{a_i\}$ be any other agent.


Firstly, if there exists an entire item $g \in M$ which is divisible for agent $a_i$ (i.e., Step \ref{step:unary:case1}), 
we allocate $\MMS_i \cdot g$ to agent $a_i$, denoted as $A_i$.
If $g$ is divisible for agent $a_j \in N_c$, since $\MMS_i\leq \MMS_j$, 
\[
|(M\setminus A_i)_j^\text{IND}|+\lfloor\frac{v_j((M\setminus A_i)_j^\text{DIV})}{\MMS_j}\rfloor=|M_j^\text{IND}|+\lfloor\frac{v_j(M_j^\text{DIV})-v_j(A_i)}{\MMS_j}\rfloor\geq k-1.
\]
If $g$ is indivisible for $a_j$, 
\[
|(M\setminus A_i)_j^\text{IND}|+\lfloor\frac{v_j((M\setminus A_i)_j^\text{DIV})}{\MMS_j}\rfloor=|M_j^\text{IND}|-1+\lfloor\frac{v_j(M_j^\text{DIV})}{\MMS_j}\rfloor\geq k-1.
\]
For both cases, $\MMS_j$ is not decreased.

Secondly, if there is no entire divisible item in $M$ for agent $a_i$ and $v_i(M_i^\text{DIV})\geq \MMS_i$, we allocate a subset of items $A_i \subseteq M_i^\text{DIV}$ to agent $a_i$ such that $v_i(A_i)=\MMS_i$ (i.e., Step \ref{step:unary:case2}).
Since we only allocate the items which are not entire, the number of entire indivisible items remains the same after $a_i$ leaves.
Additionally, since $v_j(A_i\cap \cM_j^\text{DIV})\leq v_j(A_i)=v_i(A_i)= \MMS_i\leq \MMS_j$,
\[
|(M\setminus A_i)_j^\text{IND}|+\lfloor\frac{v_j(M\setminus A_i)_j^\text{DIV})}{\MMS_j}\rfloor\geq |M_j^\text{IND}|+ \lfloor\frac{v_j(M_j^\text{DIV})-\MMS_j}{\MMS_j}\rfloor\geq  k-1,
\]
which is as desired. 

Finally, if $a_i$ is not removed in the previous two cases, it will be added to $N_r$ for later allocation. 
Thus when $N_c = \emptyset$ and the algorithm reaches Step \ref{step:unary:case3}, for any $a_i \in N_r$, $v_i(M_i^\text{DIV})<\MMS_i$.
Therefore, $|M_i^\text{IND}|+\lfloor\frac{v_i(M_i^\text{DIV})}{\MMS_j}\rfloor\geq k$, which means $|M_i^\text{IND}| \ge k$, i.e., there are at least $k$ entire indivisible items in $M$. 
Allocating one entire item to every agent in $N_r$ and the rest items (if any) to an arbitrary agent ensures MMS, which proves this case.

\paragraph{Case 2: $a=1$.}
For any agent $a_i\in N$, we have $\MMS_i=1+\frac{\Div_i}{\Div_i+(n-b)}$, and hence $1\leq \MMS_i<2$. The main challenge is to satisfy agents for whom $\Div_i > n-b$, as each such agent must receive more than one item. Let $N_c=\{a_i \mid \Div_i>n-b\}$ denote the set of these \emph{critical agents}.

The algorithm for this case is summarized in Algorithm~\ref{alg:una:2n}.

\begin{algorithm}
\caption{Computing a $\frac{2}{3}$-MMS allocation for unary valuations when $n \le m < 2n$}\label{alg:una:2n}
\KwIn{An instance $I = \langle N, M, \fv \rangle$ where $v_i(g) = 1$ for all $a_i \in N$, $g \in M$, and $m = n + b < 2n$.}
\KwOut{A $\frac{2}{3}$-MMS allocation $(A_1,\ldots,A_n)$.}

Let $N_c = \{a_i \mid \Div_i > n - b\}$.

\If{$b \le \frac{n}{2}$}{
    Allocate items so that each agent receives at least one entire item.
}

\ElseIf{$b > \frac{n}{2}$ and $|N_c| \le b$}{
    Allocate two entire items to each agent in $N_c$ and at least one entire item to each agent in $N \setminus N_c$.
}

\ElseIf{$|N_c| > b > \frac{n}{2}$}{
    Compute a maximum 2-to-1 matching $(N_s, M_s)$ between $N_c$ and $M$, where an edge connects $a \in N_c$ and $g \in M$ if $g$ is divisible for $a$.

    Allocate as follows:
    \begin{itemize}
        \setlength{\itemsep}{-2pt}
        \item Each agent in $N_c \setminus N_s$ receives two entire items from $M \setminus M_s$.
        \item Each agent in $N_s$ receives half of her matched item in $M_s$ and one entire item from $M \setminus M_s$.
        \item Each agent in $N \setminus N_c$ receives one entire item from $M \setminus M_s$.
    \end{itemize}
    
    Allocate any remaining items arbitrarily.
}

\Return{$(A_1, \ldots, A_n)$}
\end{algorithm}

We distinguish three sub-cases.

\begin{itemize}[leftmargin=*]
    \item \textbf{Sub-Case 2.1:} $b \le \frac{n}{2}$.

    Since $\Div_i \le b$, we have $\MMS_i \le 1 + \frac{b}{n} \le \frac{3}{2}$ for all agents. It suffices to allocate each agent a single entire item to ensure the $\frac{2}{3}$ approximation. Any such allocation satisfies the required guarantee.

    \item \textbf{Sub-Case 2.2:} $b > \frac{n}{2}$ and $|N_c| \le b$.

    Recall that $\MMS_i = 1 + \frac{\Div_i}{\Div_i + n - b}$, and $\MMS_i > \frac{3}{2}$ if and only if $\Div_i > n - b$. Thus, only agents in $N_c$ have $\MMS_i > \frac{3}{2}$. If $|N_c| \le b$, then we have enough items to allocate two entire items to each agent in $N_c$ and one entire item to each agent in $N \setminus N_c$. Any remaining items can be allocated arbitrarily. This yields a $\frac{2}{3}$-MMS allocation.

    
    
    \item \textbf{Sub-Case 2.3:} $|N_c| > b > \frac{n}{2}$.
    
    This is the most challenging case, where there are not enough items to give two entire items to every critical agent. Our strategy is to identify a subset $N_s \subseteq N_c$ of agents who can each receive half of a divisible item and one entire item, while agents in $N_c \setminus N_s$ receive two entire items, and agents in $N \setminus N_c$ receive one entire item. This structure ensures a $\frac{2}{3}$-MMS allocation.
    
    We construct a bipartite graph $G = (N_c, M; E)$ where an edge connects $a_i \in N_c$ and $g \in M$ if $g$ is divisible for $a_i$. Each agent $a_i \in N_c$ has degree $\Div_i \ge n - b$. We consider triples $(a_i, a_j, g)$ whenever $g$ is divisible for both $a_i$ and $a_j$. We compute a maximum 2-to-1 matching $\Psi$ that matches agents $N_s \subseteq N_c$ and items $M_s \subseteq M$. 
    For simplicity, we omit the term ``2-to-1'' when referring to matchings. For each item $g$, let $\Psi(g)$ denote the pair of agents matched to it.
    
    \medskip
    
    We next use the following key lemma, which is proved after the main proof.
    
    \begin{lemma}\label{lem:unary:case2.3}
        $|M_s| \ge |N_c| - b$.
    \end{lemma}

    As we mentioned at the beginning of this sub-case, every pair of matched agents in $N_s$ evenly shares the matched item and receives one additional entire item. Each agent in $N_c \setminus N_s$ receives two entire items, and each agent in $N \setminus N_c$ receives one entire item. Thus, the total number of allocated items is
    \begin{align*}
        \frac{3}{2}\cdot |N_s| + 2\cdot (|N_c| - |N_s|) + (n - |N_c|) = |N_c| - |M_s| + n.
    \end{align*}
    By Lemma~\ref{lem:unary:case2.3}, $|M_s| \ge |N_c| - b$, so the total number of allocated items is at most $n + b = m$. Any remaining items can be allocated arbitrarily without affecting the approximation guarantee. 
\end{itemize}
Combining with the proofs of Cases 1 and 3 in the appendix, we have proved the theorem.
\end{proof}

\medskip

Finally, we provide the proof of Lemma~\ref{lem:unary:case2.3}.

\begin{proof}[Proof of Lemma~\ref{lem:unary:case2.3}]
We may assume without loss of generality that $\Div_i = n - b + 1$ for all agents $a_i \in N_c$. This assumption only reduces the number of divisible items and therefore weakens the matching potential. Therefore it is a valid worst-case reduction.

Suppose, for contradiction, that $|M_s| < |N_c| - b$. We partition the matched agents and items as follows. Initially set $N_1 = N_s$, $M_1 = M_s$, $N_2 = \emptyset$, $M_2 = \emptyset$. Let $N_3 = N_c \setminus N_s$ and $M_3 = M \setminus M_s$.
We iteratively perform the following operation: if there exists an item $g \in M_1$ that is divisible for two agents in $N_2 \cup N_3$, move both matched agents $\Psi(g) = \{a_1, a_2\}$ from $N_1$ to $N_2$ and move $g$ from $M_1$ to $M_2$.
Let $N_1^i, N_2^i, M_1^i, M_2^i$ be the state after $i$ such moves. We now make the following three claims.


\begin{claim}\label{cla:2}
For any $i \ge 1$, the items $\{g^1,\ldots,g^i\}$ can be matched with $2i$ agents in $(N_2^i \cup N_3) \setminus S$, where $S \subseteq N_2^i \cup N_3$ is any set of two agents.
\end{claim}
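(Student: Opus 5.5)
The claim can be proved by induction on $i$. For each move $k$, write $g^k$ for the item moved at step $k$ and $P_k=\Psi(g^k)$ for its matched pair, which moves from $N_1$ into $N_2^k$. By the rule that triggers the move, $g^k$ is also divisible for some pair $W_k\subseteq N_2^{k-1}\cup N_3$, its \emph{witness pair}. The whole argument uses one fact: $g^k$ is divisible for every agent in $P_k\cup W_k$. The sets $P_k$ and $W_k$ are disjoint, because $P_k\subseteq N_1^{k-1}$ while $W_k\subseteq N_2^{k-1}\cup N_3$. We also have
\[
N_2^{k}\cup N_3=(N_2^{k-1}\cup N_3)\cup P_k .
\]
We interpret ``matched with $2i$ agents'' as a 2-to-1 matching in the graph $G$: each item is assigned two distinct agents for whom it is divisible, and all $2i$ agents are distinct.

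\textbf{Base case $i=1$.} Here $N_2^0=\emptyset$, so $W_1\subseteq N_3$ and $P_1\subseteq N_2^1$. Thus $P_1\cup W_1$ consists of four distinct agents in $N_2^1\cup N_3$, and $g^1$ is divisible for all of them. Any set $S$ of two agents removes at most two of these four. The remaining two (or more) agents can be matched to $g^1$, which proves the base case.

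\textbf{Inductive step.} Assume the claim for $i$, and let $S\subseteq N_2^{i+1}\cup N_3$ with $|S|=2$. Put $s=|S\cap P_{i+1}|\in\{0,1,2\}$.
\begin{itemize}
\item We match $g^{i+1}$ to the $2-s$ agents of $P_{i+1}\setminus S$, together with $s$ agents $T\subseteq W_{i+1}\setminus S$.
\item Such a $T$ exists. Since $W_{i+1}\cap P_{i+1}=\emptyset$, we get $W_{i+1}\cap S\subseteq S\setminus P_{i+1}$, which has size $2-s$. Hence $|W_{i+1}\setminus S|\ge s$.
\item Next define the forbidden set $S'=(S\setminus P_{i+1})\cup T$. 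It lies in $N_2^i\cup N_3$, and has size at most $(2-s)+s=2$. Pad it arbitrarily to size exactly two if needed.
\item Apply the induction hypothesis with $S'$. This matches $g^1,\ldots,g^i$ to $2i$ agents of $(N_2^i\cup N_3)\setminus S'$.
\end{itemize}

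\textbf{Verification.} The two parts of the matching use disjoint agents. The inductive matching avoids $T$ because $T\subseteq S'$. It also avoids $P_{i+1}$, since $P_{i+1}\subseteq N_1^i$ is disjoint from $N_2^i\cup N_3$. The combined matching avoids $S$: the first part avoids $S$ by construction, and the second part avoids $S\setminus P_{i+1}\subseteq S'$ and lies outside $P_{i+1}$. Finally, every agent used lies in $N_2^{i+1}\cup N_3$. This gives the required matching of $g^1,\ldots,g^{i+1}$ with $2(i+1)$ agents.

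The main point to get right is the bookkeeping in the inductive step. The agents removed from the witness pair must be charged against the budget of two forbidden agents, and this works exactly because $|S\cap P_{i+1}|$ agents of $S$ are absorbed by $P_{i+1}$, which the induction hypothesis never uses.
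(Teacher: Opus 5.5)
Your proof is correct and is essentially the paper's induction. The paper's three cases $S=\{a_1^k,a_2^k\}$, $|S\cap\{a_1^k,a_2^k\}|=1$ and $S\cap\{a_1^k,a_2^k\}=\emptyset$ are your $s=2,1,0$: $g^k$ goes to the witness pair, to one matched agent plus one witness, or to the matched pair, and the used witnesses are charged to the two-agent forbidden set of the induction hypothesis. In the case $s=1$ you match $g^{i+1}$ with $P_{i+1}\setminus S$, which is the correct reading of the paper's step where its text writes $S\cap\{a_1^k,a_2^k\}$; your padding of $S'$ is never needed, since $|S'|=2$ exactly.
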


\begin{proof}[Proof of Claim \ref{cla:2}]
    We prove the claim by induction. 
    When $i =1$, item $g^1$ is divisible for four agents, namely, $a_1^1,a_2^1$, and two agents in $N_3$ who think $g^1$ is divisible.  
    Thus the removal of any two agents in $N_2^1\cup N_3$ still admits a match between two remaining agents and $g^1$.

    Assume the claim is true for $i = 1,\ldots, k-1$ and we consider the $k$-th round. 
    Let $b_1$ and $b_2$ be the two agents in $N_2^{k-1}\cup N_3$ for whom $g^k$ is divisible. 
    For any $S \subseteq N_2^k \cup N_3$ with $|S| = 2$, we construct a matching between $\{g^1,\ldots, g^k\}$ and $(N_2^k \cup N_3) \setminus S$ with cardinality $k$ as follows.

    If $S = \{a_1^k, a_2^k\}$, matching $g^k$ with $\{b_1,b_2\}$, together with a matching of cardinality $k-1$ between $\{g^1,\ldots, g^{k-1}\}$ and $(N_2^{k-1} \cup N_3) \setminus \{b_1,b_2\}$ gives the desired matching. 

    If $|S \cap \{a_1^k, a_2^k\} |= 1$, match $g^k$ with $S \cap \{a_1^k, a_2^k\}$ and one agent in $\{b_1,b_2\}\setminus S$.
    Note that exactly two items have been removed from $N_2^{k-1} \cup N_3$: one in $S$ and one matched to $g^k$. 
    Thus $\{g^1,\ldots, g^{k-1}\}$ and the remaining agents in $N_2^{k-1} \cup N_3$ admits a matching with cardinality $k-1$.

    If $S \cap \{a_1^k, a_2^k\} = \emptyset$,  match $g^k$ with $\{a_1^k, a_2^k\}$ together with a matching of cardinality $k-1$ between $\{g^1,\ldots, g^{k-1}\}$ and $(N_2^{k-1} \cup N_3) \setminus S$ gives the desired matching. 

    Combining the above three cases proves the claim. 
\end{proof}

\begin{claim}\label{divide}
No item in $M_1 \cup M_3$ is divisible to more than one agent in $N_2 \cup N_3$.
\end{claim}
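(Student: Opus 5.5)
We argue by contradiction using the maximality of the 2-to-1 matching $\Psi$, with the sets $N_1,N_2,N_3,M_1,M_2,M_3$ taken in their final state, i.e., once no further move is possible. Suppose some item $g \in M_1 \cup M_3$ is divisible for two distinct agents $c_1,c_2 \in N_2 \cup N_3$. We split into two cases by where $g$ lies.

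\emph{Case $g \in M_1$.} Then $g$ is divisible for two agents of $N_2\cup N_3$, which is exactly the trigger of the iterative operation. The operation would therefore apply again, moving $\Psi(g)$ to $N_2$ and $g$ to $M_2$. This contradicts the assumption that the process has terminated.

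\emph{Case $g \in M_3$.} Here $g$ is unmatched, and we build a strictly larger matching. First, match $g$ with $\{c_1,c_2\}$. Next, Claim~\ref{cla:2} with $S=\{c_1,c_2\}$ re-matches the moved items $g^1,\ldots,g^i$ (that is, all of $M_2$) to $2i$ agents of $(N_2\cup N_3)\setminus\{c_1,c_2\}$. Finally, keep the original $\Psi$-pairs on $M_1$; these use only agents of $N_1$, which are disjoint from $N_2\cup N_3$.

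The result is a valid 2-to-1 matching. Its edges are correct: the $c_j$ see $g$ as divisible, and Claim~\ref{cla:2} supplies the edges for $M_2$. It covers $|M_1|+|M_2|+1 = |M_s|+1$ items, contradicting the maximality of $\Psi$. If $i=0$ (nothing was moved), the same argument works with $N_2=\emptyset$ and no re-matching is needed.

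The main care point is the second case. We must check that the re-matching of $M_2$ from Claim~\ref{cla:2} avoids both $c_1,c_2$, which the choice $S=\{c_1,c_2\}$ guarantees. We must also check that it stays disjoint from $N_1$, which holds because Claim~\ref{cla:2} only uses agents in $N_2\cup N_3$. Once these two points hold, the counting is immediate, and both cases yield a contradiction, proving the claim.
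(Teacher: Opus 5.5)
Your proof is correct and matches the paper's argument essentially step for step. The $M_1$ case follows from the termination of the moving process. The $M_3$ case combines $(c_1,c_2,g)$, the re-matching of $M_2$ given by Claim~\ref{cla:2} with $S=\{c_1,c_2\}$, and the original $\Psi$-pairs on $M_1$ into a larger matching, contradicting the maximality of $\Psi$; your $i=0$ remark corresponds to the paper's $N_2=\emptyset$ subcase.
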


\begin{proof}[Proof of Claim \ref{divide}]
    It is straightforward that no item in $M_1$ is divisible for more than one agent in $N_2\cup N_3$ since otherwise more agents will be moved from $N_1$ to $N_2$.  
    For any item $g \in M_3$, we prove the claim by contradiction. 
    Suppose $g$ is divisible to two agents $a,a'\in N_2\cup N_3$.
    If $N_2 = \emptyset$, which means $g$ is divisible to two agents in $N_3$, then this is a contradiction with $\Psi$ being maximum.
    If $N_2 \neq \emptyset$, by Claim \ref{cla:2}, all items in $M_2$ can be matched to $(N_2\cup N_3)\setminus\{a,a'\}$.
    Then, combining this matching with the matching between $M_1$ and $N_1$ and $(a,a',g)$, we find a larger matching than $\Psi$, which is also a contradiction with $\Psi$ being maximum.
\end{proof}

\begin{figure}
\centering
\includegraphics[width=0.45\textwidth]{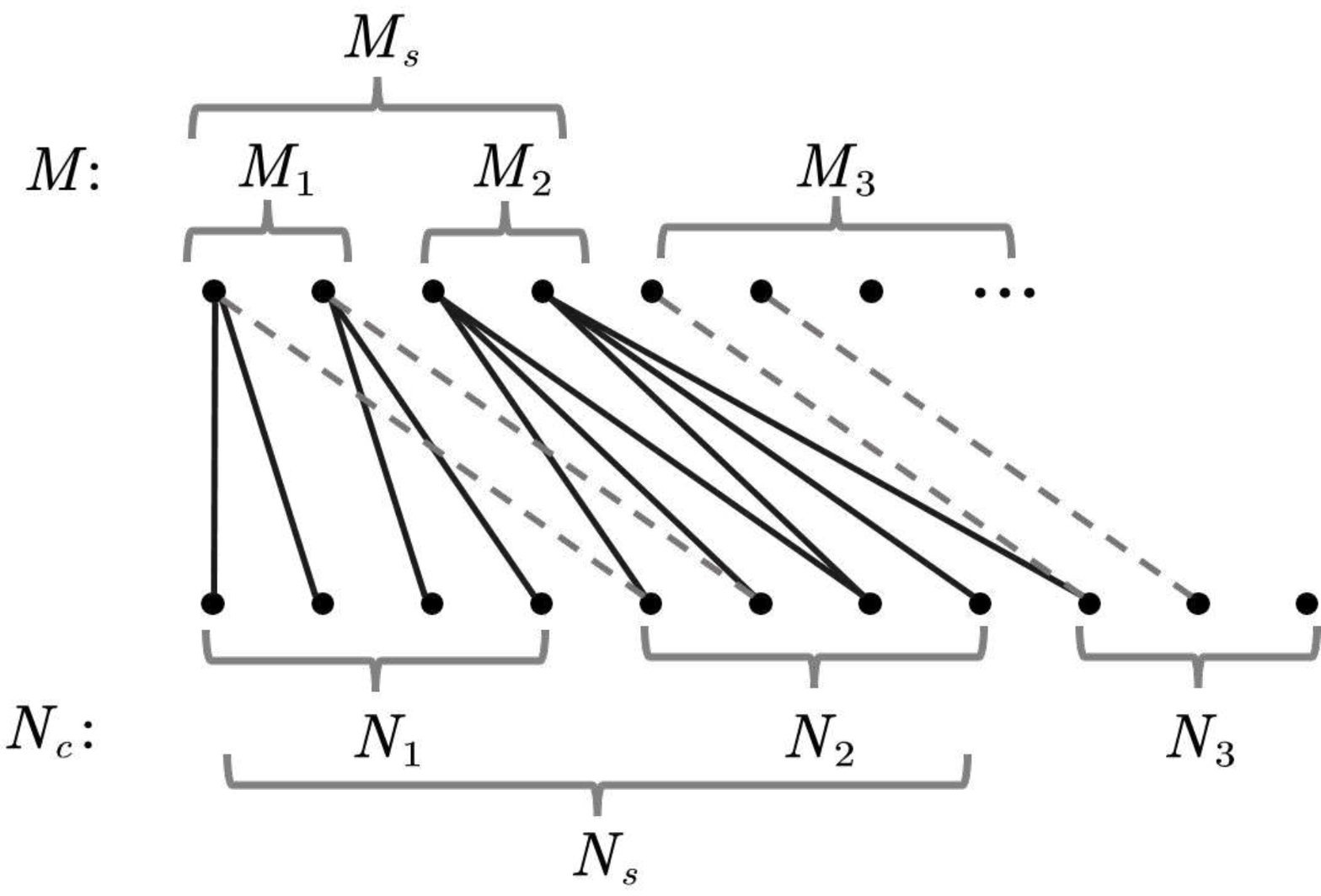}
\caption{The structure of $N_c$ and $M$ in Case 2 of Theorem \ref{the:una}}
\label{fig:unary:case2:division}
\end{figure}

By Claim~\ref{divide}, the total number of edges between $N_2 \cup N_3$ and $M_1 \cup M_3$ is at most $|M_1| + |M_3| = n + b - |M_2|$. Additionally, there are at most $|M_2| \cdot |N_2 \cup N_3|$ edges from $M_2$.

Combining these, the maximum number of edges between $N_2 \cup N_3$ and $M$ is bounded by the solution to the following integer program:
\begin{align}
\max & \quad |M_2| \cdot |N_2 \cup N_3| + n + b - |M_2| \notag \\
\text{s.t.} & \quad 0 < |M_2| \le |M_s|, \notag \\
& \quad |N_2 \cup N_3| \le |N_c|, \notag \\
& \quad \frac{n}{2} < b < |N_c| \le n, \notag \\
& \quad |M_s| < |N_c| - b. \notag
\end{align}

\begin{claim}\label{claim:unary:case2:3}
The optimal value of the above integer program is strictly less than $|N_2 \cup N_3| \cdot (n - b + 1)$.
\end{claim}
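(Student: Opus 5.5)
The claim should be read with the variables tied to the structure built in the proof of Lemma~\ref{lem:unary:case2.3}, not as a free maximisation. The plan is to reduce it to an elementary inequality in a few parameters. Write $x=|M_2|$, $y=|N_2\cup N_3|$, $c=|N_c|$, and keep $n,b$ as in Sub-Case 2.3.

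\textbf{Step 1 (rewrite the target).} Set $d=n-b+1-x$. Since $n+b-x=d+2b-1$, the desired strict inequality
\[
xy+n+b-x<y(n-b+1)
\]
is equivalent to $n+b-x<y\,d$, that is, to $(y-1)\,d\ge 2b$. I will prove this last form.

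\textbf{Step 2 (bounds on $x$ and $d$).} From the constraints, $1\le x\le |M_s|\le c-b-1$. Put $t=c-b$. Then $t\ge 2$, because $x\ge 1$ forces $c-b-1\ge 1$. Also $t<b$, because $c\le n<2b$. Since $n\ge c$, we get $d\ge c-b+1-x=t+1-x\ge 2$.

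\textbf{Step 3 (lower bound on $y$).} This is the step where the implicit structure is needed. A free choice of $y$ is not enough, since $y$ could be tiny. But $N_s$ consists of exactly $2|M_s|$ agents, so $|N_3|=c-2|M_s|$. Moreover $N_2$ consists of exactly $2x$ agents moved from $N_1$. Hence
\[
y=c-2|M_s|+2x\ge c-2(c-b-1)+2x=b+2-t+2x .
\]
Therefore $y-1\ge b+1-t+2x$.

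\textbf{Step 4 (check the product).} It remains to show
\[
f(x)=(b+1-t+2x)(t+1-x)\ge 2b \quad\text{for all } 1\le x\le t-1,\ 2\le t\le b-1 .
\]
The function $f$ is a concave quadratic in $x$, so it suffices to check the two endpoints.
\begin{itemize}
\item At $x=t-1$: $f=2(b+t-1)\ge 2b+2$.
\item At $x=1$: $f=t(b+3-t)$. This is concave in $t$, so check its endpoints. At $t=2$ it equals $2b+2$. At $t=b-1$ it equals $4(b-1)$, which is at least $2b$ as soon as $b\ge 2$.
\end{itemize}
The degenerate case $b\le 1$ cannot occur here, because $n/2<b<c\le n$ forces $b\ge 2$.

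Combining the steps gives $(y-1)d\ge f(x)\ge 2b$, and hence the strict inequality of the claim.

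\textbf{Main obstacle.} The key point is Step 3. One must use $|N_3|=|N_c|-2|M_s|$ and $|N_2|=2|M_2|$ together with $|M_s|<|N_c|-b$ to force $y$ large enough. If $y$ were treated only as bounded above by $|N_c|$, the claim could fail for small $y$. The remaining verification is routine endpoint checking of concave quadratics.
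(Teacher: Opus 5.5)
Your argument is correct and is essentially the paper's. Both obtain the key lower bound $|N_2\cup N_3|\ge 2|M_2|+2b-|N_c|+2$ from $|N_2|=2|M_2|$, $|N_3|=|N_c|-2|M_s|$ and $|M_s|\le |N_c|-b-1$, then finish by checking a quadratic at the extremes of the parameter ranges. Your version, which rewrites the target as $(y-1)d\ge 2b$ and checks both endpoints in $x$ and in $t$, is more complete than the paper's. The paper evaluates only at $|M_2|=|N_c|-b-1$ and asserts the maximum lies there, but for $n=20$, $b=11$, $|N_c|=14$ the value at $|M_2|=1$ is larger ($-78$ versus $-83$). The claim still holds only because, as your $x=1$ endpoint check shows, that value is negative too.
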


\begin{proof}[Proof of Claim \ref{claim:unary:case2:3}]
Consider the difference between the objective function and the target value in the claim,
\[
 (|M_2| \cdot |N_2\cup N_3|+n+b-|M_2|)- (|N_2\cup N_3|\cdot (n-b+1)).
\]
Since $|M_s|\leq |N_c|-b-1$, $|N_2|=2|M_2|$ and $|N_3|=|N_c|-|N_s|$, we have 
\[
|N_2\cup N_3|= |N_c|-|N_s|+2|M_2|\geq 2|M_2|+2b-|N_c|+2.
\]
Since $|M_2|-(n-b+1)<0$, the highest value of $|M_2| \cdot (|N_2\cup N_3|+n+b-|M_2|)-(|N_2\cup N_3|\cdot (n-b+1))$ is attained when  $|N_2\cup N_3|=2|M_2|+2b-|N_c|+2$, and thus
\[
\begin{aligned}
&|M_2| \cdot (|N_2\cup N_3|+n+b-|M_2|)-(|N_2\cup N_3|\cdot (n-b+1))\\
\leq& |M_2|\cdot (2|M_2|+2b-|N_c|+2)+n+b-|M_2|-(n-b+1)\cdot (2|M_2|+2b-|N_c|+2).
\end{aligned}
\]
Further, the highest value of the above term is attained when $|M_2|=|N_c|-b-1$,
\[
\begin{aligned}
&|M_2|\cdot (2|M_2|+2b-|N_c|+2)+n+b-|M_2|-(n-b+1)\cdot (2|M_2|+2b-|N_c|+2)\\
\leq & (|N_c|-b-1)\cdot  (2(|N_c|-b-1)+2b-|N_c|+2)+n+b-(|N_c|-b-1)\\
&-(n-b+1)\cdot(2|M_2|+2b-|N_c|+2)\\
=&|N_c|^2-3|N_c|+n+2b+1-n\cdot |N_c|.
\end{aligned}
\]
For $|N_c|^2-3|N_c|+n+2b+1-n\cdot |N_c|$, the highest value is attained when $|N_c|=n$,
\[
\begin{aligned}
|N_c|^2-3|N_c|+n+2b+1-n\cdot |N_c|\leq n^2-3n+n+2b+1-n^2=2b+1-2n\leq -1.
\end{aligned}
\]
Hence 
\[
(|M_2| \cdot |N_2\cup N_3|+n+b-|M_2|)-(|N_2\cup N_3|\cdot (n-b+1))\leq -1,
\]
which proves the claim.
\end{proof}

On the other hand, since each agent in $N_c$ has degree at least $n - b + 1$, the total number of edges must be at least $|N_2 \cup N_3| \cdot (n - b + 1)$. This contradicts Claim~\ref{claim:unary:case2:3}, and thus proves the lemma.
\end{proof}

\textbf{Case 3:} $a\geq 2$.
In this case, for any agent $a_i\in N$, $2\le  a\leq \MMS_i<a+1$.
Thus we can arbitrarily allocate items as long as every agent receives at least $a$ items, and the approximation ratio is at least $\frac{a}{a+1}\ge \frac{2}{3}$.

\section{Subjective Divisibility with Heterogeneous Valuations}
\label{sec:main}

This section presents our main result for the general setting in which agents have both subjective views on item divisibility and heterogeneous valuations. We show that for any such instance, there always exists a $\frac{5}{9}$-MMS allocation, improving the $\frac{1}{2}$-approximation previously obtained by~\citet{DBLP:journals/corr/abs-2310-00976}.

\begin{theorem}\label{the:gen}
    For any subjective divisibility instance, there exists a $\frac{5}{9}$-MMS allocation. 
\end{theorem}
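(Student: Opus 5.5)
We prove Theorem~\ref{the:gen} by following the outline in the contribution overview. First we normalize: scale each agent's valuation so that $\MMS_i = 1$. The goal is then to give every agent a bundle of value at least $\frac{5}{9}$. Fix an MMS partition $P^i=(P^i_1,\ldots,P^i_n)$ for each agent. In $P^i$ at most one bundle can contain a fractional piece of each divisible item in a non-trivial way, after a standard rearrangement. We classify items for agent $a_i$ by value: \emph{large} if $v_i(g)\ge \frac59$, \emph{medium} if $\frac29\le v_i(g)<\frac59$ (two of these suffice), and \emph{small} otherwise.

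\textbf{Phase 1 (large and shareable items).} We repeatedly look for an item $g$ and an agent $a_i$ with $v_i(g)\ge\frac59$. If $g$ is indivisible for $a_i$, or no other remaining agent needs it, we give $g$ entirely to $a_i$ and remove her. If $g$ is divisible for $a_i$, we give her only $\len_i(g,\frac59)$ of $g$ and leave the rest of $g$ in the pool. Agents who receive such shareable fractions form the set $X$. We choose the served agent as the one requiring the shortest piece, in the spirit of Algorithm~\ref{alg:una:n}. This ensures that for every remaining agent $a_j$ the removed material is worth at most one bundle of $P^j$, or costs her at most one indivisible item. We then carry out the accounting analogue of inequality~\eqref{eq:unary:case1:1}. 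For each remaining agent $a_j$, the number of disjoint subsets of the remaining items worth $\ge\frac59$ (or a weaker fractional threshold) drops by at most one per removed agent. This replaces the classical large-item reduction, which fails under subjective divisibility.

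\textbf{Phase 2 (critical agents $Z$).} Among the remaining agents, we call $a_j$ \emph{critical} if she has many medium items, a sufficient fraction of which are divisible for her. Such an agent cannot be handled by indivisible techniques, because her MMS exploits splitting. The key step, and the main obstacle, is to give each $a_j\in Z$ two medium items (entire or halves) worth $\ge\frac59$ without hurting $X\cup Y$. Here we will imitate Lemma~\ref{lem:unary:case2.3}. Build a bipartite graph between $Z$ and the medium items, with 2-to-1 edges for items divisible for two critical agents, and take a maximum matching. Then run the alternating-structure argument of Claims~\ref{cla:2}--\ref{divide} together with a counting bound, analogous to Claim~\ref{claim:unary:case2:3}, on divisible medium items. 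This shows that the critical agents consume at most as many items as the number of bundles they "release" in others' MMS partitions. The threshold defining ``critical'' would be tuned so that this count closes exactly at ratio $\frac59$. I expect most of the case analysis to live here.

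\textbf{Phase 3 (agents in $X\cup Y$).} We treat every remaining item as indivisible. The only exception is that each agent in $X$ additionally owns her partial item from Phase 1. We then apply the IDO reduction to this purely indivisible sub-instance, followed by a bag-filling procedure. Bags are seeded with the highest remaining items and filled with small items until some agent values the bag at $\ge\frac59$. For agents in $X$, the value they already hold plus the bag suffices. For agents in $Y$, we bound the value of bags taken by others in two ways. Bags without seeds are worth $<\frac59+\frac29$ to them. Items consumed in Phases 1--2 are bounded via the invariant maintained there. Together these imply that the total remaining value for any unserved $a_j\in Y$ stays at least $\frac59$ times the number of unserved agents, so bag-filling never runs out.

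The hardest step is Phase~2, bounding how the critical agents' medium items erode the MMS guarantees of $Y$. If the direct matching count fails, I would fall back on a weighted version of the edge-counting in Claim~\ref{claim:unary:case2:3}, where edges are weighted by value rather than counted.
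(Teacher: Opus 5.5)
There is a genuine gap. The proposal is an outline whose load-bearing steps are either mis-specified or left open.

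\textbf{Thresholds, the set $X$, and the critical agents.} First, the thresholds do not close. If medium means $v_i(g)\ge\frac29$, two medium items guarantee only $\frac49<\frac59$, and ``entire or halves'' guarantees less, so Phase~2 as stated cannot satisfy a critical agent. The paper sets medium at $v_i(g)\ge\frac{7}{18}$. Then two medium items give $\frac79$, and half of a medium divisible item (a \emph{base}) gives $\frac{7}{36}$, so an agent holding a base needs only $\frac{13}{36}$ more from bag-filling.

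Second, your $X$ is the wrong set. An agent who receives $\len_i(g,\frac59)$ of a large item already has $\frac59$ and simply leaves. The missing idea is the paper's Step~2. After the large-item step, take a \emph{maximum} 2-to-1 matching over \emph{all} remaining agents on items that are medium and divisible for both endpoints, and split each matched item between its pair; these agents form $X$.

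Maximality is what tames the critical agents: no item of $M_R$ is sharable for two unmatched agents. Hence the (at least five) sharable items of different agents in $Z$ are pairwise distinct, and this disjointness drives the counting that gives each agent in $Z$ two medium items when $|Z|>\lceil u/3\rceil$. When $|Z|\le\lceil u/3\rceil$, picking in order $(X,Y,Z)$ and then in reverse over $Y\cup Z$ already hands each critical agent two of the top $\lceil 4u/3\rceil$ items, all medium for her.

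Matching only inside $Z$ and transplanting Claims~\ref{cla:2}--\ref{claim:unary:case2:3} does not work as is, because those arguments count items and rely on every item having value $1$. Since you leave the critical threshold ``to be tuned'' with an unspecified weighted fallback, the key lemma is not proved.

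\textbf{The analysis of $Y$.} This also misses the actual difficulty. Unseeded bags are easy to bound. The problem is the selection-step bundles made of two of the top $2u-k$ items, which an agent in $Y$ may value up to $\frac{10}{9}>1$. With such bundles, accounting in which each departing agent removes at most one share of value breaks down.

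The paper's definition of $Z$ is engineered for exactly this. A non-critical agent has one of two properties:
\begin{enumerate}
\item She has at most $\lfloor 4u/3\rfloor$ medium items among the top $2u-k$. Then at most $\lfloor u/3\rfloor$ such heavy bundles exist, the others are worth at most $\frac{17}{18}$, and the total value removed is at most $u-1+\frac1{18}$.
\item She has at most four sharable items among the top $2u-k$. Then most of her medium items are indivisible. The paper combines the refined bound $v_i(M')\ge u+\frac49 q_i$ from Step~1 with Observation~\ref{ob:general:phase2}, which lets one discard the $2t$ smallest indivisible items together with $t$ agents.
\end{enumerate}
Your appeal to ``the invariant maintained there'' supplies none of this. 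The Phase~1 invariant you state (the number of disjoint $\frac59$-subsets drops by at most one per removed agent) is neither proved nor the quantity the later counting uses; the paper works with total value, $v_i(M')\ge u$ and its refinement.
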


Our proof is constructive. Given any instance, we explicitly construct a $\frac{5}{9}$-MMS allocation through three phases detailed in the subsections below. We remark that if the MMS values are given, the algorithm runs in polynomial time.

\subsection{Phase 1: Preprocess Large and Sharable Items}

Before designing the main algorithm, we preprocess the instance $I=\langle{N, M,\fv}\rangle$ and introduce notations used later. For simplicity, we assume $\MMS_i=1$ for every agent $a_i \in N$. 
We define the following types of items:
\begin{itemize}[leftmargin=*]
    \item An item $g \in M$ is \textit{large} for agent $a_i$ if $v_i(g) \ge \frac{5}{9}$.
    \item An item $g \in M$ is \textit{medium} for agent $a_i$ if $v_i(g) \ge \frac{7}{18}$.
    \item An item $g$ is \textit{sharable} for agent $a_i$ if it is both medium and divisible for $a_i$, i.e., $v_i(g) \ge \frac{7}{18}$ and $g$ is divisible for $a_i$.
\end{itemize}

\paragraph{Step 1}
We process all large items. If a large item $g$ is divisible for some agent $a_i$ with $v_i(g) \ge \frac{5}{9}$, we allocate the minimum fraction of $g$ that gives value at least $\frac{5}{9}$ to the agent with the highest $v_i(g)$. If $g$ is indivisible for all such agents, we assign it entirely to one of them. In either case, the selected agent receives at least $\frac{5}{9}$ and is removed from the instance along with her allocated bundle. This step is repeated until no large item remains.

\medskip

We remark that, unlike the standard MMS setting, removing a large item may decrease the MMS values of the remaining agents under subjective divisibility. 
Thus, we cannot simply reduce the instance and assume without loss of generality that no large item remains. 
Instead, in all subsequent steps, we continue to work with the original instance and MMS values refer to those in the original instance.

Let $I'=\langle N', M', \fv \rangle$ be the residual instance after Step 1, and let $u = |N'|$. 
Then we have
\begin{align}\label{eq:general:phase1:M'}
    v_i(M') \ge u
\end{align}
for every $a_i \in N'$, and $v_i(g) < \frac{5}{9}$ for all $g \in M'$.

Note that Inequality~\eqref{eq:general:phase1:M'} is not tight. 
This is because each remaining agent may have lost divisible items partially allocated to others, with lost value at most $\frac{5}{9}$ per item.
We will later refine this inequality when a tighter bound is required in the analysis.

Although some items may have been partially allocated, we continue to treat each remaining item as {\em entire} and normalize its length to 1.

\paragraph{Step 2}
We next allocate \textit{sharable} items to pairs of agents. We construct a bipartite graph $G = (N', M'; E)$ where the vertices are agents $N'$ and items $M'$. There is an edge between $a_i$ and $g$ if $g$ is sharable for $a_i$.

We search for disjoint triples $(a_i, a_j, g)$ in which $g$ is sharable to both $a_i$ and $a_j$. Fix a maximum 2-to-1 matching $\Psi$ in $G$, which pairs items $M_X \subseteq M'$ with agent pairs $X \subseteq N'$. For each $g \in M_X$, denote by $\Psi(g)$ the pair of agents matched to $g$.

We divide each item $g \in M_X$ evenly between its two matched agents. Each such agent receives value at least $\frac{7}{36}$ from her share. We refer to each half of a shared item as a \emph{base}. The allocation of bases is final and will not be changed in subsequent steps.

For any agent not matched in $\Psi$, the value of each base is at most $\frac{5}{18}$ due to the exclusion of large items in Step 1.

\subsection{Phase 2: Main Algorithm and Processing Non-critical Agents}

Let $M_R = M' \setminus M_X$ and $k = |X|$. Without loss of generality, assume $|M_R| \ge 2u - k$, as otherwise we can add dummy items valued at 0 by all agents.

We classify the agents in $N' \setminus X$ into two subsets $Y$ and $Z$. For each agent $a_i \in N'$, let $M_R^i[l:r]$ denote the set of items ranked from $l$-th to $r$-th largest by $v_i$. Define $Z$ to be the set of agents $a_i \in N' \setminus X$ satisfying both:
\begin{itemize}
    \item Agent $a_i$ has at least $\lceil \frac{4u}{3} \rceil$ medium items in $M_R^i[1:2u-k]$;
    \item Agent $a_i$ has at least 5 shareable items in $M_R^i[1:2u-k]$.
\end{itemize}
Agents in $Z$ are called {\em critical agents}. We assume $|Z| \le \lceil \frac{u}{3} \rceil$ and let $Y = N' \setminus (X \cup Z)$. The case $|Z| > \lceil \frac{u}{3} \rceil$ is deferred to Phase 3.

An instance is said to have {\em identical ordering} (IDO) if there exists a common item order $g_1, g_2, \ldots, g_m$ such that $v_i(g_1) \ge v_i(g_2) \ge \cdots \ge v_i(g_m)$ for all $i$. Although IDO reduction is not valid in the subjective setting, it remains useful when analyzing allocations where no item is divided. 

We obtain an IDO instance $I'' = (N', M_R, \fv')$ from $I' = (N', M_R, \fv)$ by fixing an item order and sorting each $v_i$ accordingly in decreasing order.
That is, in each $v'_i$, the $j$-th largest value is the value for the $j$-th item. Since we do not modify agents' divisibility views and will only allocate whole items, this reduction is without loss of generality. We formalize this in the following claim:

\begin{claim}\label{claim:general:ido}
Let $\alpha_1,\ldots,\alpha_u$ be arbitrary nonnegative values. If there is an allocation $\A$ in $I''$ that assigns each item entirely to one agent and satisfies $v'_i(A_i) \ge \alpha_i$ for all $i$, then we can compute in polynomial time an allocation $\B$ in $I'$ such that each item is again entirely allocated, and $v_i(B_i) \ge \alpha_i$ for all $i$.
\end{claim}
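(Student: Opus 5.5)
This is the standard IDO reduction argument of \citet{DBLP:journals/teco/BarmanK20}, adapted to the fact that only whole items are moved. Fix the common order $g_1,\ldots,g_{|M_R|}$ used to build $I''$. For each agent $a_i$, let $\sigma_i$ be the bijection sending the $j$-th item $g_j$ of $I''$ to the item of $M_R$ that is $j$-th largest under $v_i$, breaking ties consistently. Then $v'_i(g_j)=v_i(\sigma_i(g_j))$ by construction.

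The algorithm processes the items of $I''$ in the order $g_1,g_2,\ldots$, from largest to smallest. Let $a_i$ be the agent who owns $g_j$ in $\A$. In $I'$, give $a_i$ her most valuable remaining whole item under $v_i$ and delete it. The result $\B$ assigns every item of $M_R$ entirely to one agent, and it runs in polynomial time: $|M_R|$ rounds, each an argmax.

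The key step is to show $v_i(B_i)\ge v'_i(A_i)$. I will prove the stronger pointwise fact: at round $j$, the item $a_i$ picks has $v_i$-value at least $v'_i(g_j)$. Before round $j$, exactly $j-1$ items have been removed. So among the $j$ items of $M_R$ that are top-$j$ for $a_i$, namely $\sigma_i(g_1),\ldots,\sigma_i(g_j)$, at least one is still available. Each of these has $v_i$-value at least $v_i(\sigma_i(g_j))=v'_i(g_j)$, so the greedy pick is worth at least that much. Summing over the items $a_i$ picks, which are in bijection with $A_i$, gives $v_i(B_i)\ge\sum_{g_j\in A_i}v'_i(g_j)=v'_i(A_i)\ge\alpha_i$.

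The step needing care is where divisibility enters, though it is not a real obstacle. Because $\A$ and $\B$ both allocate only entire items, each item's value in both instances is just $v_i$ or $v'_i$ of the whole item, regardless of divisibility. Divisibility views are left unchanged, so the argument never uses them.

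One caveat is that items in $M_R$ may have been partially consumed in Step 1 but are treated as entire. The valuation $v_i$ then refers to the remaining piece, and the same argument applies. Dummy zero-valued items are handled uniformly by the argument.
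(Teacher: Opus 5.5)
Your proposal is correct and follows the paper's route. The paper simply cites the Barman--Krishnamurthy picking-sequence reduction and notes that it carries over because all items are allocated whole; your round-by-round bound (at round $j$, at most $j-1$ of $a_i$'s top-$j$ items are gone) spells that argument out, and it also yields the picking-sequence structure the paper later uses in Phase 3.
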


The indivisible version of this claim appears in \citep{DBLP:journals/teco/BarmanK20}. A similar argument proves our claim since all items are allocated entirely.

\medskip

By Claim~\ref{claim:general:ido}, we may assume without loss of generality that $I'$ is IDO as long as all items in $M_R$ are allocated wholly. We thus omit the superscript $i$ in $M_R^i[l:r]$. The algorithm for this phase, shown in Algorithm~\ref{alg:general}, consists of two main steps: selection and bag-filling.


\begin{algorithm}[t]
\caption{Allocating items in Phase 2: when $|Z| \le \lceil \frac{u}{3} \rceil$}
\label{alg:general}
\textbf{Input:} IDO instance $\langle N', M_R, \fv \rangle$, base items $M_X$, picking order $\sigma = (X, Y, Z)$\\
\textbf{Output:} Allocation $\A = (B_1, \ldots, B_u)$

\vspace{1mm}
\textit{\% Process base items} \\
For each item $g \in M_X$, allocate $g$ evenly to $\Psi(g) = \{a_i, a_j\}$, denoted by $g_i^b$ and $g_j^b$

\vspace{1mm}
\textit{\% Selection step} \\
Initialize $A_i = B_i = \emptyset$ for all $a_i \in N'$\\
\For{$i = 1, \ldots, u - 1, u, u, u - 1, \ldots, k + 1$}{
    Let $g' \in \arg\max_{g \in M_R} v_i(g)$ \\
    Set $A_i \leftarrow A_i \cup \{g'\}$ and $M_R \leftarrow M_R \setminus \{g'\}$
}

\If{there exists $a_i \in X$ with $v_i(A_i) \ge \frac{5}{9}$}{
    Set $B_i \leftarrow A_i \cup \{g_i^b\}$ and $X \leftarrow X \setminus \{a_i\}$
}
\If{there exists $a_i \in Y$ with $v_i(A_i) \ge \frac{5}{9}$}{
    Set $B_i \leftarrow A_i$ and $Y \leftarrow Y \setminus \{a_i\}$
}

\vspace{1mm}
\textit{\% Bag-filling step} \\
Let $\mathcal{B} = \{A_i\}_{a_i \in X \cup Y}$ \\
\While{$X \cup Y \ne \emptyset$}{
    Select an arbitrary bag $B \in \mathcal{B}$\\
    \While{no $a_i \in Y$ with $v_i(B) \ge \frac{5}{9}$ and no $a_i \in X$ with $v_i(B) \ge \frac{13}{36}$}{
        Choose any $g \in M_R$, set $B \leftarrow B \cup \{g\}$, $M_R \leftarrow M_R \setminus \{g\}$
    }
    \uIf{there exists $a_i \in Y$ with $v_i(B) \ge \frac{5}{9}$}{
        Set $B_i \leftarrow B$ and $Y \leftarrow Y \setminus \{a_i\}$
    }
    \ElseIf{there exists $a_i \in X$ with $v_i(B) \ge \frac{13}{36}$}{
        Set $B_i \leftarrow B \cup \{g_i^b\}$ and $X \leftarrow X \setminus \{a_i\}$
    }
}

\If{$M_R \ne \emptyset$}{
    Arbitrarily allocate remaining items in $M_R$ to any agents
}

\textbf{return} $(B_1, \ldots, B_u)$
\end{algorithm}

\paragraph{Step 3: Selection}
Fix an ordering of agents $\sigma = (X, Y, Z)$. The ordering within $X$ and $Y$ can be arbitrary; the order within $Z$ will be specified in Phase 3. Let $\A = (A_1, \ldots, A_u)$ denote the allocation, initialized as empty.

We first allocate the largest $u$ items $\{g_1, \ldots, g_u\}$ in $M_R$, assigning $g_i$ to agent $a_i$ for $i = 1, \ldots, u$ (referred to as round 1). We then allocate the next $u - k$ largest items to agents in $Y \cup Z$ following the reverse order of $\sigma$. That is, agent $a_i$ receives $g_{2u - i + 1}$ for $i = k+1, \ldots, u$ (referred to as round 2).

After this step, we have $A_i = \{g_i\}$ for $i = 1, \ldots, k$, and $A_i = \{g_i, g_{2u - i + 1}\}$ for $i = k+1, \ldots, u$. We remark that these bundles are provisional and may be modified in Step 4.

By the definition of $Z$, the first $\lceil \frac{4u}{3} \rceil$ items in $M_R$ are medium for every agent in $Z$, and each such item is valued at least $\frac{7}{18}$. Hence, each agent in $Z$ receives two medium items. This gives the following claim. Note that we will provide a refinement of Claim \ref{claim:general:phase2:Z} in the third phase when $|Z| > \lceil \frac{u}{3} \rceil$.

\begin{claim}\label{claim:general:phase2:Z}
    If $|Z| \le \lceil \frac{u}{3} \rceil$, then $v_i(A_i) \ge \frac{7}{9}$ for all $a_i \in Z$.
\end{claim}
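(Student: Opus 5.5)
The plan is a counting argument on positions in the common order of the IDO instance $I''$. Order $\sigma=(X,Y,Z)$ puts the agents of $Z$ last, so they occupy positions $u-|Z|+1,\ldots,u$.

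First I pin down what each critical agent receives in the selection step. In round 1, agent $a_i$ receives $g_i$. In round 2, agents in $Y\cup Z$ pick in reverse order of $\sigma$, so $a_i$ receives $g_{2u-i+1}$. The larger index belongs to $a_i$'s round-2 item, which is $g_{2u-i+1}$.

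For $a_i\in Z$ we have $i\ge u-|Z|+1$, so $2u-i+1\le u+|Z|$. Using the hypothesis $|Z|\le \lceil \frac{u}{3}\rceil$, both items of $a_i$ therefore lie among the top $u+\lceil \frac{u}{3}\rceil$ items of $M_R$.

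Next I compare this with the definition of $Z$. Each $a_i\in Z$ has at least $\lceil \frac{4u}{3}\rceil$ medium items in $M_R^i[1:2u-k]$. Because we reduced to IDO, her values are sorted decreasingly along the common order, so her medium items form a prefix. Hence her top $\lceil \frac{4u}{3}\rceil$ items are all medium.

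It remains to check $u+\lceil \frac{u}{3}\rceil\le \lceil \frac{4u}{3}\rceil$. This holds because $u$ is an integer, so $u+\lceil \frac{u}{3}\rceil=\lceil u+\frac{u}{3}\rceil$. Consequently $g_i$ and $g_{2u-i+1}$ are both medium for $a_i$, which gives $v_i(A_i)\ge 2\cdot\frac{7}{18}=\frac{7}{9}$.

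The step that needs the most care is justifying that the medium count, which is defined per agent on the original ranking $M_R^i$, transfers to positions in the IDO instance. This goes through because $v'_i$ is just $v_i$ sorted, so the $j$-th item in $I''$ has $a_i$'s $j$-th largest value. I would also confirm that $2u-k\ge u+|Z|$, so that all positions involved lie in the considered range. This holds since $|M_R|\ge 2u-k$ and $|Z|\le u-k$.
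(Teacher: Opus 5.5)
Your proof is correct and is the same argument as the paper's, which simply notes that in the IDO instance the first $\lceil \frac{4u}{3}\rceil$ items of $M_R$ are medium for every agent in $Z$, so each such agent receives two items worth at least $\frac{7}{18}$ each. You make explicit what the paper leaves implicit: the index bound $2u-i+1\le u+|Z|\le u+\lceil \frac{u}{3}\rceil=\lceil \frac{4u}{3}\rceil$, and the fact that each agent's medium items form a prefix of her sorted valuation.
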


We thus finalize the allocations for agents in $Z$, and for those in $X \cup Y$ whose bundles already have value at least $\frac{5}{9}$. These agents will not participate in the next step. Let $X'$ and $Y'$ denote the remaining unsatisfied agents in $X$ and $Y$, respectively.

\paragraph{Step 4: Bag-filling}
We allocate the remaining items in $M_R$ (excluding the first $2u-k$ items) in a bag-filling manner to agents in $X'$ and $Y'$.
Let ${\mathcal B} = \{A_i\}_{a_i \in X'\cup Y'}$ be the set of current bundles.

\begin{itemize}\itemsep0em
    \item Arbitrarily select a bag $B\in {\mathcal B}$.
    \item Add items from $M_R$ to $B$ one at a time until either:
    \begin{itemize}\itemsep0em
        \item some agent $a_i\in X'$ satisfies $v_i(B)\ge \frac{13}{36}$, or
        \item some agent $a_i\in Y'$ satisfies $v_i(B)\ge \frac{5}{9}$.
    \end{itemize}
    \item Among the satisfied agents, select one $a_i$, breaking ties in favor of $Y'$ over $X'$.
    \item Assign $B$ to $a_i$, denoted by $B_i = B$. If $a_i\in X'$, also allocate $g^b_i$ to $a_i$.
    \item Remove $a_i$ and $B$ from the instance and repeat the process.
\end{itemize}

We next argue that every agent in $N'$ receives a bundle in Step 3 or Step 4 that meets the $\frac{5}{9}$-MMS threshold. It suffices to show that the final remaining agent $a_i$ can also be satisfied. 

\begin{lemma}
\label{lem:main:phase2:X}
If $a_i \in X$, then $a_i$ receives a bundle with value at least $\frac{5}{9}$ at the end of Step 4.
\end{lemma}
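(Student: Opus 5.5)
We need to show that an agent $a_i\in X$ ends with value at least $\frac{5}{9}$. Her base $g^b_i$ is worth at least $\frac{7}{36}$ to her, being half of an item that is sharable for her. So it suffices to show that her bag $B_i$ from Step 3 or Step 4 satisfies $v_i(B_i)\ge \frac{13}{36}$, since $\frac{13}{36}+\frac{7}{36}=\frac{20}{36}=\frac59$.

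If $a_i$ leaves in Step 3, the bundle $A_i$ already has value at least $\frac59$ and we are done. If she leaves in Step 4 because some bag reached $\frac{13}{36}$ for her, we are also done. It remains to handle the case where $a_i$ is the last remaining agent. We must show that the leftover material, namely her bag $A_i$ plus all unassigned items of $M_R$, is worth at least $\frac{13}{36}$ to her.

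The plan is the standard bag-filling counting argument. First I bound $v_i$ of every bag handed to another agent $a_j\neq a_i$:

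\begin{itemize}
\item \textbf{Bags from Step 4.} Each such bag did not satisfy $a_i$ before its last item was added, so its value to her before that item was below $\frac{13}{36}$. The last added item comes from $M_R$ after the first $2u-k$ items. In the IDO order this item is worth at most $v_i(g_{2u-k+1})$, which is at most $\frac{5}{18}$ in the relevant range. This gives a per-bag bound of roughly $\frac{13}{36}+v_i(g_{2u-k+1})$.
\item \textbf{Bags from Step 3.} These are $\{g_j\}$ for $a_j\in X$, and $\{g_j,g_{2u-j+1}\}$ for $a_j\in Y\cup Z$. Each item is worth less than $\frac59$ to $a_i$, so a pair is worth less than $\frac{10}{9}$; I will need finer bounds than this. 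For $a_j\in X$, the Step-3 bundle is a single item, and $a_i$ values it at most at her value for $g_1$.
\item \textbf{Bases.} The base given to each other $X$-agent is worth at most $\frac{5}{18}$ to $a_i$ (half of a non-large item), or at most half of her value for that item.
\end{itemize}

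Second, I compare the total removed value with what $a_i$ originally had, using inequality~\eqref{eq:general:phase1:M'}: $v_i(M')\ge u$, where $M'=M_R\cup M_X$.

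The key observation is that $a_i$ is in $X$, i.e.\ unsatisfied after Step 3, so $v_i(g_i)<\frac59$. More importantly, because Step 3 runs as rounds in IDO order, each Step-3 bundle of another agent can be paired against items $a_i$ herself ranks. Concretely, with $m=|M_R|$ sorted decreasingly, the number of items consumed per other agent is at most two in Step 3. The per-agent loss should therefore be bounded by $1$ minus a slack. Summing over $u-1$ agents plus her own base item gives total removed value $<u-\frac{13}{36}$.

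To make this work, I would split the other agents into those whose Step-4 bag is small, with per-bag loss at most $\frac{13}{36}+\frac{5}{18}=\frac{23}{36}<1$, and Step-3 pair bundles. For the pairs, since $a_i$ was given $g_i$ with $i\le k$ and left unsatisfied, every item after position $i$ is worth less than $\frac59$. Every second-round item $g_{2u-j+1}$ has value at most $v_i(g_{u+1})$, and this must be compared with $v_i$ of $M_R$'s tail.

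The main obstacle is the Step-3 pair bundles of $Y\cup Z$ agents, especially $Z$. These can each be worth close to $\frac{10}{9}>1$ to $a_i$, so a naive per-agent bound of $1$ fails. I would compensate by charging the excess to the $X$-agents' single-item bundles and bases, which are cheap (at most about $\frac59$ and $\frac5{18}$ respectively). I would also use the shared medium-sharable items of $M_X$: they are counted in $v_i(M')$ but only half of each base is lost. If that charging is insufficient, I would fall back on the maximality of the matching $\Psi$ to argue that the items $g_j$ cannot be too valuable for $a_i$ while still shareable. Since $a_i$ already holds a base, an item sharable for $a_i$ and for another unmatched agent would contradict the maximality of $\Psi$ only in limited ways, so this fallback needs care.
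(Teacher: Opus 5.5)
Your overall plan (bound what each of the other $u-1$ agents removes and compare with $v_i(M')\ge u$) is the paper's, but the case you call the main obstacle is left open, so there is a genuine gap. From $a_i\in X'$ you only conclude $v_i(g_i)<\frac59$. That bound allows the Step-3 pairs $\{g_j,g_{2u-j+1}\}$ of $Y\cup Z$ agents to be worth nearly $\frac{10}{9}$. You then offer a charging scheme or an appeal to the maximality of $\Psi$, but carry out neither.

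The missing observation is that $a_i$'s base is allocated, and final, before Step 3, so her Step-3 test is on $A_i\cup\{g^b_i\}$. Surviving that test gives $v_i(g_i)<\frac59-\frac{7}{36}=\frac{13}{36}$. Since $i\le k$ and the instance is IDO, every item at a position $r>k$ satisfies $v_i(g_r)\le v_i(g_i)<\frac{13}{36}$. Both items of every $Y\cup Z$ Step-3 bundle lie in that range, and so does every item added in Step 4, whose positions are beyond $2u-k\ge k$. Note also that your $\frac{5}{18}$ bound on the last added item is not available for an $X$-agent. It belongs to the $Y$-lemma, where it comes from the agent's second-round item, which $a_i$ does not have.

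With the $\frac{13}{36}$ bound, every other agent costs $a_i$ at most $1$:
\begin{itemize}
\item For $a_j\in X\setminus X'$, the cost is at most $\frac12\cdot\frac59+\frac59=\frac56$.
\item For $a_j\in(Y\setminus Y')\cup Z$, it is at most $2\cdot\frac{13}{36}=\frac{13}{18}$.
\item For a bag completed in Step 4, its value to $a_i$ before the last item is below $\frac{13}{36}$, since otherwise $a_i$ would already qualify. The last item is also below $\frac{13}{36}$. This gives at most $\frac{5}{18}+\frac{13}{18}=1$ with an $X'$ base and $\frac{13}{18}$ without.
\item A bag handed out with nothing added is a single item worth less than $\frac59$, or a $Y$ pair worth less than $\frac{13}{18}$.
\end{itemize}
Hence at least $v_i(M')-(u-1)\ge 1$ remains for $a_i$, which is exactly the paper's argument. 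No charging and no use of the maximality of $\Psi$ is needed.
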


\begin{proof}
We first claim that $v_i(g_r)<\frac{13}{36}$ for all $r>k$; 
Otherwise 
\[
v_i(A_i) =v_i(g^b_i)+v_i(g_i) \ge \frac{1}{2}\cdot\frac{7}{18}+\frac{13}{36} = \frac{5}{9},
\]
which means $a_i$ should be removed in Step 3.
Let $a_j\neq a_i$ be any agent that is removed before $a_i$.

If $a_j \in X\setminus X'$, then 
\[
v_i(\{g^b_j\} \cup A_j) = v_i(g^b_j) + v_i(g_j) \le \frac{1}{2}\cdot\frac{5}{9}+\frac{5}{9} = \frac{5}{6}.
\]

If $a_j \in (Y\setminus Y')\cup Z$, then 
\[
v_i(A_j) = v_i(g_j)+v_i(g_{2u-j+1}) \le 2\cdot\frac{13}{36} = \frac{13}{18}.
\]

If $a_j \in X'\cup Y'$, let $e$ be the last item added to $B_j$. We have $v_i(B_j\setminus\{e\}) < \frac{13}{36}$ and $v_i(e)< \frac{13}{36}$. Then 
\begin{align*}
    v_i(\{g^b_j\} \cup B_j)&\le \frac{1}{2}\cdot\frac{5}{9}+ \frac{13}{36}+ \frac{13}{36} = 1, &\text{if $a_j \in X'$~}\\
    v_i(B_j) &\le \frac{13}{36}+ \frac{13}{36} = \frac{13}{18}, & \text{if $a_j \in Y'$}.
\end{align*}

Combining all the above cases,  
\[
\sum_{a_j\in (X\setminus X')\cup (Y\setminus Y') \cup Z} v_i(A_j)+\sum_{a_j\in X'\cup Y'} v_i(B_j) \le u-1,
\]
and thus after the removal of $u-1$ agents, there are enough remaining items for $a_i$ to get a bundle with value of $\frac{5}{9}$ or above, which proves the lemma.
\end{proof}

\begin{lemma}
\label{lem:main:phase2:Y}
If $a_i \in Y$, then $a_i$ receives a bundle with value at least $\frac{5}{9}$ at the end of Step 4.
\end{lemma}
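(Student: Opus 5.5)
We would mirror the proof of Lemma~\ref{lem:main:phase2:X}: let $a_i\in Y$ be the last remaining agent in Step 4. We show that the total value $a_i$ assigns to the bundles taken by the other $u-1$ agents is at most $v_i(M')-\frac{5}{9}$. Then the leftover items in $M_R$, together with her provisional bundle $A_i$, suffice for the bag-filling loop to reach $\frac{5}{9}$. Since $a_i\in Y'$, she was not satisfied in Step 3, so $v_i(g_i)+v_i(g_{2u-i+1})<\frac{5}{9}$.

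The naive per-agent bounds are as follows.
\begin{itemize}
\item A bag-filled bundle $B_j$ of $a_j\in Y'$ satisfies $v_i(B_j)<\frac{5}{9}+v_i(e)$, where $e$ is the last item added; the tie-breaking in favour of $Y'$ ensures $a_i$ did not already want $B_j\setminus\{e\}$. The item $e$ lies beyond position $2u-k$, so $v_i(e)\le v_i(g_{2u-i+1})$, which is small.
\item For $a_j\in X'$ the bag stopped at $X$-threshold $\frac{13}{36}$ while $a_i$ still valued it below $\frac{5}{9}$.
\item For $a_j\in X\setminus X'$ and $a_j\in Y\setminus Y'$ the bundle is one or two top items plus possibly a base worth at most $\frac{5}{18}$ to $a_i$ (since $a_i$ is unmatched in $\Psi$).
\end{itemize}
The difficulty is that top items may be worth up to $\frac{5}{9}$ each to $a_i$, so these crude bounds can exceed $1$ per agent. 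We therefore split according to why $a_i\notin Z$.

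\textbf{Case (a): $a_i$ has fewer than $\lceil 4u/3\rceil$ medium items in $M_R[1:2u-k]$.} The items handed out in Step 3 (at most $2u-k$ whole items, plus bases) contain few medium items for $a_i$. All later items are below $\frac{7}{18}$, so the overshoot in bag-filling is below $\frac{7}{18}$. We then bound $\sum_j v_i(\text{bundle}_j)$ by a counting argument: each medium item costs at most $\frac{5}{9}$, each non-medium at most $\frac{7}{18}$, and bag-filled bundles cost at most $\frac{5}{9}+\frac{7}{18}$ minus slack. Averaged over agents, this stays at most $1$ except for one slot. Agents in $Z$ take two items from the top $\lceil 4u/3\rceil$, but $|Z|\le\lceil u/3\rceil$ controls their total.

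\textbf{Case (b): $a_i$ has at most 4 sharable items in $M_R[1:2u-k]$.} By maximality of $\Psi$ (an augmenting-type argument as in Claim~\ref{divide}), every item sharable for $a_i$ is either in $M_X$ or is one of at most 4 in the top part. Hence the divisible-value loss in Step 1 and Step 2 can be charged accurately. In particular, we refine Inequality~\eqref{eq:general:phase1:M'}: for $a_i$, the items she considers divisible and loses in Phase 1 cost her no more than what she would lose if they were indivisible. Her MMS partition restricted to indivisible-like items then gives a bound of the form $v_i(\text{others}) \le u-1$ with at most a constant additive error, absorbed by the four exceptional items.

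The main obstacle is Case (a)/(b) accounting for the agents in $Z$ and $X\setminus X'$, whose bundles may be worth close to $1.1$ to $a_i$. We expect to need a careful pairing: each over-heavy bundle is matched with an under-heavy bag-filled bundle, whose value to $a_i$ is at most about $\frac{13}{18}$ when its items are non-medium. The first attempt will be a weighted sum inequality using the positions $g_j, g_{2u-j+1}$ in the IDO order.
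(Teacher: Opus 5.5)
\textbf{Verdict: there is a genuine gap.} Your Case (b) has no working argument, and Case (b) is exactly where the difficulty lies.

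\textbf{Case (a).} This matches the paper's Case 1 in spirit, but the controlling count is not $|Z|$. Round 1 hands out $a_i$'s top $u$ items one per agent. So at most $\lfloor 4u/3\rfloor-u=\lfloor u/3\rfloor$ Step-3 bundles (of agents in $Y$ or $Z$) contain two medium items for $a_i$. Every other bundle is worth at most $\frac{17}{18}$ to her.

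\textbf{Why Case (b) fails as proposed.} With more than $\lfloor 4u/3\rfloor$ medium items but at most four sharable ones, nearly all of $a_i$'s top items are indivisible medium items. Nearly every Step-3 bundle can then contain two of them. For example, take $k=0$, $Z=\emptyset$, $a_i=a_1$, her top $2u-1$ items medium, and every other agent satisfied in Step 3. Your bounds allow $u-1$ bundles worth $\frac{10}{9}$ each, and $u-\frac{10}{9}(u-1)<\frac59$ for $u\ge 6$. This has three consequences for your plan.
\begin{itemize}
\item The excess to absorb grows linearly in $u$. So ``a constant additive error absorbed by the four exceptional items'' is a placeholder, not a proof step.
\item Your fallback of pairing each heavy bundle with a light bag-filled one fails outright in this example, because no bag-filled bundle exists.
\item Your rephrasing of Step 1 only recovers $v_i(M')\ge u$. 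You cannot get MMS-preservation from it, because treating her divisible items as indivisible can only lower $\MMS_i$.
\end{itemize}

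\textbf{The missing idea.} You need an argument that $\MMS_i=1$ itself forces extra value when $a_i$ has many indivisible medium items. In the example, nothing in the bag-filling guarantees $a_i$ her $\frac59$; only the MMS constraint does. The paper extracts this slack in two steps.
\begin{itemize}
\item It splits the Step-1 removals into two kinds. There are $p_i$ items indivisible for $a_i$, each removable with one agent without lowering $\MMS_i$, so $v_i(M\setminus P_i)\ge u+q_i$. There are $q_i$ items divisible for her, each taken fraction worth at most $\frac59$. Together these give $v_i(M')\ge u+\frac49 q_i$. This surplus pays for the $h_{\text{div}}+h_{\text{ind}}-u$ heavy bundles when $h_{\text{ind}}\le u+q_i$.
\item When $h_{\text{ind}}>h_i=u+q_i$, it uses Observation~\ref{ob:general:phase2}. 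This lets it delete the $2t$ smallest indivisible medium items together with $t=h_{\text{ind}}-h_i$ agents without lowering $\MMS_i$. The result is $v_i(M\setminus P_i\setminus T)\ge 2h_i-h_{\text{ind}}$, which covers the heavy bundles.
\end{itemize}

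\textbf{A minor point.} You do not need maximality of $\Psi$ to locate $a_i$'s sharable items in $M_R$. They lie in her top $2u-k$ simply because $v_i(g_r)<\frac5{18}$ for $r>2u-i+1$.
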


\begin{proof}
We first claim that $v_i(g_r)<\frac{5}{18}$ for all $r > 2u - i + 1$; otherwise, $v_i(A_i) \ge 2v_i(g_{2u - i + 1}) \ge \frac{5}{9}$, so agent $a_i$ would have been removed in Step 3. 
Let $a_j \ne a_i$ be any agent removed before $a_i$. Since $a_i \notin Z$, there are at most $\lfloor \frac{4u}{3} \rfloor$ medium items or at most $4$ shareable items in $M_R[1:2u-k]$ for $a_i$.

We consider two cases. In Case 1, there are at most $\lfloor \frac{4u}{3} \rfloor$ medium items for $a_i$ in $M_R[1:2u-k]$. In Case 2, there are more than $\lfloor \frac{4u}{3} \rfloor$ medium items but at most $4$ shareable ones. We begin with Case 1.

If $a_j \in X \setminus X'$, then
\[
v_i(\{g^b_j\} \cup A_j) = v_i(g^b_j) + v_i(g_j) \le \frac{1}{2} \cdot \frac{5}{9} + \frac{5}{9} = \frac{5}{6}.
\]

If $a_j \in (Y \setminus Y') \cup Z$, then the bundle $A_j$ may contain two medium items. By assumption, there are at most $\lfloor \frac{u}{3} \rfloor$ such bundles, and
\begin{align}\label{eq:general:phase2:higher}
v_i(A_j) = v_i(g_j) + v_i(g_{2u - j + 1}) \le 2 \cdot \frac{5}{9} = \frac{10}{9}.
\end{align}

For the remaining bundles,
\begin{align}\label{eq:general:phase2:lower}
v_i(A_j) = v_i(g_j) + v_i(g_{2u - j + 1}) \le \frac{5}{9} + \frac{7}{18} = \frac{17}{18}.
\end{align}
Note that in Inequality~\eqref{eq:general:phase2:higher}, agent $a_j$ may take a bundle valued above $1$ for $a_i$. We must ensure the number of such cases is small enough that sufficient value remains for $a_i$. In all other cases, including Inequality~\eqref{eq:general:phase2:lower}, the taken bundle is valued at most $\frac{17}{18}$ by $a_i$.

If $a_j \in X'$, then $v_i(B_j) < \frac{5}{9}$ since $a_i$ has higher priority to exit. Thus,
\[
v_i(\{g^b_j\} \cup B_j) = v_i(g^b_j) + v_i(B_j) \le \frac{1}{2} \cdot \frac{5}{9} + \frac{5}{9} = \frac{5}{6}.
\]

If $a_j \in Y'$, let $e$ be the last item added to $B_j$. Then $v_i(B_j \setminus \{e\}) < \frac{5}{9}$ and $v_i(e) < \frac{5}{18}$, so
\[
v_i(B_j) = v_i(B_j \setminus \{e\}) + v_i(e) \le \frac{5}{9} + \frac{5}{18} = \frac{5}{6}.
\]

Combining all cases, and noting that only bundles under Inequality~\eqref{eq:general:phase2:higher} may exceed value $1$, we obtain
\begin{align*}
\sum_{a_j \in (X \setminus X') \cup (Y \setminus Y') \cup Z} v_i(A_j) + \sum_{a_j \in X' \cup Y'} v_i(B_j)
&\le \frac{10}{9} \cdot \left\lfloor \frac{u}{3} \right\rfloor + \frac{17}{18} \cdot (u - 1 - \left\lfloor \frac{u}{3} \right\rfloor) \\
&= \frac{17}{18}(u - 1) + \frac{3}{18} \cdot \left\lfloor \frac{u}{3} \right\rfloor \\
&\le u - 1 + \frac{1}{18}.
\end{align*}
Hence, after the removal of $u - 1$ agents, agent $a_i$ values the remaining items at least $\frac{17}{18}$, which suffices.

\medskip

We next consider Case 2, where there are more than $\lfloor \frac{4u}{3} \rfloor$ medium items for $a_i$ in $M_R[1:2u-k]$ and at most $4$ of them are divisible (so at most four items are shareable). 
In this case, all previous inequalities except \eqref{eq:general:phase2:higher} still hold: each agent $a_j$ takes away a bundle valued no more than $\frac{17}{18}$ by $a_i$. 
However, Inequality~\eqref{eq:general:phase2:higher} may now occur more than $\lfloor \frac{u}{3} \rfloor$ times, so the prior bound does not apply directly.

This is the most technical part of this proof.
To proceed, we need to revisit Step 1 and refine the lower bound on $v_i(M')$. 
Suppose $p_i$ agents were allocated indivisible items of $a_i$ and $q_i$ agents were allocated divisible items. 
Then $n = u + p_i + q_i$. 
Let $P_i$ and $Q_i$ be the sets of items allocated to these $p_i$ and $q_i$ agents, respectively.
Since $P_i$ contains $p_i$ indivisible items, the removal of each such item with one agent does not decrease $\MMS_i$ (see e.g.,~\citep{DBLP:conf/sigecom/GhodsiHSSY18,DBLP:conf/soda/GargMT19}). Hence,
\begin{align}\label{eq:general:phase2:Y2:1}
    v_i(M \setminus P_i) \ge (n - p_i) \cdot \MMS_i = u + q_i.
\end{align}
For divisible items in $Q_i$, which may be fractionally allocated, each fractional bundle is valued at most $\frac{5}{9}$ by $a_i$. Thus,
\begin{align}\label{eq:general:phase2:Y2:2}
    v_i(Q_i) \le \frac{5}{9} \cdot q_i.
\end{align}
Combining \eqref{eq:general:phase2:Y2:1} and \eqref{eq:general:phase2:Y2:2}, we obtain a stronger version of Inequality~\eqref{eq:general:phase1:M'}:
\begin{align}\label{eq:general:phase1:M':revised1}
v_i(M') \ge u + q_i - \frac{5}{9}q_i = u + \frac{4}{9}q_i.
\end{align}

\begin{figure}
\centering
\includegraphics[width=0.5\textwidth]{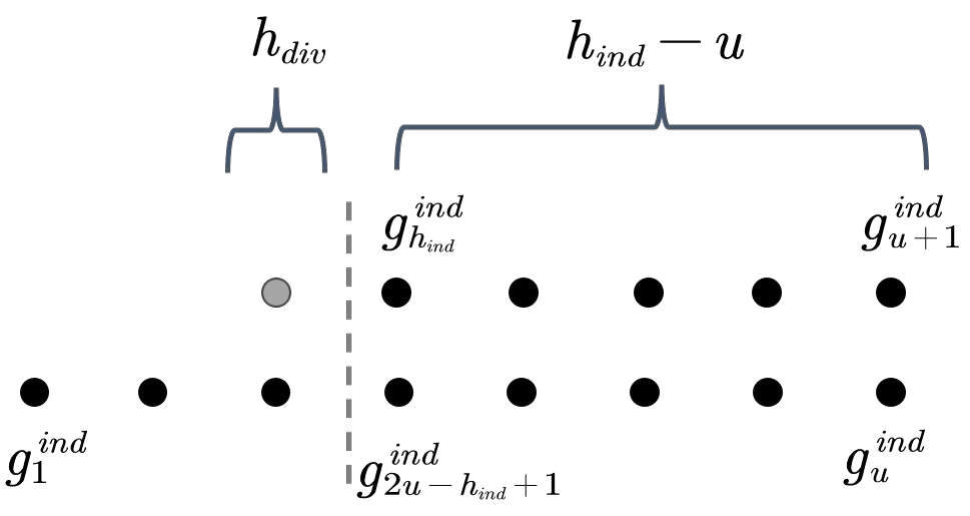}
\caption{The allocation of medium items of $a_i$}
\label{fig:selective}
\end{figure}

Next, let $L$ be the set of medium items in $M_R[1:2u-k]$, and let $L_i^{\text{DIV}}$ and $L_i^{\text{IND}}$ denote the divisible and indivisible items in $L$ for $a_i$. 
Set $h_{\text{div}} = |L_i^{\text{DIV}}|$, $h_{\text{ind}} = |L_i^{\text{IND}}|$, and note $|L| = h_{\text{div}} + h_{\text{ind}} \le 2u - k$.

For ease of presentation, we further label the items so that $L_i^{\text{DIV}} = \{g_1^{\text{div}}, \ldots, g_{h_{\text{div}}}^{\text{div}}\}$ and $L_i^{\text{IND}} = \{g_1^{\text{ind}}, \ldots, g_{h_{\text{ind}}}^{\text{ind}}\}$ in decreasing value order for $a_i$.
If $h_{\text{ind}} \le u$, the analysis from Case 1 still applies, which gives
\[
\sum_{a_j \in (X \setminus X') \cup (Y \setminus Y') \cup Z} v_i(A_j) + \sum_{a_j \in X' \cup Y'} v_i(B_j) \le u - 1 + \frac{1}{18}.
\]
We omit the details here.

Now suppose $h_{\text{ind}} > u$. In Step 3, items $\{g_{2u - h_{\text{ind}} + 1}^{\text{ind}}, \ldots, g_{h_{\text{ind}}}^{\text{ind}}\}$ are assigned in bundles of two (see Figure~\ref{fig:selective}). 
There are $h_{\text{ind}} + h_{\text{div}} - u$ such bundles. 
These are the only bundles that may exceed value 1 for $a_i$.

If $h_{\text{ind}} \le u + q_i$, then at most $h_{\text{div}} + h_{\text{ind}} - u$ bundles are valued above 1. 
The remaining $u - (h_{\text{div}} + h_{\text{ind}} - u) - 1$ bundles are each valued at most $\frac{17}{18}$. 
Using \eqref{eq:general:phase1:M':revised1}, we have
\[
    \underbrace{v_i(M')}_{\ge u+ \frac{4}{9}q_i} -\frac{10}{9}\cdot\underbrace{(h_{\text{div}}+h_{\text{ind}}-u)}_{\text{\# bundles with value $\le\frac{10}{9}$}}-
    \frac{17}{18} \cdot\underbrace{(u-(h_{\text{div}}+h_{\text{ind}}-u)-1)}_{\text{\# bundles with value $\le\frac{17}{18}$}} \geq  \frac{5}{9}.
\]

If $h_{\text{ind}} > u + q_i$, define $h_i = u + q_i$ and consider an augmented instance with $h_i$ agents, including the $q_i$ agents who received divisible items.

We have the following observation.

\begin{observation}\label{ob:general:phase2}
    For any allocation of $n + t$ indivisible items among $n$ agents, with $1 \le t < n$, the MMS value is at most the value of the $(n - t)$-th largest item.
\end{observation}

\begin{proof}
    In any $n$-partition of the $n+t$ indivisible items, if the $(n-t)$-th largest item is the only item in the bundle containing it, we have the claim in the observation.
    If there are at least two items in that bundle, then there must be another item between the $(n-t+1)$-th and $n$-th largest that alone occupies a bundle, which implies an even lower minimum value of all bundles.
\end{proof}

By Observation~\ref{ob:general:phase2}, removing the smallest $2t$ items and $t$ agents preserves MMS values.

\begin{figure}
\centering
\includegraphics[width=0.6\textwidth]{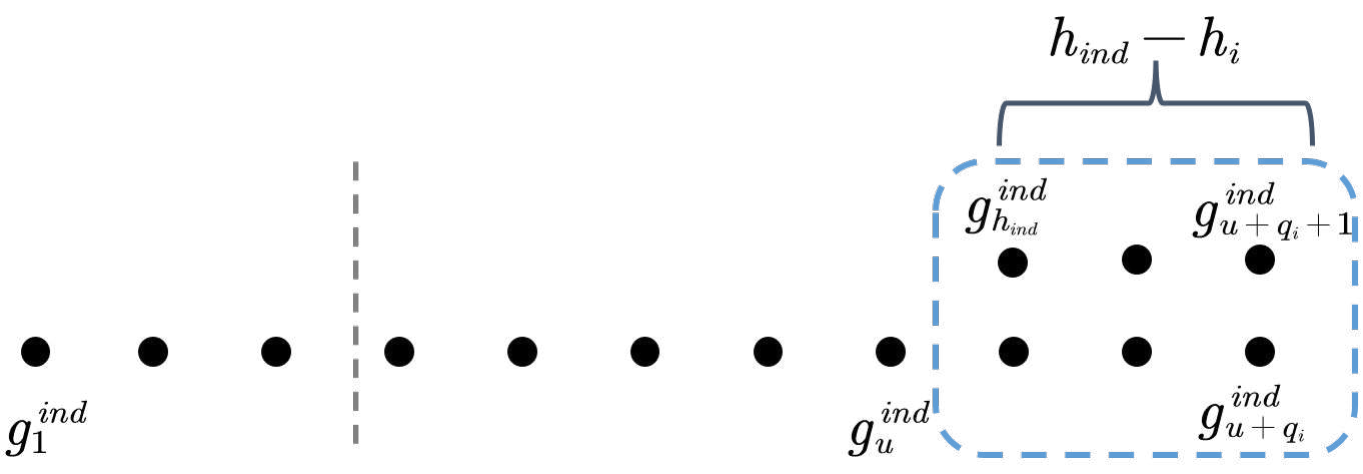}
\caption{The MMS partition of $a_i$ in the instance with $u+q_i$ agents}
\label{round}
\end{figure}

Apply this with $t = h_{\text{ind}} - h_i$ and remove $T$, the last $2(h_{\text{ind}} - h_i)$ indivisible items in value order. Then, by extending \eqref{eq:general:phase2:Y2:1}, we get
\begin{align}\label{eq:general:phase2:Y2:3}
v_i(M \setminus P_i \setminus T) \ge 2h_i - h_{\text{ind}}.
\end{align}

The remaining value for $a_i$ satisfies
\[
    v_i(M\setminus P_i \setminus T)-\underbrace{\frac{5}{9} \cdot q_i}_{\text{step 1}} - \frac{10}{9} \cdot \underbrace{(q_i+ h_{\text{div}})}_{\text{\# bundles with value $\le \frac{10}{9}$}} -\frac{17}{18}\cdot \underbrace{(u-(h_{\text{ind}}+h_{\text{div}}-u)-1)}_{\text{\# bundles with value $\le \frac{17}{18}$}}  \geq \frac{5}{9}.
\]

This completes the proof that $a_i$ receives a bundle worth at least $\frac{5}{9}$ in all cases.
\end{proof}

\subsection{Phase 3: Processing Critical Agents}

So far, under no conditions, agents in $X$ and $Y$ have received a bundle with value at least $\frac{5}{9}$. 
By Claim~\ref{claim:general:phase2:Z}, if $|Z| \le \lceil \frac{u}{3} \rceil$, then every agent in $Z$ can receive two medium items and be satisfied. 
However, this may not hold if $|Z| > \lceil \frac{u}{3} \rceil$.

In this phase, we show that each agent in $Z$ can still be allocated two medium items. 
Since the IDO assumption in Phase 2 changes the divisibility structure, we return to the original instance.
By definition, each agent $a_i \in Z$ has at least $\lceil \frac{4u}{3} \rceil$ medium items in $M_R^i[1:2u-k]$ and at least five of them are shareable.

Intuitively, Phase 1 maximizes the number of agents in $X$ who can share divisible items. 
Agents in $Z$ were not matched in that phase, so their medium items remain largely disjoint. 
We leverage this to construct a feasible allocation.

Recall the IDO reduction property shown in~\citep{DBLP:journals/teco/BarmanK20}: the allocation in the IDO instance induces a picking sequence in the original instance. 
If agents pick their favorite remaining items in this order, their values are preserved.
In Algorithm~\ref{alg:general}, the first $u - |Z|$ items are allocated to agents in $X \cup Y$, and the next $2|Z|$ items to agents in $Z$. 
Hence, in the original instance, agents in $X$ and $Y$ select first, after which the agents in $Z$ can receive in total $2|Z|$ items before $X$ and $Y$ select again.
It suffices to show that each $a_i \in Z$ can choose two medium items from $M_R^i[u - |Z| + 1, m]$ in the original instance.
Moreover, the values of the bundle allocated to agents in $X\cup Y$ are not affected.

\begin{lemma}
In the original instance, every agent $a_i \in Z$ can receive two medium items with total value at least $\frac{7}{9}$.
\end{lemma}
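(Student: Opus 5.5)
We set up the Phase 3 allocation as a matching/Hall-type problem inside the original (non-IDO) instance. Following the picking-sequence interpretation of the IDO reduction recalled above, agents in $X\cup Y$ pick first and remove at most $u-|Z|$ items from $M_R$. For each $a_i\in Z$, let $L_i$ be the set of her medium items among $M_R^i[1:2u-k]$ that survive this first round; then $|L_i|\ge \lceil\frac{4u}{3}\rceil-(u-|Z|)\ge |Z|+\lceil \frac u3\rceil$. We want to give each $a_i\in Z$ two distinct items of $L_i$. Since any two medium items are worth at least $\frac{7}{9}$, this proves the lemma.

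This is a bipartite $b$-matching in which each agent of $Z$ has capacity $2$. By Hall's theorem applied to the graph with each agent duplicated, it suffices to show
\[
\Bigl|\bigcup_{a_i\in S} L_i\Bigr|\ge 2|S|\quad\text{for every } S\subseteq Z .
\]
If $|S|\le \frac{|Z|+\lceil u/3\rceil}{2}$, this follows at once from $|L_i|\ge |Z|+\lceil\frac u3\rceil$. So the difficulty is confined to large sets $S$, which is exactly the regime $|Z|>\lceil \frac u3\rceil$ being treated here.

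For large $S$ we use the maximality of the 2-to-1 matching $\Psi$ from Step 2, together with the requirement that each agent of $Z$ has at least five sharable items in $M_R^i[1:2u-k]$. By the argument of Claim~\ref{divide}, adapted to this setting, no item of $M_R$ can be sharable for two agents of $Z$: otherwise that item together with the two agents would augment $\Psi$, since neither $Z$-agent is matched and the item is not in $M_X$. Hence the sharable items of distinct $Z$-agents are pairwise disjoint, which gives at least $5|S|$ distinct items across $S$. Some of these may have been removed by the first-round picks of $X\cup Y$. To control this, we count how many of them are lost. A counting argument in the style of Lemma~\ref{lem:unary:case2.3} should show that the union of the $L_i$ still has at least $2|S|$ items: it combines the at most $u-|Z|$ items taken in round 1, the pairwise-disjoint sharable sets of size $\ge 5$, and the bound $|L_i|\ge |Z|+\lceil u/3\rceil$.

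The main obstacle is this Hall-condition step for large $S$. The first-round picks could concentrate on the sharable items of a few $Z$-agents, so the count must be done globally over $S$ rather than agent by agent. The natural combined bound is
\[
\Bigl|\bigcup_{a_i\in S} L_i\Bigr|\ge \max\Bigl\{|Z|+\lceil\tfrac u3\rceil,\ 5|S|-(u-|Z|)\Bigr\}.
\]
We then check that this is at least $2|S|$ whenever $|S|\le|Z|$ and $|Z|>\lceil u/3\rceil$. If this falls short in a corner case, we will also choose the order within $Z$, which Phase 3 is allowed to specify, so that agents with fewer surviving medium items pick first. In addition, we will exploit the fact that every item outside the top $2u-k$ that is taken in round 1 is worth less than an item inside it, so round 1 can remove at most $u-|Z|$ items from $\bigcup L_i$. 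Finally, since the $Z$-agents only pick among items after the first $u-|Z|$ picks and before $X\cup Y$ pick again, the values of the bundles given to $X\cup Y$ are unchanged, as stated before the lemma.
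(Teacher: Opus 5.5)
Your proposal is correct, and it takes a different route from the paper. The paper has the agents of $Z$ pick favorites in an order $\sigma$ and then in reverse. It shows that the last $\lceil u/3\rceil$ agents obtain two medium items and all others at least one. It then repairs the agents of $Z_{2,1}$ (those left with a single medium item) by pairing each with a distinct agent of $Z_3$ that still has an unused medium item, and justifies this pairing by a counting inequality built on the disjointness of sharable items. You instead obtain the whole assignment at once from Hall's condition for a $2$-matching between $Z$ and the medium items surviving the $u-|Z|$ first-round picks of $X\cup Y$. The per-agent bound $|L_i|\ge |Z|+\lceil u/3\rceil$ handles small $S$, and the disjoint-sharable-items bound $5|S|-(u-|Z|)$ handles large $S$. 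The check you announce but do not carry out does go through. So the hedges can be dropped: the counting argument ``in the style of'' Lemma~\ref{lem:unary:case2.3}, the reordering of $Z$, and the remark about items outside the top $2u-k$ (the $u-|Z|$ bound is simply the number of first-round picks). Indeed, if $2|S|>|Z|+\lceil u/3\rceil$, then $|S|\le |Z|$ forces $|Z|>\lceil u/3\rceil\ge u/3$, whence
\[
3|S|+|Z| > \tfrac{3}{2}\bigl(|Z|+\lceil \tfrac{u}{3}\rceil\bigr)+|Z| \ge \tfrac{5}{2}|Z|+\tfrac{u}{2} > u,
\]
that is, $5|S|-(u-|Z|)>2|S|$. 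Your route buys three things. First, it is a short argument in which every inequality is explicit, whereas the paper's key inequality $5z_1+5(\lceil u/3\rceil+z_3-z_2)\le\lceil 4u/3\rceil+z_3$ and its final contradiction are stated without derivation. Second, it gives a polynomial-time construction via a flow computation. Third, it shows visible slack: it needs no case split at $|Z|=\lceil u/3\rceil$, and it would already work with four sharable items per critical agent. The paper's route buys an explicit greedy picking rule. Your matched assignment is equally compatible with the picking-sequence reduction, since the guarantee for agents of $X\cup Y$ depends only on how many items are gone before each of their picks, not on which ones.
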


\begin{proof}
Fix an arbitrary order $\sigma$ of agents in $Z$. 
Each agent selects one favorite available item in order $\sigma$, and a second one in the reverse order.

Let $Z_1$ be the last $\lceil \frac{u}{3} \rceil$ agents in $\sigma$ and $Z_2 = Z \setminus Z_1$. 
By definition, each $a_i \in Z$ has at least $\lceil \frac{4u}{3} \rceil$ medium items. 
Thus, all agents in $Z_1$ get two medium items, and each agent in $Z_2$ gets at least one.

Let $Z_{2,1} \subseteq Z_2$ be agents who receive only one medium item, and $Z_{2,2} = Z_2 \setminus Z_{2,1}$ those who receive two. 
Reorder $Z_2$ in $\sigma$ so that agents in $Z_{2,1}$ appear first.

Let $Z_3 \subseteq Z_1 \cup Z_{2,2}$ be agents who have additional medium items beyond the ones used so far. 
Let $z_1 = |Z_{2,1}|$, $z_2 = |Z_3|$, and $z_3 = |Z_{2,2}|$.

Since no item is shareable to more than one agent, we have:
\[
5z_1 + 5(\lceil \tfrac{u}{3} \rceil + z_3 - z_2) \le \lceil \tfrac{4u}{3} \rceil + z_3,
\]
which simplifies to:
\[
z_1 + \tfrac{u}{15} + \tfrac{4}{5}z_3 \le z_2.
\]
We now pair each $a_i \in Z_{2,1}$ with a distinct agent $a_i' \in Z_3$ who has a medium item that $a_i$ values, and who still has an unused medium item available. 
This ensures each $a_i$ in $Z_{2,1}$ obtains a second medium item.

To see that such pairing is feasible, suppose not. 
Then some $a_i \in Z_{2,1}$ sees at least $\lceil \tfrac{u}{15} \rceil + \tfrac{4}{5}z_3$ agents exit without contributing a usable medium item to her. 
This implies at least $\lceil \tfrac{2}{15}z_3 \rceil + \tfrac{8}{5}z_3$ such items are unavailable to $a_i$.
But this would contradict the assumption that $a_i$ has $\lceil \tfrac{4u}{3} \rceil$ medium items, as
\[
\lceil \tfrac{4u}{3} \rceil + z_3 - \left(\lceil \tfrac{2}{15}u \rceil + \tfrac{8}{5}z_3 \right) \le \lceil \tfrac{4u}{3} \rceil
\]
would imply some unallocated medium item remains for $a_i$, a contradiction.

Thus, each $a_i \in Z$ can obtain two medium items.
\end{proof}

\section{Tight Approximation for Three and Four Agents}

In this section, we show that when there are at most four agents, a $\frac{2}{3}$-approximate MMS fair allocation can be computed in polynomial time.  
The detailed proofs and further discussion are provided in Appendix \ref{ape:3agent}.

It is noteworthy that for the cases of two and three agents, \citet{DBLP:journals/corr/abs-2310-00976} have proved the existence of a $\frac{2}{3}$-approximate MMS fair allocation.  
Our results achieve the same approximation guarantee through alternative methods, but our algorithms run in polynomial time without relying on procedures such as max-min partitioning.


\begin{theorem} \label{3.1}
Given any instance $I=\langle{N,M,\fv}\rangle$ with $|N|=3$, a $\frac{2}{3}$-MMS allocation can be computed in polynomial time.
\end{theorem}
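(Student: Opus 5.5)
We normalize $\MMS_i=1$ for $a_1,a_2,a_3$ (agents with $\MMS_i=0$ are trivially satisfied and can be given nothing). The $\frac23$ threshold is only used through the values $v_i(\cdot)$ and $\len_i(\cdot,\cdot)$, so everything below runs in polynomial time; the one exception is the final cut-and-choose step discussed at the end.

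\textbf{Single-agent reduction.} The plan is a chain of reductions that each satisfy one agent and leave a two-agent instance. For an agent $a_j$ and a removed bundle $S$ worth at most $1$ to $a_j$, we need $a_j$ to still be able to split $M\setminus S$ into two parts each worth at least $\frac23$ to her. This holds whenever $S$ consists only of divisible items for $a_j$, or only of indivisible items of $a_j$. In the indivisible case we use the standard monotonicity argument recalled in Phase 2 (removing one indivisible item of $a_j$ together with one agent does not lower $\MMS_j$).

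\textbf{Case analysis on large items.} Call $g$ \emph{large} for $a_i$ if $v_i(g)\ge\frac23$.
\begin{enumerate}[leftmargin=*]
\item \emph{Some agent has a large item.} Assign to that agent either $g$ or a fraction $\len_i(g,\frac23)$ of it.
 \begin{enumerate}[leftmargin=*]
 \item If $g$ is indivisible for some other agent, or is given fractionally to the agent valuing it most (as in Phase 1, Step 1), the other two agents lose value at most $1$ in the structured way above. Their residual MMS for two agents stays at least $\frac23$, and the two-agent subinstance is then solved.
 \item If $g$ is divisible for one remaining agent but was taken whole from an agent for whom it is indivisible, take only the $\frac23$-portion when the recipient sees it as divisible. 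Otherwise compare the recipient's alternatives: let the agent with the largest value for $g$ take it, so that the others lose value at most their own value of $g$.
 \end{enumerate}
\item \emph{No large items.} Run a bag-filling over the three agents with threshold $\frac23$, treating all items as whole. Each bag exceeds $\frac23$ by less than one item worth less than $\frac23$. Every bag removed before $a_i$ is worth less than $\frac43$ to her, which alone is too weak. So instead first peel off one agent by giving her a minimal bag with value at least $\frac23$ chosen as a prefix of her own sorted order, breaking ties toward the agent whose bag is cheapest for the others.
\end{enumerate}

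\textbf{Two-agent finish.} For the remaining pair use cut-and-choose. The cutter $a_j$ partitions the remaining items into two parts each worth at least $\frac23$ to her, using divisible-for-her items to balance. The chooser $a_k$ takes the part she prefers, and the cutter gets the other part.

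\textbf{Main obstacle.} The hard step is the subjective-divisibility case where the chooser sees a cut item as indivisible. If the cutter splits an item that $a_k$ regards as indivisible, $a_k$ loses it entirely. To handle this, I would make $a_k$ cut using only whole items plus at most one split item. A split occurs only if the balancing requires it, and we choose the split item to be divisible for the chooser whenever possible. If every balancing split is of an item indivisible for the other agent, I would show by a counting argument mirroring Example~\ref{example:upper} that one of the two agents can still be given two whole medium items, or a whole item of value at least $\frac23$. Making this last exchange argument precise, and computing the two-agent partition in polynomial time with a sort-and-fill procedure rather than an exact maximin partition, is the step I expect to require the most care.
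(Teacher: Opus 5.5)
There is a genuine gap, and it is exactly the step you defer. It cannot be closed inside the two-agent stage. Whether the last two agents can both reach $\frac23$ depends on which agent and which bundle you removed first, and your peeling rule ignores divisibility.

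Take five items $g_1,\dots,g_5$, each worth $\frac35$ to all three agents. Let $a_1$ see $g_1,g_2$ as divisible, $a_2$ see $g_2,g_3$, and $a_3$ see $g_4,g_5$. Every $\MMS_i=1$; for $a_1$ use $\{g_3,\frac23 g_1\},\{g_4,\frac13 g_1,\frac13 g_2\},\{g_5,\frac23 g_2\}$. There is no large item, every minimal prefix bag is a pair of whole items, and every such pair costs each other agent $\frac65$. So your tie-breaking rule sees only ties.

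If $a_3$ is peeled off with $\{g_1,g_2\}$, then $g_3,g_4,g_5$ are all indivisible for $a_1$. She needs two of them, and $a_2$ is left with at most $\frac35<\frac23$. No choice of cutter, split item, or exchange/counting argument helps, because this residual instance has no allocation giving both agents $\frac23$. Peeling $a_3$ with $\{g_4,g_5\}$ instead succeeds, since $a_1,a_2$ then share the divisible item $g_2$. More generally, nothing in your rule bounds what the peeled bag is worth to the other two agents. So the premise of your single-agent reduction (loss at most $1$) is never established in the no-large-item case.

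The missing idea is to make the final two-agent stage cut-free. Algorithm~\ref{alg:2agent} needs only that no item is large and that the two residual values are at least $2$ and $\frac43$; then whole-item bag-filling plus choosing works. All divisibility issues are handled earlier, in the three-agent reduction, by splitting $M$ into $G_1$ (items medium for someone) and $G_2$ (items small for all).
\begin{itemize}
\item If some agent reaches $\frac23$ with one item of $G_1$ plus items of $G_2$, bag-filling from that item costs everyone else less than $1$ (Lemma~\ref{3.2}).
\item Otherwise every agent needs two medium items and $|G_1|\ge 5$. When $|G_1|\ge 6$, a Hall-type assignment gives some agents two medium items each, while every other agent values what is removed at most $\frac23$ per removed agent (Lemma~\ref{3.4}).
\item When $|G_1|=5$ and two agents share a divisible medium item $g_d$, they cut-and-choose $g_d$ together with two other medium items. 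The third agent takes the rest, worth at least $3-3\cdot\frac23=1$ to her.
\item Otherwise, when $|G_1|=5$, some agent sees at least four medium items as indivisible. Her best medium item plus $G_2$ is then worth at least $\frac23$ by the MMS structure (Lemma~\ref{3.7}).
\end{itemize}
Thus the only cut is ever on an item divisible for both agents involved, and this is what your proposal lacks.

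Separately, normalizing $\MMS_i=1$ is not polynomial, since computing MMS is NP-hard even without divisible items. The paper instead uses the normalization of Definition~\ref{2.1}, which yields only $\MMS_i\le 1$ and $v_i(M)\ge n$ and never computes an MMS value.
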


The case of three agents allows a complete characterization of $\frac{2}{3}$-MMS allocations using a reduction-based algorithm. The algorithm first removes any agent who can be satisfied by a single large item (value at least $\frac{2}{3}$). If no such item exists, we partition items into medium and small ones and search for a subset (called a {\em nice} bundle) that satisfies one agent while ensuring that the residual instance admits a $\frac{2}{3}$-MMS allocation for the remaining two agents. The main challenge lies in handling instances with exactly five medium items. In this case, we exploit structural properties: if two agents share a divisible item, we apply a cut-and-choose routine; if one agent has four or more indivisible items, a tailored reduction yields a nice bundle. These reductions are exhaustive and polynomial-time computable, which leads to a constructive proof of Theorem \ref{3.1}.

\begin{theorem} \label{4agent}
Given any instance $I=\langle{N,M,\fv}\rangle$ with $|N|=4$, a $\frac{2}{3}$-MMS allocation can be computed in polynomial time.
\end{theorem}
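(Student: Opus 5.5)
The plan is to follow the reduction template that the paper uses for three agents, but with a larger catalogue of reductions and a repair mechanism. I would normalize $\MMS_i = 1$ for every agent, call an item \emph{large} for $a_i$ if $v_i(g)\ge \frac23$, and call a bundle $S$ \emph{nice} for $a_i$ if two conditions hold: $v_i(S)\ge\frac23$, and every other agent $a_j$ keeps enough value in $M\setminus S$ that the residual three-agent instance still admits a $\frac23$-MMS allocation with respect to the \emph{original} MMS values. Once a nice bundle is removed, Theorem~\ref{3.1} finishes the residual instance. The whole task is therefore to show that a nice bundle always exists for some agent and can be found in polynomial time.

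The argument would proceed in four steps.

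\textbf{Step 1 (large items).} If some agent values an item at $\frac23$ or more, give it to her and prune any fractional part down to the minimal length, as in Step~1 of Phase~1. For agents who see the item as indivisible, the standard fact that removing one indivisible item together with one agent does not decrease MMS applies. The difficulty is agents who see the item as divisible. For them I would bound the loss by at most $\frac23$ of their value and check, via the three-agent characterization, that the residual instance is still $\frac23$-solvable. Wherever this check fails, I would introduce a \emph{temporary reduction}: tentatively assign the item and keep it open for revision.

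\textbf{Step 2 (no large items).} Classify the remaining items as medium, with value in $[\frac13,\frac23)$, or small. Then case on the number of medium items in an agent's top part, in the style of the general-$n$ analysis.
\begin{itemize}
\item \emph{Few medium items.} Bag-filling with small items produces a bundle of value in $[\frac23,1)$ for its taker, which is nice by an additive counting argument.
\item \emph{Many medium items, with a shared divisible one.} If two agents share a divisible medium item, let them cut and choose on it together with small items, in the same way as the five-medium-item case for three agents.
\item \emph{Many medium items, mostly indivisible.} If an agent has many indivisible medium items, apply Observation~\ref{ob:general:phase2} to pair the smallest indivisible items and remove them with one agent without lowering MMS.
\end{itemize}

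\textbf{Step 3 (check-and-withdraw).} After each temporary reduction, test whether the resulting three-agent instance meets the structural conditions from the proof of Theorem~\ref{3.1}. If it does not, withdraw the reduction and try another agent as the recipient, or hand the contested divisible item to a different pair. I would argue by potential or exhaustion that this terminates within a constant number of withdrawals, since there are only four agents and boundedly many configurations of medium items.

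The main obstacle will be the analogue of the ``exactly five medium items'' case, now with roughly seven medium items among four agents. Here divisibility views are genuinely conflicting, as in Example~\ref{example:upper}: removing any single bundle may drop the MMS of an agent who planned to split an item the recipient treats as indivisible. My first attempt would be to use the 2-to-1 matching idea of Lemma~\ref{lem:unary:case2.3}. If two agents share a divisible medium item, split it between them and argue that each half plus one small bag reaches $\frac23$. If no two agents share one, the medium items are essentially owned by single agents, so the medium items behave indivisibly for all other agents, and an IDO-style picking sequence (Claim~\ref{claim:general:ido}) gives every agent two medium items or one medium item plus small items. I expect the remaining mixed configurations to require explicit enumeration, and that is where the check-and-withdraw machinery does the real work.

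Polynomial running time follows because every test uses only counts and sums of item values and never computes exact MMS partitions. The tightness of $\frac23$ comes from Example~\ref{example:upper} extended to $n=4$.
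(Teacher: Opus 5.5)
\textbf{There is a genuine gap: the check-and-withdraw step is not proved to succeed.} Your outline follows the paper's skeleton: normalization, large-item reduction, the medium/small split, bag-filling, cut-and-choose on a shared divisible item, a reduction for an agent with many indivisible medium items, and temporary reductions repaired by check-and-withdraw. But you leave the step that carries the proof unproven. In your Step~3 the issue is not termination, since exhausting a constant number of alternatives trivially terminates. The issue is showing that \emph{some} alternative always succeeds, and ``try another recipient or another pair'' gives no reason for that. The paper needs two ingredients here, and you would have to supply both.
\begin{itemize}
\item \emph{When the residual can fail.} The three-agent algorithm uses the MMS property only in its Case~4 (Lemma~\ref{3.7}); everywhere else it needs only that each remaining agent keeps value at least $3$. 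So after a temporary reduction the residual can fail only if Conditions (1)--(7) all hold:
\begin{itemize}
\item no large item and no bag-filling is possible;
\item there are exactly five medium items, each medium for all three remaining agents, and no divisible medium item is shared by two of them;
\item every remaining agent sees some medium item as divisible (otherwise her two smallest medium items form a nice bundle);
\item at least two items were removed (a single large item costs a divisible-viewer at most $\frac23$, so Claim~\ref{cla:rem1} breaks (1) or (2)).
\end{itemize}
\item \emph{How to repair it.} Lemma~\ref{lem:fix} exhibits, for each way the temporary bundle can arise, a replacement reduction violating one of these conditions. For instance, restart the bag with the \emph{unique} divisible medium item of some remaining agent. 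Then either she ends up seeing every remaining medium item as indivisible, or she takes the bag and it is worth less than $\frac23$ to the others.
\end{itemize}
Your ``check via the three-agent characterization that the residual is $\frac23$-solvable'' presupposes a test that does not exist. Theorem~\ref{3.1} only guarantees $\frac23$ of the residual's own MMS, which under subjective divisibility can be below the original MMS. Conditions (1)--(7) are exactly the polynomial-time surrogate for that test.

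\textbf{Several concrete sub-steps would also fail.}
\begin{itemize}
\item \emph{Bag-filling is not nice by counting.} Additive counting only shows that every other agent keeps value at least $3$. That does not protect her MMS once items she views as divisible are taken. This is the case the paper's withdrawal analysis treats as most intricate.
\item \emph{Half a shared item is too little.} When bag-filling is impossible, $v_i(G_2\cup\{g\})<\frac23$ for all $i$ and $g$. So half of a shared divisible medium item $g_d$ plus small items is worth less than $\frac23-\frac12 v_i(g_d)<\frac23$. The paper instead runs cut-and-choose on $\{g_d,g_1,g_2\}$, which is worth at least $\frac43$ to both agents by Lemma~\ref{4-2-1}. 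The other two agents then keep value at least $2$ for the two-agent routine.
\item \emph{The IDO picking cannot close the case.} In the same regime, once $|G_1|\ge 8$ is handled by the Hall-type reduction, exactly seven items are medium. So no picking sequence gives all four agents two medium items, and one medium item plus small items falls short of $\frac23$. This branch needs an MMS-based argument. The paper gives the agent with at least six indivisible medium items her top medium item together with $G_2$ (Lemma~\ref{4-4}), and this is again only a temporary bundle.
\item \emph{The polynomial-time claim does not follow.} Normalizing to $\MMS_i=1$ requires computing MMS, which is already NP-hard for two agents with indivisible items. The paper's Definition~\ref{2.1} instead normalizes by the proportional share after the standard single-item and pair reductions. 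That yields $\MMS_i\le 1$ and $v_i(M)\ge n$ without ever computing an MMS value.
\end{itemize}
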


The four-agent case builds upon the ideas developed for three agents, but introduces new challenges that require more delicate treatment. Our algorithm follows a similar reduction strategy by first allocating any large item (value at least $\frac{2}{3}$) to a suitable agent. When no such item exists, we classify items as medium or small and identify reducible structures: either a \emph{nice bundle} that can be allocated to one or two agents without compromising fairness, or a \emph{temporary bundle} whose allocation may later be revised. The critical difficulty arises when a temporary bundle is removed and the reduced instance fails to meet the preconditions for applying the three-agent algorithm. To address this, we develop a \emph{check-and-withdraw} mechanism: whenever a temporary bundle threatens to reduce an agent's MMS below the threshold, we revert the reduction and identify an alternative one that guarantees progress. This process relies on a careful analysis of the structure of medium items, including their divisibility relations and agent preferences. By systematically enumerating all cases based on the number and type of medium items, we show that either a valid $\frac{2}{3}$-MMS allocation exists or a safe reduction can be performed. Each step is computable in polynomial time. This yields a constructive proof for Theorem \ref{4agent}.

\section{Conclusion}

We study MMS allocations under subjective divisibility, where agents may have different perceptions on whether an item is divisible. 
Our results are threefold. First, we show that even with unary valuations, the tight MMS approximation is $\frac{2}{3}$. Second, we design a new algorithm that guarantees a $\frac{5}{9}$-approximate MMS allocation in the general setting with both subjective divisibility and heterogeneous valuations, which improves the previous $\frac{1}{2}$ bound. Third, we give polynomial-time algorithms that compute $\frac{2}{3}$-MMS allocations for three and four agents.

There are many problems left open in this paper. 
The main open problem is to determine the tight approximation ratio for five or more agents. Our results for three and four agents hint strongly that the answer might be $\frac23$. Our techniques may shed light on this problem. 
Although we have shown that the tight approximation for unary valuations is $\frac23$, the worst-case scenario arises when there are at most $2n$ items.
For cases with more than 
$2n$ items, a more refined analysis can yield an improved approximation ratio, and we left this for subsequent research.

In addition to the open problems mentioned above, it is also worthwhile to explore more advanced settings. 
These include considering non-additive valuations (such as submodular or fractionally subadditive valuations), handling asymmetric agents (where agents have different entitlements to items), and examining the analogous problem of chore allocation (where agents prefer to receive as few items as possible).


\bibliographystyle{ACM-Reference-Format}
\bibliography{ref}


\newpage
\appendix

\section*{Appendix}











\section{Tight Approximation for Three and Four Agents} 
\label{ape:3agent}

In this section, we prove Theorems \ref{3.1} and \ref{4agent}.




\subsection{Additional Notations}

We begin by normalizing the instances, which ensures that our algorithms run in polynomial time, and introduce some preliminary techniques to preprocess the instances.  

\begin{definition}[Normalization]\label{2.1} 
Given any instance $I=\langle{N,M,\fv}\rangle$, we normalize the valuation of each agent $a_i\in N$ according to the following procedures. 
\begin{itemize}
    \item Initialize $M'=M$ and $n'=n$. \item Repeat the following steps until none occurs: 
    \begin{itemize}
        \item If there is one indivisible item $g\in N$ such that $v_i(g)\geq \frac{v_i(M')}{n'}$, $M'\leftarrow M'\setminus \{g\}$ and $n'\leftarrow n'-1$.
    \item If the $n'$-th and $(n'+1)$-th largest indivisible items $g_j$ and $g_{j'}$ in $M'$ satisfies $v_i(g_j)+v_i(g_{j'})\geq \frac{v_i(M')}{n'}$, $M'\leftarrow M'\setminus \{g_j,g_{j'}\}$ and $n' \leftarrow n'-1$.
    \end{itemize}
    \item Rescale the valuation such that $v_i(M')=n'$.
\end{itemize}


\end{definition} 


\begin{observation}
         For any normalized instance $\langle{N,M,\fv}\rangle$ and agent $a_i\in N$, $\MMS_i\leq 1$ and $v_i(M) \geq n$.
\end{observation}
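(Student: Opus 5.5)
The plan is to prove the two inequalities separately, using two facts about the normalization procedure of Definition~\ref{2.1}. Fix an agent $a_i$, and write $M^{(0)}=M\supseteq M^{(1)}\supseteq\cdots\supseteq M^{(s)}=M'$ and $n=n^{(0)}>n^{(1)}>\cdots>n^{(s)}=n'$ for the successive states of the procedure. I read ``normalized instance'' as saying that $a_i$'s valuation is the rescaled one, so that $v_i(M')=n'$.

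\textbf{Fact 1 (the average never increases).} Consider one removal step from state $t$ to state $t+1$. Let $R$ be the removed item, or the removed pair of items. By the rule of the procedure, $v_i(R)\ge v_i(M^{(t)})/n^{(t)}$. Since $n^{(t+1)}=n^{(t)}-1$, this rearranges to
\[
\frac{v_i(M^{(t+1)})}{n^{(t+1)}}=\frac{v_i(M^{(t)})-v_i(R)}{n^{(t)}-1}\le \frac{v_i(M^{(t)})}{n^{(t)}}.
\]
So the averages $v_i(M^{(t)})/n^{(t)}$ are non-increasing in $t$. After rescaling, the final average is $v_i(M')/n'=1$. Hence every removed set $R$ satisfies $v_i(R)\ge v_i(M^{(t)})/n^{(t)}\ge 1$. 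There are $n-n'$ removal steps, and each removed set is disjoint from $M'$. Therefore
\[
v_i(M)\ge v_i(M')+(n-n')\cdot 1=n'+(n-n')=n,
\]
which is the second claim.

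\textbf{Fact 2 (MMS does not decrease under a step).} The claim is that $\MMS_i$ computed for $n^{(t)}$ bundles over $M^{(t)}$ is at most $\MMS_i$ computed for $n^{(t)}-1$ bundles over $M^{(t+1)}$. There are two cases.
\begin{itemize}
\item Single indivisible item $g$. Take an MMS partition of $M^{(t)}$. Discard the bundle containing $g$, and add the rest of that bundle (including any fractional pieces) to some other bundle. This gives an $(n^{(t)}-1)$-partition of $M^{(t+1)}$ in which every bundle still has value at least the old MMS.
\item Pair $g_j,g_{j'}$ (the $n^{(t)}$-th and $(n^{(t)}+1)$-th largest indivisible items). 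Consider the $n^{(t)}+1$ largest indivisible items. By pigeonhole, some bundle $B$ of the MMS partition contains two of them, say $h,h'$, and these are entire since they are indivisible. Exchange $h,h'$ with $g_j,g_{j'}$ wherever those lie. A bundle that gives up $g_j$ or $g_{j'}$ receives $h$ or $h'$, whose value is at least as large, so its value does not drop. Now $B$ contains $g_j,g_{j'}$. Delete $g_j,g_{j'}$, and merge the remainder of $B$ into another bundle. Every surviving bundle keeps value at least the old MMS.
\end{itemize}
Divisible items are never moved except wholesale with the bundle they sit in, so subjective divisibility causes no trouble here.

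Chaining Fact 2 over all steps shows that $\MMS_i$ for $n$ agents is at most $\MMS_i$ of the reduced instance $(M',n')$. The latter is at most $v_i(M')/n'=1$, because the minimum bundle value of any partition is at most the average. Rescaling is applied uniformly to all of $v_i$, so it preserves these inequalities. This gives $\MMS_i\le 1$, which is the first claim.

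The main point needing care is the exchange argument in the pair case. It must be checked that the exchanged items are entire, which holds because they are indivisible. It must also be checked that the index bookkeeping works when $g_j$ or $g_{j'}$ already lies in $B$; in that case fewer swaps are needed, and the argument is otherwise unchanged.
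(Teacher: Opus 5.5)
Your proof is correct, and it is essentially the argument the paper takes for granted, since the paper states this observation without proof. Each removal rule takes away at least the current average, so the averages $v_i(M^{(t)})/n^{(t)}$ are non-increasing and end at $1$, which gives $v_i(M)\ge n$; and both rules are the standard valid reductions (a single indivisible item, or the $n'$-th and $(n'{+}1)$-th largest indivisible items, the former also invoked in Phase~2), so $\MMS_i$ is at most the final average $1$.

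One small fix is needed. Partitions in this model may cut indivisible items, so before your pigeonhole and exchange steps you should note that merging all pieces of each indivisible item into one bundle loses nothing, because a proper piece is worth $0$. That is what actually justifies treating $g,h,h'$ as entire. You also tacitly need $n^{(t)}\ge 2$ at every removal step, which the paper's definition likewise leaves implicit.
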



\begin{algorithm}[h]
\caption{Allocating large items}
\label{alg:large:reduction}
\KwIn{Normalized instance $\langle{N, M, \fv}\rangle$.}
\KwOut{A partial allocation $(A_i)_{a_i \in N'}$ for $N'\subseteq N$.}
Initialize $A_i \leftarrow \emptyset$ for all $a_i\in N$, and $N'=\emptyset$.\\
\While{$|N|\geq 2$ and there exists one item $g\in M$ whose value is no less than $\frac{2}{3}$ for some agent in $N$}{
\eIf{item $g$ is indivisible for all agents $a_i$ with $v_i(g)\geq \frac{2}{3}$}
{
$A_i\leftarrow\{g\}$, $N' \leftarrow N'\cup \{a_i\}$, $M \leftarrow M\setminus \{g\}$ and $N \leftarrow N\setminus \{a_i\}$.\\
}{
$i\leftarrow \arg\max \{v_i(g) \,| \, g \text{ is divisible for } a_i\} $.\\
 $A_{i} \leftarrow \{\len_{i}(g, \frac{2}{3}) \cdot g\}$, $N' \leftarrow N'\cup \{a_i\}$, $M \leftarrow M\setminus \{\len_{i}(g, \frac{2}{3}) \cdot g\}$ and $N \leftarrow N\setminus \{a_i\}$.\\ 
}}
\If{$|N|= 1$}{
For agent $a_i\in N$, let $A_i\leftarrow M$.
}
\textbf{return} $N'$ and $(A_i)_{a_i\in N'}$.
\end{algorithm}


In the following, we only consider normalized instances.
We say an item $g$ is large to agent $a_i$ if $v_i(g)\geq \frac{2}{3}$, medium if $\frac{2}{3}>v_i(g)> \frac{1}{3}$, and small if $\frac{1}{3}\geq v_i(g)$.

\paragraph{Large Item Reduction}
Similar to Step 1 in Section \ref{sec:main}, we first process single large items; see Algorithm \ref{alg:large:reduction}.
Similar techniques have been used in \cite{DBLP:journals/aamas/BeiLLW21,DBLP:journals/corr/abs-2310-00976},
but we particularly take running time into consideration and our algorithm runs in polynomial time.
Consider the item that is large to some agents.
If the item is indivisible for all these agents, allocating this item alone to anyone does not decrease their MMS in the reduced instance. 
If it is divisible for some of them, we give a minimum fraction of $g$ to the one with maximum $v_i(g)$ among those seeing $g$ as divisible.
Then for any remaining agent $a_i$, $v_i(M\setminus \bigcup_{a_j\in N'}A_j)\geq |N\setminus N'|$. 
That is, allocating an indivisible or a minimum fraction of a divisible item does not decrease an agent's MMS or her average value in the reduced instance.  

\subsection{Two Agents}\label{sub:2agent}

We next introduce the notion of {\em nice bundle} in order to reduce an instance to a new one that guarantees a $\frac{2}{3}$-MMS allocation. 


\begin{definition}[Nice Bundle]\label{2.4} 
Given any normalized instance $\langle{N,M,\fv}\rangle$ with $|N|\geq 3$,  we call $M' \subseteq M$ a {\em nice bundle} if there exist two agents $\{a_1,a_2\}$ such that
\begin{itemize}
    \item there exists a $\frac{2}{3}$-MMS allocation among $N'= N\setminus\{a_1,a_2\}$ and $M'$.
    \item $v_{1}(M \setminus M') \geq 2$ and $ v_{2}(M \setminus M') \geq \frac{4}{3}$. 
\end{itemize}

\end{definition}

We call $N'$ the \textit{corresponding agent(s)} of $M'$.
We prove that after removing $N'$ and $M'$ from the instance, we can find a $\frac{2}{3}$-MMS allocation among the two agents $N\setminus N'$ and items $M\setminus M'$.


\begin{lemma}\label{2agent}
    Given a reduced normalized instance $\langle{N=\{a_1,a_2\}, M, \fv }\rangle$ where items in $M$ are not large for the two agents, with $v_1(M) \geq 2$ and $ v_{2}(M) \geq \frac{4}{3}$, Algorithm \ref{alg:2agent} returns a $\frac{2}{3}$-MMS allocation in polynomial time.
    



\end{lemma}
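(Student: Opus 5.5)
The target is an allocation giving each of $a_1,a_2$ value at least $\frac{2}{3}$; since the instance is normalized, $\MMS_i\le 1$, so this is $\frac23$-MMS. The approach is a cut-and-choose that exploits the asymmetry $v_1(M)\ge 2$ versus $v_2(M)\ge\frac43$. Agent $a_1$ cuts $M$ into two parts that are each worth at least $\frac23$ to her and that she can afford to be ``generous'' with. Agent $a_2$ then takes whichever part she values at least $\frac23$, and $a_1$ takes the other. Because $v_2(M)\ge\frac43$, one of the two parts is worth at least $\frac23$ to $a_2$ provided the cut does not destroy value for $a_2$, i.e., provided no item that is indivisible for $a_2$ gets split. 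So the plan is to make $a_1$ cut only along whole items, or along items that are divisible for $a_2$.

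Concretely, I would proceed in two steps.

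\textbf{Step 1 (whole-item cut).} Order the items arbitrarily and bag-fill for $a_2$: add whole items to a bag $B$ until $v_2(B)\ge\frac23$. Since no item is large for $a_2$, each item is worth less than $\frac23$ to her, so $v_2(B)<\frac43$ at the moment the bag is closed. If $v_2(M\setminus B)\ge\frac23$ as well, then $a_2$ is indifferent in the relevant sense, and we check $a_1$: give $a_1$ whichever of $B$ and $M\setminus B$ she prefers, and give the other to $a_2$. Since $v_1(M)\ge 2$, $a_1$'s preferred part is worth at least $1\ge\frac23$ to her. If instead $v_2(M\setminus B)<\frac23$, then $v_2(B)>\frac23$ with slack, and we continue as in Step 2.

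\textbf{Step 2 (minimal bag for $a_2$).} Better, run the bag-filling so that $a_2$ gets a minimal bag $B$ with $v_2(B)\ge\frac23$. The last item $e$ added satisfies $v_1(e)<\frac23$ (no item is large for $a_1$), and $v_1(B\setminus\{e\})$ is controlled by ordering the items by decreasing ratio $v_2/v_1$. The goal is $v_1(M\setminus B)\ge\frac23$. If this fails, then $v_1(B)>\frac43$, and we try to swap to a bag built by $a_1$ instead. Symmetrically, $a_1$ fills a minimal bag $C$ with $v_1(C)\ge\frac23$, so $v_1(C)<\frac43$ and $v_1(M\setminus C)$ is positive. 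What is needed then is $v_2(M\setminus C)\ge\frac23$.

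The key tool is a continuous cut using divisible items. If some item $g$ is divisible for $a_1$, she may cut it fractionally, and she should prefer to cut an item that is also divisible for $a_2$, or else give the fractional piece to herself so that $a_2$ only ever receives whole indivisible items. Concretely, order the items so that those indivisible for $a_2$ come first. Then $a_1$ takes a prefix of value exactly $\frac23$, cutting only a divisible-for-$a_1$ item at the boundary and keeping that cut piece herself. Agent $a_2$ receives the rest, losing for $a_2$ only the items in $a_1$'s prefix. Since that prefix is worth exactly $\frac23$ to $a_1$, a monotone ratio ordering bounds what it is worth to $a_2$.

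The main obstacle is exactly this case: the boundary item is indivisible for $a_1$, so $a_1$'s prefix overshoots $\frac23$ by up to nearly $\frac23$, while $a_2$'s remainder falls below $\frac23$. Here I would use the total-value slack. We have $v_1(M)\ge2$, $v_2(M)\ge\frac43$, and all item values are below $\frac23$. Order items by decreasing $v_1/v_2$ and let $a_1$ take the shortest prefix $P$ with $v_1(P)\ge\frac23$. The ratio ordering then gives
\[
\frac{v_2(P\setminus\{e\})}{v_2(M)}\le\frac{v_1(P\setminus\{e\})}{v_1(M)}<\frac13 ,
\]
so $v_2(P\setminus\{e\})<\frac49$. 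It remains to handle the single boundary item $e$. If $v_2(M\setminus P)<\frac23$, then $v_2(e)$ is large, and we instead give $e$ to $a_2$ together with the suffix, giving $a_1$ the prefix without $e$ plus further items from the end. The detailed case check on the value of $e$ to each agent is the step I expect to be delicate. All operations are sorting and prefix scans, so the procedure runs in polynomial time.
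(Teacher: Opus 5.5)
\textbf{There is a genuine gap: the proposal never reaches a proof.} The case $v_2(M\setminus B)<\frac{2}{3}$ in your Step 1 is never closed. The proposal ends by deferring ``the detailed case check on the value of $e$'' to fractional cuts and ratio orderings.

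The missing observation is one you nearly state yourself. In Step 2, $a_1$ builds a minimal bag $C$ of whole items with $v_1(C)\ge\frac{2}{3}$. Every item is worth less than $\frac{2}{3}$ to her, so $v_1(C)<\frac{4}{3}$. Hence $v_1(M\setminus C)>2-\frac{4}{3}=\frac{2}{3}$, which is much stronger than ``positive''. So both $C$ and $M\setminus C$ are acceptable to $a_1$, and you should not insist that $a_2$ receive $M\setminus C$: let $a_2$ choose. Because $C$ consists of whole items, nothing is destroyed for $a_2$, and $v_2(C)+v_2(M\setminus C)=v_2(M)\ge\frac{4}{3}$. Her preferred part is therefore worth at least $\frac{2}{3}$. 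Agent $a_1$ gets the other part, worth more than $\frac{2}{3}\ge\frac{2}{3}\MMS_1$.

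This is exactly the cutter/chooser plan from your first paragraph, with $a_1$ as the cutter. It is also precisely Algorithm~\ref{alg:2agent}: bag-fill whole items until $v_1(B)\in[\frac{2}{3},\frac{4}{3})$, then $a_2$ picks. The paper's proof is just these few lines.

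Your detour through Step 1 fails for a structural reason. Bag-filling for $a_2$ cannot work, because $a_2$ has no slack: $v_2(M)$ may equal $\frac{4}{3}$, so a bag worth up to $\frac{4}{3}$ to her can leave her nothing. Only $a_1$, with $v_1(M)\ge 2$, can serve as the cutter.

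The fractional-cut machinery is also unnecessary, since no item ever needs to be split. Your ratio-ordering estimate is misstated as well. Ordering by decreasing $v_1/v_2$ gives $v_2(P\setminus\{e\})<\frac{1}{3}v_2(M)$, not $v_2(P\setminus\{e\})<\frac{4}{9}$, because $v_2(M)$ may exceed $\frac{4}{3}$.

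Finally, even a completed version of your procedure would show existence through a different algorithm. The statement asks for the correctness of Algorithm~\ref{alg:2agent} specifically.
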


\begin{proof}
    Since every item is medium or small for agent $a_1$, we have $\frac{2}{3} \le v_1(B) < \frac{4}{3}$ and $v_1(B') > v_1(M) -  \frac{4}{3} \ge 2 - \frac{4}{3}=\frac{2}{3}$.
    Thus, no matter which bundle is left to her, agent $a_1$ is satisfied regarding $\frac{2}{3}\cdot\MMS_1$. 
    For agent $a_2$, since she selects the better one between $B$ and $B'$, her value is at least $\frac{1}{2}\cdot v_2(M)\geq \frac{1}{2} \cdot \frac{4}{3} = \frac{2}{3}$. So $(A_1, A_2)$ yields a $\frac{2}{3}$-MMS allocation.
\end{proof}

By Lemma \ref{2agent}, if we are able to find a nice bundle when the original instance contains more than two agents, then we directly obtain a $\frac{2}{3}$-MMS allocation.
Hence, we have the following corollary, which was proved in \cite{DBLP:journals/corr/abs-2310-00976}. The correctness of Algorithm \ref{alg:2agent} for two agents is established from Lemma \ref{2agent}.
We state it here with a polynomial implementation for completeness. 

\begin{lemma} \label{2a}
    Given any instance $I=\langle{N,M,\fv}\rangle$ with $|N|=2$, a $\frac{2}{3}$-MMS allocation can be computed in polynomial time.
\end{lemma}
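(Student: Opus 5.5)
The plan is to reduce to the setting of Lemma~\ref{2agent} and then run Algorithm~\ref{alg:2agent}, which is a cut-and-choose routine. Algorithm~\ref{alg:2agent} is not displayed above. I take it to have agent $a_1$ grow a bundle $B$ item by item until $v_1(B)\ge \frac{2}{3}$, with $B'=M\setminus B$, and then let $a_2$ choose between $B$ and $B'$. Only whole items are added to $B$, so no item is ever cut.

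\textbf{Step 1: normalization.} Normalize the instance by Definition~\ref{2.1}, so that $\MMS_i\le 1$ and $v_i(M)\ge 2$ for both agents. Both reduction operations in Definition~\ref{2.1} require an integer number of remaining agents $n'\ge 1$. If $n'$ drops to $1$ for agent $a_i$, then $a_i$ has a single indivisible item worth at least her MMS. In that case I keep the unnormalized scaling and simply record that item as large for her.

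\textbf{Step 2: large items.} Run Algorithm~\ref{alg:large:reduction}. If some item is large (value at least $\frac{2}{3}$) for some agent, one agent is satisfied with that item or the minimal fraction $\len_i(g,\frac{2}{3})\cdot g$ of it. The other agent then receives the entire remainder.

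To check that the remaining agent $a_j$ is satisfied, I use the invariant stated after Algorithm~\ref{alg:large:reduction}: removing one agent together with either an indivisible item or a minimum fraction of a divisible item leaves $v_j(M\setminus A_i)\ge |N\setminus N'|=1$. Since $\MMS_j\le 1$, this value is at least $\frac{2}{3}\MMS_j$.

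The divisible case needs care. The item goes to the agent maximizing $v_i(g)$ among those who see $g$ as divisible, and she takes only the fraction worth $\frac{2}{3}$ to her. So the other agent loses at most $\frac{2}{3}$ of value if she also sees $g$ as divisible, and the whole item if she sees it as indivisible. In the indivisible case, the standard fact that removing one indivisible item together with one agent does not decrease the MMS gives $v_j(M\setminus\{g\})\ge \MMS_j\cdot 1$.

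\textbf{Step 3: no large items.} Now every item is worth less than $\frac{2}{3}$ to both agents, and $v_1(M)\ge 2\ge \frac{4}{3}$ and $v_2(M)\ge 2\ge\frac{4}{3}$. These are exactly the hypotheses of Lemma~\ref{2agent}, so Algorithm~\ref{alg:2agent} returns a $\frac{2}{3}$-MMS allocation.

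\textbf{Running time.} Normalization removes at most $m$ items, each step of Algorithm~\ref{alg:large:reduction} is a linear scan, and bag-filling is linear. The procedure never computes an exact MMS partition; it only uses normalized totals. So it runs in polynomial time.

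\textbf{Main obstacle.} The hardest step is the interface between normalization and the large-item step in the divisible case. There I must verify carefully that $v_j(M\setminus A_i)\ge 1$ when $a_j$ sees $g$ as indivisible. I would argue this directly from the definition of MMS: in $a_j$'s MMS partition, the bundle containing $g$ has value at most $v_j(M)-\MMS_j$ outside $g$. If this turns out to fail, the fallback is to give the whole of $g$ to $a_i$ instead, which is still worth at least $\frac{2}{3}$ to her, and compare $a_j$'s remaining value against the bundle of her MMS partition that does not contain $g$.
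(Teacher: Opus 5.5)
Your proposal follows the paper's route. The paper obtains this lemma by normalizing, clearing large items with Algorithm~\ref{alg:large:reduction} (for two agents this ends with the other agent taking the whole remainder), and then applying Lemma~\ref{2agent} to Algorithm~\ref{alg:2agent}. That algorithm is exactly the bag-filling-and-choose procedure you guessed, and your checks in Steps~2--3 are correct.

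One correction to Step~1. When $n'$ drops to $1$, it does not follow that $a_i$ has a single indivisible item worth at least $\MMS_i$, because the drop can come from the pair rule. Take four indivisible items of value $1$: the 2nd and 3rd sum to $2=v_i(M)/2$, yet $\MMS_i=2$ and no single item reaches it. Keeping an unnormalized scale for such an agent would also break two things you rely on later. It loses the bounds $\MMS_i\le 1$ and $v_i(M)\ge 2$. It also breaks the cross-agent comparison of $v_i(g)$ behind ``the other agent loses at most $\frac{2}{3}$'' in the divisible large-item case.

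The special case is unnecessary. You can stop the reductions at $n'=1$ and rescale as usual. For two agents you can even skip Definition~\ref{2.1} and simply scale to $v_i(M)=2$; agents with $v_i(M)=0$ are trivially satisfied. This works because $\MMS_i\le v_i(M)/2$ always holds. The only other fact you need is $v_j(M\setminus\{g\})\ge \MMS_j$ when $g$ is indivisible to $a_j$.

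That fact also settles your ``main obstacle'' directly. In an MMS partition of $a_j$, pieces of $g$ are worthless to her, so you may assume $g$ lies wholly in one bundle. The other bundle then avoids $g$ and is worth at least $\MMS_j$, so no fallback is needed.
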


\begin{algorithm}[t]
\caption{Computing a $\frac{2}{3}$-MMS allocation for two agents}\label{alg:2agent}
\KwIn{A reduced normalized instance $\langle{N=\{a_1,a_2\}, M, \fv}\rangle$ with no large items, $v_1(M) \geq 2$, $ v_{2}(M) \geq \frac{4}{3}$.}
\KwOut{ A $\frac{2}{3}$-MMS allocation.}
 Initialize $A_i \leftarrow \emptyset$, for $a_i\in N$.\\


Repeatedly adding arbitrary entire items to bag $B$ (empty initially) until $v_{1}(B) \in [\frac{2}{3}, \frac{4}{3})$.\\
 Give agent $a_2$ her preferred bundle between $B$ and $M \setminus B$ as $A_2$, and agent $a_{1}$ gets the remaining as $A_{1}$.\\ 

{\textbf{return} $(A_1, A_2)$}
\end{algorithm}

\subsection{Three Agents}\label{sec:3agents}

\begin{algorithm}[tb] 
\caption{Computing a $\frac{2}{3}$-MMS allocation for three agents}\label{alg1}
\KwIn{A normalized instance $\langle{N=\{a_1,a_2,a_3\},M,\fv }\rangle$.}
\KwOut{A $\frac{2}{3}$-MMS allocation for three agents.\par}
Initialize $A_1,A_2,A_3 \leftarrow \emptyset$.\\
\eIf{there exists an item $g\in M$ is large for some agent in $N$}{
 $(N', (A_i)_{a_i \in N'}) \leftarrow \text{Algorithm \ref{alg:large:reduction}}  (N, M, \fv)$.  \\ \par
 $N\leftarrow N\setminus N'$, $M\leftarrow M \setminus \bigcup_{a_i\in N'}A_i$. }
{ Divide $M$ into $G_1$ and $G_2$ by definition.\\
\uIf{$\exists a_i \in N, g \in G_1, v_i(G_2) \geq \frac{2}{3}-v_i(g)$ \label{alg3:case1begin}}{
 $B \leftarrow \{g\}$. Keep adding items from $G_2$ into $B$ until $\exists a_j \in N, v_{j}(B) \ge \frac{2}{3}$. \label{alg3:case1bagfill}\\
 $A_{j} \leftarrow B, N \leftarrow N \setminus {\{a_j\}}, M \leftarrow M \setminus A_j$. \label{alg3:case1end}\\}
\uElseIf{$|G_1| \geq 6$ \label{alg3:case2begin}}{
$(N',(A_i)_{a_i\in N'})\leftarrow$\text{Algorithm \ref{alg:2n}}($N, G_1$).\\
$N\leftarrow N\setminus N'$, $M\leftarrow M \setminus \bigcup_{a_i\in N'}A_i$.\\

\uIf{$|N|=1$}{
Allocate all the items in $M$ to the agent $a_j$ in $N$ as $A_j\leftarrow M$.
}
\ElseIf{$|N|=0$}{
Allocate all the items in $M$ to one agent $a_k$ in $N'$ arbitrarily as $A_k\leftarrow A_k\cup M$.\label{alg3:case2}\\
}}
     
 \uElseIf{
 there exist two agents $a_j, a_k$ who have the same divisible item $g_d\in G_1$
 \label{alg3:case3begin}}
{ 
Take two other arbitrary items $g_1, g_2$ from $G_1 \setminus \{g_d\}$.
 Find $A_j$, $A_k$ among $\{g_1,g_d,g_2\}$ by \textit{cut-and-choose} strategy (Definition \ref{def:cac}).\\
 Denote the agent in $N\setminus \{a_j,a_k\}$ by $a_l$. $A_l \leftarrow M\setminus\{g_1,g_d,g_2\}$, $N\leftarrow\emptyset$. \label{alg3:case3end}\\
 }
\Else{\label{alg3:case4begin}
Find the agent $a_j$ who considers more than 3 items in $G_1$ indivisible.\\
$B \leftarrow \max_{g \in G_1}{v_j(g)} \cup G_2$. $A_j \leftarrow B, N \leftarrow N \setminus {\{a_j\}}, M \leftarrow M \setminus B$. \label{alg3:case4end}\\ }

}
 \If{$|N|=2$}{
  $(A_i)_{a_i \in N} \leftarrow$ \text{Algorithm \ref{alg:2agent}}($N, M$).\\
}
   {\textbf{return} $(A_1, A_2, A_3)$}
\end{algorithm}


        

        

In this subsection, we provide Algorithm \ref{alg1} for 3 agents.
The algorithm also works for the reduced instance, if the maximin share of the remaining agents doesn't decrease compared to the original instance.


If there exists any large item(s), Algorithm \ref{alg:large:reduction} either reduces the instance to an instance with 2 agents or computes a $\frac{2}{3}$-MMS allocation.
For the reduced instance with two agents, we directly use the method mentioned in subsection \ref{sub:2agent} to find a $\frac{2}{3}$-MMS allocation.
Hence, in the remaining part of the analysis, we only consider the instance with 3 agents that contains no item which is large for any agent in the instance.

The high-level idea of our algorithm is to focus on the set of medium items, namely $G_1$. 
We again define $G_1$ and also $G_2$ here:
$$G_1:=\{g \in M \, | \, \exists a_i \in N, v_i(g) > \frac{1}{3}  \}; $$
$$G_2:=\{g \in M \, | \, \forall a_i \in N, v_i(g) \leq \frac{1}{3}  \}.$$

To prove the correctness of Theorem \ref{3.1}, we only need to show there exists either a $\frac{2}{3}$-MMS allocation or a nice bundle in the four following cases:
\begin{itemize}
    \item Case 1: There exist $a_i \in N$ and $ g \in G_1$ such that $ v_i(G_2) \geq \frac{2}{3} - v_i(g)$. {(Lines \ref{alg3:case1begin}-\ref{alg3:case1end}, Lemma \ref{3.2})}

    \item Case 2: For each $a_i \in N$ and $g \in G_1$, $ v_i(G_2) < \frac{2}{3} - v_i(g);|G_1| \geq 6$. {(Lines \ref{alg3:case2begin}-\ref{alg3:case2}, Lemma \ref{3.4})}

    \item Case 3: For each $a_i \in N$ and $g \in G_1$, $ v_i(G_2) < \frac{2}{3}- v_i(g);|G_1|=5$ and there exist two agents $a_j, a_k$ who share the same divisible item $g_d \in G_1$ {(Lines \ref{alg3:case3begin}-\ref{alg3:case3end}, Lemma \ref{3.6})}

    \item Case 4: For each $a_i \in N$ and $g \in G_1$, $ v_i(G_2) < \frac{2}{3} - v_i(g);|G_1|=5$ and there exists an agent $a_j$ who has at least four indivisible items in $G_1$. {(Lines \ref{alg3:case4begin}-\ref{alg3:case4end}, Lemma \ref{3.7})}
\end{itemize}

It is noted that whenever there is no item in $G_1$ divisible to at least two agents, there exists an agent who has at least four indivisible items in $G_1$.
We will further know that these four cases contain all possible situations, which we postpone to Lemma \ref{3.3}.





For Case 1, when $\exists a_i \in N, g \in G_1, v_i(G_2) \geq \frac{2}{3} - v_i(g)$, our algorithm first initializes bag $B$ as $\{g\}$.
Then we keep adding items from $G_2$ to $B$ until there is one agent $a_j$ in $N$ who values $B$ at least $\frac{2}{3}$.
After that, we allocate $B$ to agent $a_j$ as $A_j$ and remove them from the instance. 
Hence, $v_j(A_j)\geq \frac{2}{3}\geq \frac{2}{3}\MMS_j$.

\begin{lemma} \label{3.2}
    If there exist $a_i \in N$ and $ g \in G_1$ such that $ v_i(G_2) \geq \frac{2}{3} - v_i(g)$,  then the set of items $B$ selected in Line \ref{alg3:case1begin}-\ref{alg3:case1end} is a nice bundle.
\end{lemma}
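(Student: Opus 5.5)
The plan is to verify the two conditions of Definition~\ref{2.4} directly. The corresponding agent is the single agent $a_j$ who receives $B$, and the two agents that remain are the ones in $N\setminus\{a_j\}$. We are in the branch where no item is large for any agent, so every item is worth less than $\frac{2}{3}$ to everyone. Because the instance is normalized, $v_k(M)\ge 3$ for every $a_k\in N$.

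\textbf{Step 1: the bag-filling terminates.} The bag starts as $B=\{g\}$ and grows by items of $G_2$. Adding all of $G_2$ would give $v_i(B)=v_i(g)+v_i(G_2)\ge \frac{2}{3}$ by the hypothesis of Case 1. So at some point an agent $a_j$ reaches $v_j(B)\ge\frac{2}{3}$, and the loop stops there.

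\textbf{Step 2: first condition of Definition~\ref{2.4}.} The sub-instance consists of $N'=\{a_j\}$ and $M'=B$. Giving $B$ to $a_j$ yields $v_j(B)\ge \frac{2}{3}\ge\frac{2}{3}\MMS_j$, using $\MMS_j\le 1$ from the normalization observation. Hence a $\frac{2}{3}$-MMS allocation of $M'$ among $N'$ exists.

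\textbf{Step 3: second condition.} This is the main point. Let $a_k$ be either agent other than $a_j$. I will show $v_k(B)<1$. Let $e$ be the last item placed in $B$.
\begin{itemize}
\item If $B=\{g\}$, then $v_k(B)<\frac{2}{3}$, since no item is large.
\item Otherwise $e\in G_2$. Before $e$ was added, the process had not stopped, so no agent, in particular $a_k$, valued $B\setminus\{e\}$ at $\frac{2}{3}$ or more. Thus $v_k(B\setminus\{e\})<\frac{2}{3}$. Also $v_k(e)\le\frac{1}{3}$ by the definition of $G_2$. Together these give $v_k(B)<1$.
\end{itemize}
This bound holds even if $a_k$ also reaches $\frac{2}{3}$ at the moment the bag stops and the tie is broken in favour of $a_j$.

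It follows that $v_k(M\setminus B)> 3-1=2$ for both remaining agents. Label them so that one plays the role of $a_1$ and the other of $a_2$ in Definition~\ref{2.4}. Then $v_1(M\setminus B)\ge 2$ and $v_2(M\setminus B)\ge 2\ge\frac{4}{3}$, as required.

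\textbf{Expected obstacle.} The only delicate point is the bound $v_k(B)<1$, which needs the one-item-at-a-time bag-filling argument above. One must also check that the reduced two-agent instance still satisfies the hypotheses of Lemma~\ref{2agent}: no item is large for either agent, which is inherited from the original instance. The lemma's value conditions are exactly what Step 3 establishes. With these in place, $B$ is a nice bundle.
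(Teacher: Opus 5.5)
Your proof is correct and follows the paper's argument. You take the agent $a_j$ who receives $B$ as the corresponding agent, note $v_j(B)\ge\frac{2}{3}\ge\frac{2}{3}\MMS_j$, and bound every other agent's value of $B$ below $1$ (an item of $G_1$ is worth less than $\frac{2}{3}$, an item of $G_2$ at most $\frac{1}{3}$), which gives $v_k(M\setminus B)\ge 3-1=2$. Your only addition is to spell out the termination of the bag-filling and the last-added-item step, which the paper's one-line bound leaves implicit.
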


    \begin{proof}
        Given that every agent in $N$ values any item in $G_1$
        less than $\frac{2}{3}$, and any item in $G_2$ no more than $\frac{1}{3}$, no agent in $N\setminus {a_i}$ values bundle $B$ at greater than $1$.
        Therefore, $\forall a_k \in N \setminus a_{j}, v_{k}(M \setminus B) \geq 3-1=2$. 
        According to Definition \ref{2.4}, it implies that bundle $B$ is a nice bundle with $N'=\{a_j\}$.
    \end{proof}

    
Before discussing next three cases, we first bound the number of items in $G_1$ if $\forall a_i \in N, g \in G_1, v_i(G_2) < \frac{2}{3}-v_i(g)$, which also means there are no other possible situations beyond Cases 1-4.

\begin{lemma} \label{3.3}
     If $v_i(G_2) < \frac{2}{3}-v_i(g)$ for each $a_i \in N$ and $g \in G_1$, $|G_1| \geq 5$.
\end{lemma}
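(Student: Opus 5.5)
We argue by counting a single agent's total value. We are in the normalized three-agent instance in which no item is large for anyone, so $v_i(M)\ge 3$ for every $a_i\in N$, and $v_i(g)<\frac{2}{3}$ for every item $g$. We suppose $|G_1|\le 4$ and derive a contradiction with the hypothesis $v_i(G_2)<\frac{2}{3}-v_i(g)$ for all $a_i\in N$ and $g\in G_1$.

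First we dispose of the degenerate case $G_1=\emptyset$. Then every item lies in $G_2$ and the hypothesis is vacuous, but Case 1 likewise cannot occur, so this case needs separate handling. Here all items are small for everyone. The natural move is to observe that the premise really concerns the situation after Case 1 has failed. If $G_1=\emptyset$, then $v_i(G_2)=v_i(M)\ge 3$, and Case 1's bag-filling argument applies directly with $B$ starting empty. I would remark that the lemma is intended for $G_1\neq\emptyset$, or interpret the hypothesis as including this situation, since then $v_i(G_2)\ge 3>\frac{2}{3}$ contradicts the spirit of the premise. So we assume $G_1\neq\emptyset$.

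Now fix any agent $a_i$ and any $g^*\in G_1$. By hypothesis,
\[
v_i(G_2)<\tfrac{2}{3}-v_i(g^*)\le \tfrac{2}{3}.
\]
Every item of $G_1$ has value less than $\frac{2}{3}$ for $a_i$ because no large items remain. Combining these with $v_i(M)\ge 3$ gives
\[
3\le v_i(M)=v_i(G_1)+v_i(G_2)<\tfrac{2}{3}|G_1|+\tfrac{2}{3}.
\]
Hence $|G_1|>\frac{7}{2}$, that is, $|G_1|\ge 4$.

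To exclude $|G_1|=4$ we sharpen the bound. Choose $g^*$ to be $a_i$'s most valuable item in $G_1$, and write $x=v_i(g^*)<\frac{2}{3}$. Then $v_i(G_1)\le 4x$ and $v_i(G_2)<\frac{2}{3}-x$. It follows that
\[
v_i(M)<4x+\tfrac{2}{3}-x=3x+\tfrac{2}{3}<2+\tfrac{2}{3}<3,
\]
which is a contradiction. The same argument with $|G_1|=t$ gives $v_i(M)<(t-1)x+\frac{2}{3}<\frac{2}{3}t$. This forces $t>\frac{9}{2}$, so $|G_1|\ge 5$.

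The main obstacle is simply the bookkeeping: we must use the maximal item as $g$ in the hypothesis so that the $-v_i(g)$ term cancels one copy of the per-item bound. The $G_1=\emptyset$ edge case is also the only spot needing care.
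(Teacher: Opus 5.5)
Your proof is correct and is essentially the paper's argument. The paper likewise combines $v_i(M)\ge 3$ with $v_i(G_2)+v_i(g)<\frac{2}{3}$ to get $v_i(G_1\setminus\{g\})>\frac{7}{3}$, and then uses that every item is worth less than $\frac{2}{3}$; choosing $g^*$ as the maximum item is harmless but unnecessary, since any $g\in G_1$ works, and your remark that the hypothesis is vacuous (and the statement false) when $G_1=\emptyset$ flags an implicit assumption the paper's proof also relies on without saying so.
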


     \begin{proof}
         It easily follows that $\forall a_i \in N, \forall g \in G_1, v_i(G_1) \geq \frac{7}{3}+v_i(g)$. 
         By definition, each item in $G_1$ is either medium or small to every agent $a_i$, we have $|G_1 \setminus \{g\}| \geq 4$, which implies $|G_1| \geq 5$.
     \end{proof}

For Case 2, if $\forall a_i \in N, \forall g \in G_1, v_i(G_2) < \frac{2}{3} -v_i(g)$ and $|G_1|\geq 6$, we use Algorithm \ref{alg:2n} below to find a partial $\frac{2}{3}$-MMS allocation of the instance and reduce the instance by removing the partial allocation.
For the reduced instance returned by Algorithm \ref{alg:2n}, if the reduced instance contains 2 agents, the set of items $\bigcup_{a_i\in N'}A_i$ is a nice bundle and its corresponding agent is also removed.
If the reduced instance contains 1 agent, we just allocate all the remaining items to her.
If the reduced instance contains no agent, it means that every agent has already got a bundle with value larger than $\frac{2}{3}$.
In either way a $\frac{2}{3}$-MMS allocation is finally achieved.

\begin{lemma}\label{3.4}
    If $v_i(G_2) < \frac{2}{3} -v_i(g)$ for each $a_i \in N$ and $g \in G_1$, and $|G_1|\geq 6$, we can find a set of agents $N'$ and their allocation $(A_i)_{a_i\in N'}$ by Algorithm \ref{alg:2n} with the following two properties hold:
     \begin{enumerate}[label=(\roman*)]
        \item Every agent $a_i$ in $N'$ are allocated two items $\{g_1,g_2\}$ from $G_1$ that $v_i(g_1)>\frac{1}{3}$ and $v_i(g_2)>\frac{1}{3}$; \label{g1:2}

        \item Every agent $a_j$ in $N\setminus N'$ values $\bigcup_{a_i\in N'}A_i$ no more than $\frac{2}{3}|N'|$. \label{g1:3}
    \end{enumerate}
\end{lemma}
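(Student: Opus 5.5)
Algorithm \ref{alg:2n} is not shown in the excerpt, so I will propose a concrete version of it and prove Lemma \ref{3.4} for that version. The idea is to treat $G_1$ like the first $2n$ items of the unary warm-up in Section \ref{sec:unary}. I would first record two facts. Every $g\in G_1$ satisfies $v_i(g)<\frac{2}{3}$ for all $a_i$, since large items have already been removed. Moreover, by the proof of Lemma \ref{3.3}, $v_i(G_1)>\frac{7}{3}+v_i(g)$ for every $a_i$ and every $g\in G_1$.

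The algorithm processes agents one at a time. The current agent $a_i$ looks for two items $g_1,g_2$ in the remaining part of $G_1$ that are both medium for her, i.e.\ $v_i(g_1),v_i(g_2)>\frac{1}{3}$. Among such pairs it picks one that is cheapest for the agents not yet served, for example the two least valuable items of $G_1$ that are medium for $a_i$. It accepts the pair only if every other remaining agent values it at most $\frac{2}{3}$ on average. If several agents qualify, the algorithm chooses the agent and pair that minimise the maximum value seen by the others. Because $|G_1|\ge 6$ and there are only three agents, at most six items are used, so the available items never run out.

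Property \ref{g1:2} then holds by construction: an agent enters $N'$ only when both of her items are medium for her.

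Property \ref{g1:3} is the real content. I would argue it by an exchange and counting argument in the spirit of Observation \ref{ob:general:phase2}. Suppose agent $a_j$ values the two items given to $a_i$ at more than $\frac{2}{3}$ in total. Since each item is below $\frac{2}{3}$ for $a_j$, at least one of the two items is medium for $a_j$. I would use this to redirect the pair, or to swap the order so that $a_j$ is served first and the bound holds for the others. This needs a case analysis on how many items of $G_1$ are medium for each agent.

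The key counting fact comes from $v_j(G_1)>\frac{7}{3}+\max_{g\in G_1}v_j(g)$ together with $v_j(g)<\frac{2}{3}$. These force each agent to have many medium items in $G_1$: at least $4$, and at least $5$ when her largest item is close to $\frac{2}{3}$. This density is what makes a cheap pair available. If no acceptable pair exists for any agent, the algorithm stops and returns the current $N'$, which may be empty or partial. Property \ref{g1:3} is preserved at every step because the averages add up over the pairs removed.

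The main obstacle is the case where one agent's two medium items are worth nearly $\frac{2}{3}$ each to another agent. To handle it, I would first sort $G_1$ in increasing order of value for the agent being served, then apply a pigeonhole argument over the six or more items. The goal is to show that some agent's two cheapest medium items total at most $\frac{2}{3}$ for each other agent, so that whichever agent is removed leaves the averages intact.
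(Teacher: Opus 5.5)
Your proposal has a genuine gap: the step you identify as the real content aims at a false statement, and the fallback that rescues (ii) makes the lemma vacuous.

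Your target claim is that some agent's two cheapest medium items are worth at most $\frac23$ to every other agent. It already fails for identical valuations. Let all three agents value the six items of $G_1=M$ at $0.6,0.6,0.45,0.45,0.45,0.45$; this is a normalized instance with no large item, $G_2=\emptyset$ and $|G_1|=6$. Every pair is worth at least $0.9$ to everyone, so your acceptance test never passes and your procedure returns $N'=\emptyset$. Yet each agent could be given two medium items.

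Your counting fact is also false. An item belongs to $G_1$ as soon as it is medium for \emph{some} agent. So nine items of $G_1$ worth $0.3$ each to $a_j$ satisfy $v_j(G_1)>\frac73+\max_g v_j(g)$, while none of them is medium for $a_j$.

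With the clause ``stop and return the current $N'$, which may be empty'', properties (i) and (ii) hold only trivially. Algorithm~\ref{alg1}, however, needs this step to remove at least one agent. The hypothesis $|G_1|\ge 6$, which is exactly what should force $N'\neq\emptyset$, plays no real role in your argument.

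The missing idea is that (ii) constrains only the agents left out of $N'$. You should not try to protect every not-yet-served agent against every pair. The paper's Algorithm~\ref{alg:2n} is a Hall-deficiency reduction: whenever a set $B$ of agents has $|S|<2|B|$ for $S=\{g\in G_1: v_i(g)>\frac13 \text{ for some } a_i\in B\}$, remove $B$ together with $S$.

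Afterwards every remaining item is worth at most $\frac13$ to every removed agent. So each pair given to $N'$ is worth at most $\frac23$ to them, which gives (ii) for free. The remaining agents satisfy the two-fold Hall condition, so a 2-to-1 matching yields (i).

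Nonemptiness also follows. Each removal deletes fewer than $2|B|$ items. Starting from $|G_1|\ge 2|N|=6$, the remaining items always number at least twice the remaining agents, and strictly more once anything has been removed. Every remaining item is medium for some remaining agent, so $N'$ cannot become empty.
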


\begin{proof}
In Algorithm \ref{alg:2n}, we first find the set of all the combinations of agents $\Phi$ in $N$.
Let $N'=N$.
For any set of agents $B$, we find the set of items $S=\{g\in G_1 \,|\, \exists a_i\in N,v_i(g)>\frac{1}{3}\}$.
If $|S|<2|B|$, we remove the agents $B$ and items $S$ from $N'$ and $G_1$.
After the removal, we have $|G_1|\geq 2|N'|$ and for every item $g$ in $G_1$, there exists an agent $a_i$ in $N'$ such that $v_i(g)>\frac{1}{3}$ and every agent $a_j$ in $N\setminus N'$ thinks that $v_j(g)\leq \frac{1}{3}$.
In addition, there is no subset $B'$ of $N'$ and the set of items $S'=\{g\in G_1|\exists a_i\in B',v_i(g)>\frac{1}{3}\}$ such that $|S'|<2 |B'|$ even $|B'|=1$.

For the remaining agents in $N'$ and items in $G_1$, we can allocate each agent $a_i\in N'$ two items $g_1,g_2$ in $G_1$ such that 
$v_i(g_1)>\frac{1}{3}$ and $v_i(g_2)>\frac{1}{3}$.
For every agent $a_j\in N\setminus N'$, $v_j(\bigcup_{a_i\in N'} A_i)\leq \frac{2}{3}|N'|$.
Hence, the lemma is proved.
\end{proof}

\begin{algorithm}[tb] 
 \caption{Reduction via $G_1$} \label{alg:2n}
\KwIn{Agents $N$, items $G_1=\{g_1, g_2,\ldots\} \,\, (|G_1| \geq 2|N|)$.}
\KwOut{A $\frac{2}{3}$-MMS allocation for agents $N'$.}

Find the set of all the combinations $\Phi$ of agents in $N$.\\
Initialize $N'\leftarrow N$. \\
\For{$B\in \Phi$}{
Find the set of items $S=\{g\in G_1 \,| \,\exists a_i\in B,v_i(g)>\frac{1}{3}\}$.\\
\If{$|S|<2|B|$}{
$N'\leftarrow N'\setminus B$, $G_1\leftarrow G_1\setminus S$.
}
}
Find an allocation $(A_i)_{a_i\in N'}$ such that every agent $a_i$ is allocated two items $g_1$ and $g_2$ in $G_1$ such that $v_i(g_1)>\frac{1}{3}$ and $v_(g_2)>\frac{1}{3}$.



\textbf{return} $N'$ and $(A_i)_{a_i \in N'}$
\end{algorithm}


 
In the following part, we focus on two cases when $|G_1|=5$: there exists an agent that has at least four indivisible goods in $G_1$, or each agent has at most three indivisible goods in $G_1$. 
Before that, we want to find some properties for Cases 3 and 4.

\begin{observation} \label{3.5}
    When $|G_1|=5$ and $v_i(G_2 \cup \{g\})<\frac{2}{3}$ for every item $g\in G_1$ and every agent $a_i\in N$, the following three properties hold:
    \begin{enumerate}[label=(\roman*)]
        \item $\forall \{g_1, g_2, g_3\} \subseteq G_1, \forall a_i \in N, v_i(\{g_1,g_2,g_3\}) \geq \frac{5}{3} $; \label{3.5-1}
        \item $\forall \{g_1, g_2\} \subseteq G_1, \forall a_i \in N, v_i(\{g_1,g_2\}) \geq 1$; \label{3.5-2}
        \item $\forall g \in G_1, \forall a_i \in N, v_i(g) \geq \frac{1}{3} $. \label{3.5-3}
    \end{enumerate} 
\end{observation}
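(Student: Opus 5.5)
We work in the setting of Observation~\ref{3.5}: the instance is normalized, so $v_i(M)\ge 3$ for every $a_i\in N$; there are no large items; and we are in the regime where Case~1 fails, i.e.\ $v_i(G_2)+v_i(g)<\frac23$ for every $a_i$ and every $g\in G_1$, with $|G_1|=5$. The key quantity is the value of $G_1$ after removing a few of its items. Write $M=G_1\cup G_2$. For any $S\subseteq G_1$, additivity gives $v_i(S)=v_i(M)-v_i(G_2)-v_i(G_1\setminus S)$. We therefore upper bound $v_i(G_2)$ together with the removed items of $G_1$ using the failed Case~1 condition, and upper bound any further removed items using the fact that no item is large, i.e.\ $v_i(g)<\frac23$.

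For \ref{3.5-1}, fix a triple $\{g_1,g_2,g_3\}\subseteq G_1$. Its complement in $G_1$ consists of two items $h_1,h_2$. We then have
\[
v_i(\{g_1,g_2,g_3\})=v_i(M)-\bigl(v_i(G_2)+v_i(h_1)\bigr)-v_i(h_2)>3-\tfrac23-\tfrac23=\tfrac53,
\]
where $v_i(G_2)+v_i(h_1)<\frac23$ by the failed Case~1 condition and $v_i(h_2)<\frac23$ because $h_2$ is not large.

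For \ref{3.5-2}, the complement of a pair $\{g_1,g_2\}$ in $G_1$ has three items $h_1,h_2,h_3$. Using the same pairing,
\[
v_i(\{g_1,g_2\})>3-\tfrac23-\tfrac23-\tfrac23=1.
\]

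For \ref{3.5-3}, the complement of a single item $g$ has four items. The same estimate now gives only
\[
v_i(g)>3-\tfrac23-3\cdot\tfrac23=\tfrac13,
\]
which is exactly the claimed bound (strict in fact). Alternatively, one can combine \ref{3.5-2} with the no-large-item bound: pairing $g$ with any other item $g'$ gives $v_i(g)\ge 1-v_i(g')>1-\frac23=\frac13$, which is simpler.

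The only real issue is bookkeeping. We must make sure the failed Case~1 condition is applied to exactly one item of $G_1$ in each estimate, since $G_2$ can be absorbed only once, and that $v_i(M)\ge 3$ holds for the current instance. The latter is guaranteed by normalization and the large-item reduction, as noted after Algorithm~\ref{alg:large:reduction}. Nothing deeper is needed; I expect \ref{3.5-3} to be the tightest of the three, and the derivation via \ref{3.5-2} avoids any slack concerns there.
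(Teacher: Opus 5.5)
Your proof is correct and is essentially the paper's argument: both start from $v_i(M)\ge 3$, use the failed Case~1 bound $v_i(G_2\cup\{g\})<\frac{2}{3}$ exactly once, and bound each further excluded item of $G_1$ by $\frac{2}{3}$. The paper chains the steps, deriving (ii) from (i) and (iii) from (ii) by removing one item of value less than $\frac{2}{3}$ at a time, which is the same as your alternative route for (iii).
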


    \begin{proof}
        Fix any agent $a_i$, any three items $\{g_1,g_2,g_3\}\subseteq G_1$ and any other item $g$ in $G_1\setminus\{g_1,g_2,g_3\}$. 
        Since $v_i(G_1 \setminus \{g\}) >\frac{7}{3}$ and $v_i(g)<\frac{2}{3}$, we have 
        $v_i(\{g_1,g_2,g_3\}) > \frac{5}{3}$,
        then property \ref{3.5-1} holds.
        
        Based on the property \ref{3.5-1}, $\forall a_i\in N$ and $\forall \{g_1,g_2\}\subseteq G_1$, we have $v_i(g_1,g_2)\geq \frac{5}{3}-\frac{2}{3}=1$.
        Also, $\forall a_i\in N$ and $\forall g\in G_1$, we have $v_i(g)\geq 1-\frac{2}{3}=\frac{1}{3}$. Thus, property \ref{3.5-2} and \ref{3.5-3} are implied.
    \end{proof}

    Based on Observation \ref{3.5}, every item in $G_1$ is medium to each agent in Cases 3 and 4.

 For Case 3 where $v_i(G_2) < \frac{2}{3} - v_i(g)$ for each $a_i \in N$ and $g \in G_1$, $|G_1|=5$ and there exist two agents $a_j, a_k$ who share the same divisible item $g_d \in G_1$, we arbitrarily choose two items from $G_1\setminus \{g_d\}$, denoted by $g_1$ and $g_2$.
 Via the modified classical cut-and-choose strategy, $\{g_1,g_d,g_2\}$ are divided into disjoint bundles $A_j$ and $A_k$ to allocate to $a_j$ and $a_k$, with all remaining items allocated to another agent $a_l$.
 
\begin{definition}[Cut-and-Choose Strategy]\label{def:cac}
    Given a set of three items $\{g_1,g_d,g_2\}$ and two agents $\{a_1,a_2\}$  such that $g_d$ is divisible for them and $v_i(g_p)\leq \frac{1}{2}v_i(\{g_1,g_d,g_2\})$ for $a_i\in \{a_1,a_2\}$, $g_p\in \{g_1,g_2\}$.

    The \textit{cut-and-choose} strategy picks one agent arbitrarily, say $a_1$, lets her cut the item $g_d$ once and divides the set of three items into two bundles of equal value for $a_1$.
    Then we assign agent $a_2$ her preferred bundle, denoted by $A_2$, with the other bundle $A_1$ allocated to $a_1$.
    It is straightforward to see that this strategy ensures $a_1$ and $a_2$ get a bundle at least $\frac{1}{2}v_i(\{g_1,g_d,g_2\}),i=1,2$.
\end{definition}

\begin{lemma} \label{3.6}
    When $ v_i(G_2) < \frac{2}{3} - v_i(g)$ for each $a_i \in N$ and $ g \in G_1$, $|G_1|=5$ and every agent has at most three indivisible items in $G_1$, a $\frac{2}{3}$-MMS allocation always exists.
\end{lemma}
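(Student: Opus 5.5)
We follow Case 3 of Algorithm~\ref{alg1}. By Lemma~\ref{3.3} and Observation~\ref{3.5}, every item of $G_1$ is medium for every agent, and for each agent every pair of items of $G_1$ has value at least $1$. The first step is a counting argument showing that some item $g_d\in G_1$ is divisible for two agents. Each agent has at most three indivisible items among the five in $G_1$, so each agent has at least two divisible items in $G_1$. The three agents therefore contribute at least $6>5$ (agent, divisible item) incidences on five items, and by pigeonhole some $g_d$ is divisible for two agents $a_j,a_k$. We then pick any two further items $g_1,g_2\in G_1\setminus\{g_d\}$, split $\{g_1,g_d,g_2\}$ between $a_j$ and $a_k$ by cut-and-choose (Definition~\ref{def:cac}), and give $A_l=M\setminus\{g_1,g_d,g_2\}$ to the third agent $a_l$.

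Next we check the precondition of Definition~\ref{def:cac}. For $a_i\in\{a_j,a_k\}$ and $g_p\in\{g_1,g_2\}$, Observation~\ref{3.5}\ref{3.5-2} applied to the other two items gives $v_i(\{g_1,g_d,g_2\}\setminus\{g_p\})\ge 1>\frac23> v_i(g_p)$. Hence $v_i(g_p)\le\frac12 v_i(\{g_1,g_d,g_2\})$. Since $g_d$ is divisible for the cutter, a single cut of $g_d$ produces two bundles of equal value; each bundle is one of $g_1,g_2$ plus a piece of $g_d$. The chooser takes the bundle she prefers, and because $g_d$ is divisible for her as well she values the fractional piece proportionally. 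By Observation~\ref{3.5}\ref{3.5-1}, each of $a_j,a_k$ then receives at least $\frac12\cdot\frac53=\frac56\ge\frac23\ge\frac23\MMS_i$, using $\MMS_i\le 1$ in a normalized instance.

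Finally we bound $a_l$'s value. Her bundle contains the two entire items of $G_1\setminus\{g_1,g_d,g_2\}$, and by Observation~\ref{3.5}\ref{3.5-2} she values this pair at least $1\ge\frac23\MMS_l$. She also receives $G_2$, which only helps.

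The step needing care is the pigeonhole step. It requires interpreting ``at most three indivisible items'' as applying to each agent, so that every agent has at least two divisible items in $G_1$. The lemma's hypothesis and the Case~3 condition in the algorithm must be reconciled this way; Case~4 covers the complementary situation in which some agent has four or more indivisible items.
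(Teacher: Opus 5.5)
Your proposal is correct and follows essentially the same route as the paper: find an item $g_d\in G_1$ that is divisible for two agents, split $\{g_1,g_d,g_2\}$ between them by cut-and-choose so that each gets at least $\frac12\cdot\frac53$, and give everything else to the third agent. The differences are minor. You make explicit the pigeonhole count ($3\cdot 2>5$) that the paper leaves as a ``without loss of generality''. You also bound the third agent's share using the pair property of Observation~\ref{3.5} on the two leftover items of $G_1$, whereas the paper uses $v_l(M)\ge 3$ and $v_l(\{g_1,g_d,g_2\})\le 3\cdot\frac23$; both bounds give value at least $1$.
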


    \begin{proof}
        Assume w.l.o.g. that agent $a_1,a_2$ regard $g_d$ as divisible item. 
        We then pick two arbitrary items in $G_1\setminus \{g_d\}$, denoted by $g_1,g_2$. 
        As seen in Observation \ref{3.5},
        $v_{i}(\{g_1,g_d,g_2\}) > \frac{5}{3}$ for $a_i\in\{a_1,a_2\}$.
         Recall that both agents values each good at most $\frac{2}{3}$, $v_i(g_p)<\frac{\{g_1,g_d,g_2\}}{2}$ for $a_i\in \{a_1,a_2\}$, $g_p\in \{g_1,g_2\}$. 


        Let $a_1$ cut and $a_2$ choose.
        It is clear that both $a_{1}$ and $a_{2}$ get a bundle with value at least $\frac{1}{2}*\frac{5}{3}>\frac{2}{3}$.
        For the remaining agent $a_3$, we have $v_3(\{g_1,g_d,g_2\})\leq 3*\frac{2}{3}=2$. 
        Therefore,  $v_3(M\setminus \{g_1,g_d,g_2\})\geq 3-2=1$ and we get a $\frac{2}{3}$-MMS allocation.
    \end{proof}

For Case 4 where $v_i(G_2) < \frac{2}{3} - v_i(g)$ for each $a_i \in N$ and $g \in G_1$, $|G_1|=5$ and there exists an agent $a_j$ who has at least four indivisible items in $G_1$,
the algorithm allocates $a_j$ with a set of items $\max_{g \in G_1}{v_i(g)} \cup G_2$ as $A_j$ and removes them from the instance.
Different from Lemma \ref{3.6}, the proof of Lemma \ref{3.7} considers the maximin share of agents.

\begin{lemma} \label{3.7}
   When $ v_i(G_2) < \frac{2}{3}- v_i(g)$ for each $a_i \in N$ and $ g \in G_1$, $|G_1|=5$ and there exists an agent $a_j$ who has at least four indivisible items in $G_1$, then $\max_{g \in G_1}{v_j(g)} \cup G_2$ is a nice bundle and $a_j$ is the corresponding agent.
\end{lemma}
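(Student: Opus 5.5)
We need to show $B=\{g^*\}\cup G_2$ with $g^*=\arg\max_{g\in G_1} v_j(g)$ is a nice bundle in the sense of Definition~\ref{2.4}, with corresponding agent $a_j$. Two things must hold. First, $a_j$ is $\frac23$-satisfied by $B$. Second, the other two agents $a_k,a_l$ value the remainder $G_1\setminus\{g^*\}$ at least $2$ and at least $\frac43$ respectively, so Lemma~\ref{2agent} applies to the residual two-agent instance.

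\textbf{The residual condition.} This part is easy and I would do it first. By Observation~\ref{3.5}\ref{3.5-1}, every agent $a_i$ values any three items of $G_1$ at least $\frac53$. By Observation~\ref{3.5}\ref{3.5-3}, every item of $G_1$ is medium for every agent. Hence for $a_i\in\{a_k,a_l\}$ the four items of $G_1\setminus\{g^*\}$ are worth at least $\frac53+\frac13=2$. This gives $v_i(M\setminus B)\ge 2\ge\frac43$ for both remaining agents. Alternatively, since $v_i(B)<\frac23+\frac23$ by the Case~4 hypothesis and $v_i(M)\ge 3$, the remainder is worth more than $\frac53$; the first argument gives the stronger bound $2$ needed for the agent playing the role of $a_1$ in Lemma~\ref{2agent}. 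No item is large, so Lemma~\ref{2agent} applies.

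\textbf{Satisfying $a_j$.} This is the main obstacle. The Case~4 hypothesis says $v_j(B)<\frac23$, so we cannot compare $B$ to $\frac23$ of the normalized value. Instead we must show $\MMS_j$ is small, namely $\MMS_j\le\frac32 v_j(B)$, using that at least four of the five items of $G_1$ are indivisible for $a_j$.

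\begin{itemize}
\item Take any 3-partition of $M$ for $a_j$. Among the $\ge4$ indivisible medium items, by pigeonhole some bundle $P$ contains at least two whole indivisible items $g,g'\in G_1$.
\item Consider the other two bundles. They contain the remaining items: at most three further items of $G_1$, at most one of which is divisible, plus $G_2$. We need their minimum to be at most $\frac32\bigl(v_j(g^*)+v_j(G_2)\bigr)$.
\item Let $h,h'$ be the two smallest items of $G_1$ outside $P$. Combining with the divisible item $g_d$ and $G_2$, the two other bundles share total value $v_j(G_1\setminus\{g,g'\})+v_j(G_2)$. Their minimum is at most half of this, i.e. at most $\frac12\bigl(3v_j(g^*)+v_j(G_2)\bigr)\le\frac32\bigl(v_j(g^*)+v_j(G_2)\bigr)$.
\end{itemize}

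This yields $\MMS_j\le \frac32 v_j(B)$, so $v_j(B)\ge\frac23\MMS_j$. The bound is crude; a refinement is only needed if $P$ could contain fewer than two indivisible items, which the pigeonhole excludes. With both properties established, $B$ is a nice bundle with corresponding agent $a_j$.

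One subtlety should be checked: the normalization makes $\MMS_j\le1$, but we must use the original MMS of $a_j$ in the current (possibly reduced) instance. Since MMS does not increase under the earlier reductions, the bound above on the current instance suffices.
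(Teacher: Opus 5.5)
Your proof is correct, and for the key step it takes a genuinely different route from the paper.

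\textbf{Residual condition.} Here you and the paper agree in substance, though you can shortcut it. The Case~4 hypothesis applied to $a_i$ and $g^*$ already gives $v_i(B)<\frac23$, not merely $<\frac43$. Hence $v_i(M\setminus B)>3-\frac23=\frac73$ for both other agents, which is the paper's one-line argument.

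\textbf{Satisfying $a_j$.} The paper stays in normalized units. It takes the third and fourth largest indivisible items $g_p,g_q$ of $a_j$. It then invokes the pair-removal rule of the normalization (or uses $v_j(g_p)+v_j(g_q)\le 1$) to get $v_j(M\setminus\{g_p,g_q\})\ge 2$. Finally it subtracts the two items of $G_1\setminus\{g^*,g_p,g_q\}$, each worth less than $\frac23$, to conclude $v_j(B)\ge\frac23\ge\frac23\MMS_j$.

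You instead bound $\MMS_j$ itself. In any 3-partition two of the four indivisible items share a bundle, so the other two bundles are worth at most $3v_j(g^*)+v_j(G_2)$ together. Hence $\MMS_j\le\frac32 v_j(g^*)+\frac12 v_j(G_2)\le\frac32 v_j(B)$.

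\textbf{Comparison.} Your bound for $a_j$ is scale-free and self-contained: it uses neither $\MMS_j\le 1$ nor $v_j(M)\ge 3$. It also avoids the tension you correctly noticed. The paper's conclusion $v_j(B)\ge\frac23$ cannot coexist with the hypothesis $v_j(B)<\frac23$, so read literally the paper's argument shows Case~4 is empty under its loosely specified normalization, rather than exhibiting the ratio. The paper's route, in return, keeps every bound on the common absolute scale (threshold $\frac23$, $\MMS_i\le 1$) used by its other nice-bundle lemmas.

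\textbf{A step to state.} You should say explicitly that each indivisible item may be assumed to lie wholly in one bundle of the partition, since merging its fragments never lowers the minimum. That is what licenses the pigeonhole.

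\textbf{Your closing remark.} The inequality there is backwards. To pass from $\MMS_j^{\mathrm{cur}}\le\frac32 v_j(B)$ to the original MMS you need $\MMS_j^{\mathrm{orig}}\le\MMS_j^{\mathrm{cur}}$, i.e.\ that the MMS does not \emph{decrease}; ``does not increase'' gives nothing.

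For the lemma as used in Theorem~\ref{3.1} this is moot, since Algorithm~\ref{alg1} reaches Case~4 only on the original normalized three-agent instance. But when Algorithm~\ref{alg1} is run on an instance reduced from four agents, non-decrease can genuinely fail after a temporary bundle is removed. That is exactly why the paper adds its remark after this lemma and introduces the check-and-withdraw mechanism.
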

   
\begin{proof}
    Without loss of generality, let $g_k=\max_{g \in G_1}{v_j(g)}$.
    We first prove that after allocating the bundle $g_k \cup G_2=A_j$ to agent $a_j$, other two agents have enough value for the four items in $G_1 \setminus \{g_k\}$. 
    Since $v_i(g_k \cup G_2)<\frac{2}{3}$ for every agent $i$, the value of the remaining items is at least $\frac{7}{3}>2$ for both remaining agents. 
    Therefore, to show $A_j$ is a nice bundle and its corresponding agent is $a_j$, we are left to show that bundle $A_j$ is large enough for agent $a_j$, i.e., $v_j(A_j) \geq \frac{2}{3}$.

    Since agent $a_j$ considers at least $4$ items in $G_1$ indivisible, for agent $a_j$, denote the third-largest indivisible item by $g_p$ and the fourth-largest indivisible item by $g_q$.
    If $v_j(g_p)+v_j(g_q)>1$, by the definition of normalized instance, $v_j(M\setminus \{g_p, g_q\}) \geq 2$.
    Otherwise, if $v_j(g_p)+v_j(g_q)\leq 1$, we still have $v_j(M\setminus \{g_p, g_q\}) \geq 2$.
    Since the value of the two items in $G_1\setminus \{g_k,g_p,g_q\}$ is at most $\frac{4}{3}$, we have $v_j(A_j)\geq 2-\frac{4}{3}=\frac{2}{3}$.

    Therefore, $A_j$ is a nice bundle and $a_j$ is the corresponding agent.
   \end{proof}

Based on the above discussion from Lemma \ref{3.1} to \ref{3.7}, 
Algorithm \ref{alg1} computes a $\frac{2}{3}$-MMS allocation for any instance with 3 agents via polynomial time, which completes the proof of Theorem \ref{3.1}. 

It is worth noting that only in Case 4, the property of maximin share is used.
Hence, if we want to find a $\frac{2}{3}$-MMS allocation in a reduced instance (the original instance contains at least 4 agents), we must ensure the MMS of each agent in the reduced instance does not decrease compared to that in the original instance in Case 4.

\subsection{Four Agents}\label{ape:4agent}

\begin{algorithm}[tb] 
\caption{Computing a $\frac{2}{3}$-MMS allocation for four agents}\label{alg2}
\KwIn{A normalized instance $\langle N=\{a_1,a_2,a_3,a_4\}, M,\fv \rangle$.}
\KwOut{A $\frac{2}{3}$-MMS allocation for four agents.\par}
Initialize $A_1,A_2,A_3,A_4 \leftarrow \emptyset$. \\
\eIf{there exists an item $g\in M$ is large for some agent in $N$ \label{alg3:case0begin}}{
$(N', (A_i)_{a_i \in N'}) \leftarrow \text{Algorithm \ref{alg:large:reduction}}(\langle{N, M, v}\rangle)$.\\
$N\leftarrow N\setminus N'$, $M\leftarrow M\setminus \bigcup_{a_i\in N'}A_i$. \\}
{
Divide $M$ into $G_1$ and $G_2$ by definition. \\

\uIf{$\exists a_i \in N, g \in G_1, v_i(G_2) \geq \frac{2}{3} - v_i(g)$ \label{alg4:case1start}}{
$B \leftarrow \{g\}$. Keep adding the items from $G_2$ to $B$ until $\exists a_j \in N, v_{j}(B) >\frac{2}{3}$. \\
$A_{j} \leftarrow B, N \leftarrow N \setminus {\{a_j\}}, M \leftarrow M \setminus B$. \label{alg4:case1end}
 }

\uElseIf{$|G_1| \geq 8$ \label{alg4:case2start}}{
    
$(N',(A_i)_{a_i\in N'}) \leftarrow$ Algorithm \ref{alg:2n}$(N,M)$. \label{alg4:case2}\\
$N\leftarrow N\setminus N'$, $M\leftarrow M\setminus \bigcup_{a_i\in N'}A_i$.\\

\uIf{$|N|=1$}{
Allocate all the items in $M$ to agent $a_j$ in $N$ as $A_j\leftarrow M$.
}
\ElseIf{$|N|=0$}{
Allocate all the items in $M$ to one agent $a_k$ in $N'$ arbitrarily as $A_k\leftarrow A_k\cup M$.\label{alg4:case2}\\
}
}
  \uElseIf { there exist two agents $a_j$ and $a_k$ in $N$ who have the same divisible item $g_d \in G_1$
  \label{alg4:case3start}}{
Take two arbitrary items $g_1, g_2$ from $G_1 \setminus \{g_d\}$.
Find $A_j$ and $A_k$ among $\{g_d,g_1,g_2\}$ by \textit{cut-and-choose} strategy (Definition \ref{def:cac}).\\
$M\leftarrow M\setminus \{g_d,g_1,g_2\}$, $N\leftarrow N\setminus\{a_j,a_k\}$. \label{alg4:case3end}\\
        }
     \Else{ \label{alg4:case4start}
     Find one agent $a_j\in N$ who has at least 6 indivisible items in $G_1$.\\
     $B \leftarrow \max_{g \in G_1}{v_j(g)} \cup G_2$.\\
 $A_j \leftarrow B, N \leftarrow N \setminus {a_j}, M \leftarrow M \setminus A_j$. \label{alg4:case4end}\\
 }
}
\uIf{$|N|=3$}{
 Update the instance and the allocation by \textit{check-and-withdraw} strategy. \label{alg4:withdraw}\\
$(A_i)_{a_i \in N} \leftarrow$ Algorithm \ref{alg1}($N, M$).\\ 
} 
\ElseIf{$|N|=2$}{      
$(A_i)_{a_i \in N} \leftarrow$ Algorithm \ref{alg:2agent}($N, M$).
}
{\textbf{return} $(A_1, A_2, A_3, A_4)$}
\end{algorithm}

In this section, we introduce the algorithm for four agents, whose structure is similar to the algorithm for three agents.

We still try to find a reduction to reduce the number of agents.
However, when the instance is reduced to one with three agents,
if the maximin share of some remaining agent decreases, Algorithm \ref{alg1} may not find a $\frac{2}{3}$-MMS allocation. 
Our core technique is a \textit{check-and-withdraw strategy} for some special cases to prevent it happening.

Similar to the definition of nice bundle, we define temporary bundle as follow:

\begin{definition}[Temporary Bundle] \label{temporary}
Given any normalized instance $I=\langle{N=\{a_1,a_2,a_3,a_4\},M}\rangle$, we call $M'\subseteq M$ a \textit{temporary bundle} if there exists one agent $N'=\{a_i\}$ such that
\begin{itemize}
    \item $v_i(M')\geq \frac{2}{3}\MMS_i$.
    \item $v_j(M\setminus M')\geq 3$ for any agent $a_j\in N\setminus \{a_i\}$.
\end{itemize}
We call $a_i$ the corresponding agent of $M'$.


\end{definition}


\subsubsection{Main Algorithm for Four Agents}
We use Algorithm \ref{alg2} to compute the $\frac{2}{3}$-MMS allocation for four agents.
At first, if there exists any large item(s) to some agent in $N$, we use Algorithm \ref{alg:large:reduction} to reduce the instance. It is noted that the item itself is a temporary bundle.




After that, we still divide the items $M$ into $G_1$ and $G_2$ defined in Subsection\ref{sec:3agents}.

Our intuitive idea is to prove there always exists either a nice bundle, a temporary bundle, or a $\frac{2}{3}$-MMS allocation in the following four cases:
\begin{itemize}
    \item  Case 1: There exist $a_i \in N$ and $ g \in G_1$ such that $ v_i(G_2) \geq \frac{2}{3} - v_i(g)$. {(Lines \ref{alg4:case1start}-\ref{alg4:case1end}, Lemma \ref{4-1})}

    \item  Case 2: For each $a_i \in N$ and $ g \in G_1, v_i(G_2) < \frac{2}{3}- v_i(g);|G_1| \geq 8$. {(Lines \ref{alg4:case2start}-\ref{alg4:case2}, Lemma \ref{4-2})}

    \item Case 3: For each $a_i \in N$ and $ g\in G_1, v_i(G_2) < \frac{2}{3}- v_i(g);|G_1|=7$ and there exist two agents $a_j$ and $a_k$ who share the same divisible item $g_d\in G_1$. {(Lines \ref{alg4:case3start}-\ref{alg4:case3end}, Lemma \ref{4-3})}

    \item Case 4: For each $a_i \in N$ and $g \in G_1, v_i(G_2) < \frac{2}{3} - v_i(g);|G_1|=7$ and there exists an agent who has at least six indivisible items in $G_1$. {(Lines \ref{alg4:case4start}-\ref{alg4:case4end}, Lemma \ref{4-4})}
    

\end{itemize}

It is noted that whenever $|G_1|=7$ and there exists no item in $G_1$ that is divisible for two agents, there exists one agent who has at least $6$ indivisible items.
The analysis of all four cases are highly analogous to those in the last section, so we won't give the complete proofs here.


\begin{lemma} [Case 1]\label{4-1}
    If there exist $a_i \in N$ and $g \in G_1$ such that $ v_i(G_2) \geq \frac{2}{3} - v_i(g)$, then the set of items $B$ selected in Line  \ref{alg4:case2start}-\ref{alg4:case2} is a temporary bundle.
\end{lemma}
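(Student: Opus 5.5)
The argument mirrors Lemma \ref{3.2}. Here $B$ is the bag built in Lines \ref{alg4:case1start}--\ref{alg4:case1end}; the line reference in the statement appears to be a typo. We must verify the two conditions of Definition \ref{temporary}:
\begin{itemize}
\item the corresponding agent $a_j$ (the one who first values $B$ at least $\frac{2}{3}$) gets at least $\frac{2}{3}\MMS_j$;
\item every other agent values $M\setminus B$ at least $3$.
\end{itemize}

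\textbf{The bag stops and satisfies $a_j$.} The bag starts as $\{g\}$ with $g\in G_1$, and items of $G_2$ are added one at a time. By hypothesis $v_i(\{g\}\cup G_2)\ge \frac{2}{3}$ for the witnessing agent $a_i$. So the filling process terminates before $G_2$ is exhausted, and some agent $a_j$ eventually has $v_j(B)\ge \frac{2}{3}$. Since the instance is normalized, $\MMS_j\le 1$, and hence $v_j(B)\ge \frac{2}{3}\MMS_j$.

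\textbf{Other agents value $B$ at most $1$.} Let $a_k\neq a_j$ and write $B=\{g\}\cup S$ with $S\subseteq G_2$. Let $e$ be the last item added, if $S\neq\emptyset$.
\begin{itemize}
\item Before $e$ was added, no agent had value $\ge\frac{2}{3}$ for the bag, so $v_k(B\setminus\{e\})<\frac{2}{3}$.
\item Since $e\in G_2$, we have $v_k(e)\le\frac{1}{3}$.
\item Together these give $v_k(B)<1$.
\item If $S=\emptyset$, then $v_k(B)=v_k(g)<\frac{2}{3}$, because no item is large (Algorithm \ref{alg:large:reduction} has already run).
\end{itemize}

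\textbf{Conclusion.} By the normalization observation, $v_k(M)\ge 4$, so $v_k(M\setminus B)\ge 4-1=3$. This establishes the second condition, and $B$ is a temporary bundle with corresponding agent $a_j$.

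The only delicate point is the stopping rule. Line \ref{alg4:case1start}'s loop uses a strict threshold ($>\frac{2}{3}$), while the hypothesis gives only $\ge\frac{2}{3}$. In the boundary case $v_i(\{g\}\cup G_2)=\frac{2}{3}$ exactly, the bag may absorb all of $G_2$ without any agent strictly exceeding $\frac{2}{3}$. I would handle this by reading the threshold as $\ge$, consistent with Algorithm \ref{alg1}, so that $a_i$ itself qualifies. Once the bag is full, any agent at $\ge\frac{2}{3}$ works, and the bound $v_k(B)<1$ for the others still holds by the last-item argument. Everything else is immediate from normalization.
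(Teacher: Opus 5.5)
Your proof is correct and follows the paper's intended argument. The paper gives no separate proof here and defers to the three-agent Lemma~\ref{3.2}, whose reasoning is exactly your last-item argument made explicit: every non-recipient values $B$ at most $1$, since items of $G_1$ are below $\frac{2}{3}$ and items of $G_2$ are at most $\frac{1}{3}$, so $v_k(M\setminus B)\ge 4-1=3$. Your treatment of the line reference as a typo for Lines~\ref{alg4:case1start}--\ref{alg4:case1end}, and of the stopping threshold as $\ge\frac{2}{3}$, is appropriate; under the strict rule the bound $v_k(B)\le 1$ still suffices.
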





\begin{lemma}\label{4-2-0}
     If $v_i(G_2) < \frac{2}{3}- v_i(g)$ for all $a_i \in N$ and $ g \in G_1$, then $|G_1|\geq 7$.
 \end{lemma}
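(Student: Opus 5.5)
The plan is to follow the argument of Lemma \ref{3.3} for three agents, with $n=4$ in place of $3$. We work in the normalized four-agent instance, after Algorithm \ref{alg:large:reduction} has removed all large items. Hence every item has value less than $\frac{2}{3}$ for every agent, and $v_i(M)\ge 4$ for every $a_i$, by the observation following Definition \ref{2.1}.

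First, fix an agent $a_i$ and an item $g\in G_1$; such an item exists, since otherwise $v_i(M)=v_i(G_2)<\frac{2}{3}$, contradicting $v_i(M)\ge 4$. The hypothesis $v_i(G_2)<\frac{2}{3}-v_i(g)$ gives
\[
v_i(G_1\setminus\{g\}) = v_i(M) - v_i(G_2) - v_i(g) > 4 - \frac{2}{3} = \frac{10}{3}.
\]

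Second, every item of $G_1\setminus\{g\}$ has value strictly less than $\frac{2}{3}$ for $a_i$, since no item is large. Therefore
\[
|G_1\setminus\{g\}| \cdot \frac{2}{3} > \frac{10}{3},
\]
so $|G_1\setminus\{g\}|>5$, i.e.\ $|G_1\setminus\{g\}|\ge 6$. Adding back $g$ gives $|G_1|\ge 7$.

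The only delicate point is making sure no large items remain when the lemma is applied, and that $v_i(M)\ge 4$ still holds after normalization. Both follow from the branch structure of Algorithm \ref{alg2}: the case analysis is reached only in the else-branch, where no item is large for any agent, and the instance is the original normalized one with all four agents present. With these two facts in hand, the counting argument is identical to Lemma \ref{3.3}.
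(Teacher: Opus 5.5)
Your counting argument is the one the paper intends. The paper omits this proof as analogous to Lemma~\ref{3.3}, and replacing $v_i(M)\ge 3$ by $v_i(M)\ge 4$ there gives exactly your bound $v_i(G_1\setminus\{g\})>\frac{10}{3}$, hence $|G_1\setminus\{g\}|\ge 6$ and $|G_1|\ge 7$.

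The one step that does not work is your justification that $G_1\neq\emptyset$. If $G_1=\emptyset$, the hypothesis is vacuous and says nothing about $v_i(G_2)$, so you cannot conclude $v_i(G_2)<\frac{2}{3}$. In fact the statement is false in that degenerate case. Take four agents and twelve items, each worth $\frac{1}{3}$ to every agent. This instance is normalized, has no large item, and has $G_1=\emptyset$, so the premise holds vacuously while $|G_1|=0$.

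So nonemptiness of $G_1$ has to be an explicit extra assumption; the paper makes the same tacit assumption in Lemma~\ref{3.3}. Once that assumption is stated, your proof is complete.
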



\begin{lemma}[Case 2] \label{4-2}
    When $ v_i(G_2) < \frac{2}{3}- v_i(g)$ for each $a_i \in N$ and $ g \in G_1$ , and $|G_1| \geq 8$, we can find a set of agents $N'$ and their allocation $(A_i)_{a_i\in N'}$ by Algorithm \ref{alg:2n} with two below properties hold:
    \begin{enumerate}[label=(\roman*)]
        \item Every agent $a_i$ in $N'$ are allocated two items $\{g_1,g_2\}$ in $G_1$ that $v_i(g_1)>\frac{1}{3}$ and $v_i(g_2)>\frac{1}{3}$; 

        \item Every agent $a_j$ in $N\setminus N'$ values $\bigcup_{a_i\in N'}A_i$ no more than $\frac{2}{3}|N'|$.
    \end{enumerate}
\end{lemma}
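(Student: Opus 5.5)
The plan is to mirror the proof of Lemma \ref{3.4} for three agents, now with $|N|=4$. First I would run the pruning loop of Algorithm \ref{alg:2n}. For every nonempty subset $B\subseteq N$, let $S_B=\{g\in G_1 \mid \exists a_i\in B,\ v_i(g)>\frac13\}$. If $|S_B|<2|B|$, remove $B$ from $N'$ and $S_B$ from $G_1$. The output after pruning is a set $N'$ and a residual set $G_1'$ with two properties:
\begin{itemize}
\item every item of $G_1'$ is valued at most $\frac13$ by every agent of $N\setminus N'$;
\item every subset $B'\subseteq N'$ satisfies $|S_{B'}\cap G_1'|\ge 2|B'|$.
\end{itemize}

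Property (i) then follows from Hall's theorem applied to the bipartite graph in which each agent of $N'$ is duplicated into two copies. An edge joins a copy of $a_i$ to $g\in G_1'$ when $v_i(g)>\frac13$. The Hall condition for the duplicated side is exactly $|S_{B'}\cap G_1'|\ge 2|B'|$, so a perfect matching of the copies gives each $a_i\in N'$ two items of value more than $\frac13$. Such a matching is computable in polynomial time, and so is the pruning over the $2^4$ subsets.

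For property (ii), every agent $a_j\in N\setminus N'$ was removed at some stage, and every item handed to $N'$ lies in $G_1'$. Since $a_j$ values each item of $G_1'$ at most $\frac13$, she values $\bigcup_{a_i\in N'}A_i$ at most $2|N'|\cdot\frac13=\frac23|N'|$.

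The step needing the most care is showing that pruning is well defined and does not wreck the Hall condition for the survivors. Removing $S_B$ can shrink $S_{B'}$ for a later $B'$, so a single pass in a fixed order may not suffice. I would therefore iterate until no violating subset remains, which takes at most four removals. An alternative is to argue directly with the union of all violating sets: if $B_1,B_2$ violate, so does $B_1\cup B_2$, because $|S_{B_1\cup B_2}|\le|S_{B_1}|+|S_{B_2}|$. A potential worry is that an agent of $N\setminus N'$ might value a surviving item above $\frac13$. This cannot happen, because $S_B$ already contains every item that any removed agent values above $\frac13$.

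Finally, the hypothesis $|G_1|\ge 8=2|N|$, together with Lemma \ref{4-2-0}, is used only to note that $N'$ may be nonempty. The properties themselves hold even if $N'=\emptyset$, which is the trivial case handled by the main algorithm.
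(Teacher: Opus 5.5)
Your proof is correct and follows the paper's argument. The paper handles this case by referring to the proof of Lemma~\ref{3.4}. That proof prunes every agent set $B$ with fewer than $2|B|$ items valued above $\frac13$, together with those items. It then gives each surviving agent two such items, and bounds each pruned agent's value for the allocated items by $\frac13$ per item. Your duplicated-copy Hall argument and your iteration to a fixed point over subsets of the \emph{current} survivors supply what the paper asserts without justification. A single pass over subsets of the original $N$, as written in Algorithm~\ref{alg:2n}, does not by itself guarantee that no violating subset of survivors remains.

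Two side remarks in your write-up should be dropped or corrected.

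The ``union'' alternative is false when the sets overlap. The bound $|S_{B_1\cup B_2}|\le|S_{B_1}|+|S_{B_2}|<2|B_1|+2|B_2|$ does not give $|S_{B_1\cup B_2}|<2|B_1\cup B_2|$ if $B_1\cap B_2\neq\emptyset$. For example, take $S_{\{b\}}=\emptyset$ and let $S_{\{a\}},S_{\{c\}}$ be disjoint of size $3$. Then both $\{a,b\}$ and $\{b,c\}$ violate, but $\{a,b,c\}$ does not. Rely only on the iteration.

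The hypothesis $|G_1|\ge 2|N|$ does not merely allow $N'\neq\emptyset$; it forces it. Every item of $G_1$ is valued above $\frac13$ by some agent of $N$. Your pruning preserves two facts: each residual item is valued above $\frac13$ by some survivor, and at least $2|N'|$ residual items remain, since a violating $B$ removes fewer than $2|B|$ items. Hence the full survivor set never violates, every pruned $B$ is a proper subset, and $N'$ stays nonempty. This is what the algorithm actually needs, because Algorithm~\ref{alg2} has no branch for $|N|=4$ after Case 2. Your statement that the empty case is handled by the main algorithm is therefore not accurate.
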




\begin{lemma} \label{4-2-1}
    If $v_i(G_2 \cup \{g\}) < \frac{2}{3}$ for each $a_i\in N$ and $ g \in G_1$  and $|G_1|=7$, then $ v_i(\{g_1,g_2,g_3\}) \geq \frac{4}{3}$ for any $g_1, g_2, g_3 \in G_1$.
\end{lemma}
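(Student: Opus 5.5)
The plan mirrors Observation~\ref{3.5}, now with $|N|=4$ and $|G_1|=7$. I will fix an arbitrary agent $a_i$ and combine three facts: normalization gives $v_i(M)\ge 4$; the hypothesis gives $v_i(G_2\cup\{g\})<\frac{2}{3}$ for every $g\in G_1$; and every item in $G_1$ is worth less than $\frac{2}{3}$ to $a_i$, since the large item reduction has already been applied.

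\textbf{Step 1.} Fix any $g\in G_1$. Since $v_i(G_1\setminus\{g\})=v_i(M)-v_i(G_2\cup\{g\})$, we get
\[
v_i(G_1\setminus\{g\})> 4-\tfrac{2}{3}=\tfrac{10}{3}.
\]

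\textbf{Step 2.} Now fix any three items $g_1,g_2,g_3\in G_1$. Choose $g\in G_1\setminus\{g_1,g_2,g_3\}$; this is possible because $|G_1|=7$. Then $G_1\setminus\{g,g_1,g_2,g_3\}$ consists of exactly three items, each worth less than $\frac{2}{3}$ to $a_i$, so together they are worth less than $2$. Hence
\[
v_i(\{g_1,g_2,g_3\})=v_i(G_1\setminus\{g\})-v_i(G_1\setminus\{g,g_1,g_2,g_3\})>\tfrac{10}{3}-2=\tfrac{4}{3}.
\]
Since $a_i$ was arbitrary, this proves the lemma for every agent.

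\textbf{Consequences.} As in Observation~\ref{3.5}, subtracting one more item (worth less than $\frac{2}{3}$) shows that any two items of $G_1$ are worth more than $\frac{2}{3}$ to every agent. Subtracting again shows that every single item of $G_1$ is worth a positive amount, but this bound is weaker than in the three-agent case.

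\textbf{Main obstacle.} The step needing the most care is that the bound $v_i(M)\ge 4$ must still hold at this point of Algorithm~\ref{alg2}. This is guaranteed because we are in the branch where no large item exists and the instance is normalized, and large items are exactly those of value at least $\frac{2}{3}$. Apart from that, the argument is a short counting argument.
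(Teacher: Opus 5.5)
Your proof is correct and is essentially the argument the paper intends. The paper omits this proof as analogous to Observation~\ref{3.5}, and your count is that same argument with $|N|=4$ and $|G_1|=7$: you get $v_i(G_1\setminus\{g\})>4-\frac{2}{3}=\frac{10}{3}$ for some $g\notin\{g_1,g_2,g_3\}$, then subtract the three remaining items of $G_1$, each worth less than $\frac{2}{3}$.
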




\begin{lemma} [Case 3]\label{4-3}
    When $v_i(G_2) < \frac{2}{3} - v_i(g)$ for each $a_i \in N$ and $ g \in G_1$, $|G_1|=7$ and every agent has at most five indivisible items in $G_1$, a $\frac{2}{3}$-MMS allocation always exists.
\end{lemma}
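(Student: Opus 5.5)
The plan mirrors Lemma~\ref{3.6}: apply the cut-and-choose step of Algorithm~\ref{alg2} (Lines~\ref{alg4:case3start}--\ref{alg4:case3end}) and then reduce to two agents via Lemma~\ref{2agent}. Since every agent has at most five indivisible items among the seven items of $G_1$, each agent sees at least two items of $G_1$ as divisible. Seven items cannot host four pairwise-disjoint sets of at least two divisible items without repetition only if... rather than a pigeonhole count, I will simply use that the hypothesis of Case~3 provides two agents, say $a_1,a_2$, sharing a divisible item $g_d\in G_1$. This is the situation the algorithm treats. Pick any $g_1,g_2\in G_1\setminus\{g_d\}$.

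First I will verify the premises of Definition~\ref{def:cac}. By Lemma~\ref{4-2-1}, $v_i(\{g_1,g_d,g_2\})\ge \frac43$ for $i=1,2$. Because no item is large, $v_i(g_p)<\frac23\le \frac12 v_i(\{g_1,g_d,g_2\})$ for $p=1,2$. Hence $a_1$ can cut $g_d$ so that the triple splits into two halves of equal value to her. Then $a_2$ chooses, and both receive at least $\frac12\cdot\frac43=\frac23\ge\frac23\MMS_i$, using $\MMS_i\le 1$ from normalization.

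Next I will show that the remaining instance satisfies Lemma~\ref{2agent} for $a_3,a_4$ with items $M\setminus\{g_1,g_d,g_2\}$. Each of $a_3,a_4$ values the three removed items at most $3\cdot\frac23=2$, and normalization gives $v_j(M)\ge 4$. So $v_j(M\setminus\{g_1,g_d,g_2\})\ge 2\ge\frac43$ for $j=3,4$, and no remaining item is large. Lemma~\ref{2agent} then gives each of $a_3,a_4$ at least $\frac23\ge\frac23\MMS_j$.

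The delicate point is the bound in Lemma~\ref{4-2-1}. It must hold for every triple, including the one containing $g_d$, and for both sharing agents. It also needs to reach exactly $\frac43$ so that halving still yields $\frac23$. I would double-check it by repeating the argument of Observation~\ref{3.5}. For any $g\in G_1$ we have $v_i(G_1\setminus\{g\})>\frac{10}{3}$ from $v_i(G_2\cup\{g\})<\frac23$ and $v_i(M)\ge4$. Removing three further items, each of value less than $\frac23$, then leaves any triple worth more than $\frac43$. A second point to note is that the divided piece of $g_d$ is irrelevant to $a_3,a_4$: whether $g_d$ is divisible for them or not, they lose at most its full value, which is already counted in the bound of $2$.
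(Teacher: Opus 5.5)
Your proposal is correct and follows the route the paper intends for this case (the Case~3 branch of Algorithm~\ref{alg2}, the four-agent analogue of Lemma~\ref{3.6}): cut-and-choose on $\{g_1,g_d,g_2\}$ using the triple bound of Lemma~\ref{4-2-1}, which you rightly re-derive, and then Lemma~\ref{2agent} for the other two agents, who each retain more than $4-2=2$. The one thing to tighten is the existence of $g_d$: the lemma as stated assumes only that each agent has at most five indivisible items in $G_1$, so you should finish the pigeonhole count you started instead of importing the shared item from the Case~3 description---four agents with at least two divisible items each among seven items give at least $8>7$ incidences, so some item is divisible for two agents.
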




\begin{lemma} [Case 4]\label{4-4}
    When $v_i(G_2) < \frac{2}{3} - v_i(g)$ for each $a_i \in N$ and $ g \in G_1$, $|G_1|=7$ and there exists an agent $a_j$ who has at least six indivisible items in $G_1$,  $B = \max_{g \in G_1}{v_j(g)} \cup G_2$ is a temporary bundle and agent $a_j$ is its corresponding agent.
\end{lemma}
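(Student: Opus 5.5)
The plan is to check the two conditions of Definition~\ref{temporary} separately for $B=\{g_k\}\cup G_2$, where $g_k\in\arg\max_{g\in G_1}v_j(g)$. The condition for the other three agents should be immediate. The condition for $a_j$ itself is handled not through normalization, as in Lemma~\ref{3.7}, but by bounding $\MMS_j$ with a pigeonhole argument on $a_j$'s MMS partition. The reason for the switch is that imitating Lemma~\ref{3.7} (removing the $4$th and $5$th largest indivisible items) only yields $v_j(B)\ge 3-\frac{8}{3}=\frac13$, which is too weak.

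\emph{Other agents.} Take any $a_l\neq a_j$. The Case~4 hypothesis gives $v_l(G_2)<\frac23-v_l(g)$ for every $g\in G_1$, and in particular for $g_k$. Hence $v_l(B)<\frac23$. The instance is normalized, so $v_l(M)\ge 4$. Therefore $v_l(M\setminus B)>4-\frac23=\frac{10}{3}\ge 3$, which is the second condition.

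\emph{Agent $a_j$.} We need $v_j(B)\ge\frac23\MMS_j$, which is equivalent to $\MMS_j\le\frac32 v_j(B)$. Let $(S_1,\dots,S_4)$ be an MMS partition for $a_j$. Let $I_6\subseteq G_1$ be six items that $a_j$ regards as indivisible, and let $g_7$ be the remaining item of $G_1$ (here $|G_1|=7$). Each item of $I_6$ lies entirely inside one bundle. Every bundle's value is built only from items of $I_6$, pieces of $g_7$, and pieces of $G_2$. I split into two cases on how $I_6$ is spread over the four bundles.

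\begin{itemize}[leftmargin=*]
\item Some bundle $S_t$ contains no item of $I_6$. Then $\MMS_j\le v_j(S_t)\le v_j(g_7)+v_j(G_2)\le v_j(g_k)+v_j(G_2)=v_j(B)$, using that $g_k$ maximizes $v_j$ over $G_1$.
\item Every bundle contains at least one item of $I_6$. Since $|I_6|=6<8$, at least two bundles, say $S_s$ and $S_t$, contain exactly one item of $I_6$ each. Then
\[
2\MMS_j\le v_j(S_s)+v_j(S_t)\le 2v_j(g_k)+v_j(g_7)+v_j(G_2)\le 3v_j(g_k)+v_j(G_2)\le 3v_j(B).
\]
\end{itemize}

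In both cases $\MMS_j\le\frac32 v_j(B)$, so $v_j(B)\ge\frac23\MMS_j$. Hence $B$ is a temporary bundle with corresponding agent $a_j$.

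The main obstacle was choosing the right bound for $a_j$. Once the MMS-partition pigeonhole replaces the normalization argument, the only remaining care point is bookkeeping: the possibly divisible seventh item $g_7$ and the small items in $G_2$ may be split across bundles. This is harmless, since the two bundles together contain at most one full copy of $g_7$ and of $G_2$, which is exactly what the displayed inequality uses.
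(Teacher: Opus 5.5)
Your proof is correct, and for agent $a_j$ it takes a genuinely different route from the one the paper points to. The paper gives no separate proof of this lemma and only says it is analogous to the three-agent Lemma~\ref{3.7}. Your treatment of the other three agents is exactly that template: $v_l(B)<\frac23$ plus $v_l(M)\ge 4$. For $a_j$, however, the template uses normalization (removing the $n$-th and $(n+1)$-th largest indivisible items) to obtain an absolute bound $v_j(B)\ge\frac23$, and then invokes $\MMS_j\le 1$. You instead bound $\MMS_j$ itself, via a pigeonhole argument over $a_j$'s MMS partition, and get $\MMS_j\le\frac32 v_j(B)$.

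Your route is the one that actually closes for four agents. As you note, the normalization analogue only yields $\frac13$. More fundamentally, the Case~4 hypothesis applied to $a_j$ and $g_k$ already forces $v_j(B)=v_j(g_k)+v_j(G_2)<\frac23$. So in this case the conclusion can only hold because $\MMS_j<1$, and never through an absolute bound $v_j(B)\ge\frac23$. A further advantage is that your argument for $a_j$ uses neither normalization nor the Case~4 inequality. It relies only on $|G_1|=7$, the six indivisible items, and the choice of $g_k$.

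Two minor points of rigor:
\begin{itemize}
\item The pigeonhole step should say that if at most one bundle held exactly one item of $I_6$, the four bundles would hold at least $1+2\cdot 3=7>6$ such items. The inequality $6<8$ only guarantees one such bundle.
\item Placing each item of $I_6$ wholly inside one bundle should be stated as a without-loss-of-generality step. A split indivisible item contributes $0$, so it can be moved whole into one of its bundles without lowering any bundle's value.
\end{itemize}
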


Up to now, we have proved that we can find either a nice bundle, a temporary bundle, or a $\frac{2}{3}$-MMS allocation in each case.
In the next subsection, we are left to show how to use check-and-withdraw strategy to modify the cases with a temporary bundle, which is the core technique of our algorithm for four agents.

\subsubsection{The Check-and-Withdraw Mechanism} \label{sub:2}
In this part, we firstly summarize the conditions when the algorithm for 3 agents no longer works after finding a temporary bundle, and then show that another reduction could be found in polynomial time, which will prevent this situation from happening. 

For ease of presentation, denote the remaining items after removing the temporary bundle by $M_r$ and the remaining agents by $N_r$.
After the removal, $M_r$ are further divided into two categories with respect to the remaining agent $N_r$: 
$$\hat{G}_1=\{g\in M_r \,| \,\frac{1}{3}<v_i(g), \exists a_i\in N_r\},$$  $$\hat{G}_2=\{g\in M_r\,|\, v_i(g)\leq \frac{1}{3}, \forall a_i\in N_r\}.$$
It is noted that Algorithm \ref{alg1} may not work after removing a temporary bundle only when the maximin share of the remaining agents is considered, which refers to Case 4 in Algorithm \ref{alg1}.

It is easy to observe that in Algorithm \ref{alg1}, Case 4 happens if all of the five conditions hold:
         \begin{itemize}
            \item (1) There exists no item $g$ in $M_r$ and agent $a_i\in N_r$ such that $v_i(g)\geq \frac{2}{3}$ (not entering Line \ref{alg3:case0begin}), \label{3a0}
            \item (2) $\forall a_i\in N_r, g\in \hat{G}_1, v_i(\hat{G}_2\cup \{g\})<\frac{2}{3},$ (not entering Line \ref{alg4:case1start}), \label{3a2}
            \item (3) $|N_r|=3$ and $|\hat{G}_1|=5$ (not entering Line \ref{alg3:case2begin}), \label{3a1}
            \item (4) Every item $g\in \hat{G}_1$ is valued over $\frac{1}{3}$ by any agent $a_i\in N_r$ (Lemma \ref{3.5})\label{3a3},
            \item (5) There exists no shared divisible item for two agents in $\hat{G}_1$. (not entering Line \ref{alg4:case3start}). \label{3a4}
        \end{itemize}

For simplicity, denote the remaining agents by $\{a_1,a_2,a_3\}$, the removed agent by $a_4$, the temporary bundle by $A_4$, and the items in $\hat{G}_1$ by $\{g_1,g_2,g_3,g_4,g_5\}$.
Apart from these five conditions, if there exists one agent $a_i$ who considers all the five items in $\hat{G}_1$ indivisible, after allocating two items in $G_1$ with smallest values of her to others, $a_i$ values the remaining items still no less than $2\cdot\MMS_i$, hence the algorithm with 3 agents can still work.
We allocate bundle $\{g_1,g_2\}$ to one agent in $N_r\setminus \{a_i\}$ arbitrarily and prove it is a nice bundle for the reduced instance.

\begin{lemma}
   If there is an agent $a_i$ who has 5 indivisible items in $\hat{G}_1$ and Conditions (1)-(5) are satisfied, without loss of generality assume that $a_i$'s two items with the smallest value in $\hat{G}_1$ are $g_1$ and $g_2$.
   Then $\{g_1,g_2\}$ is a nice bundle for the reduced instance.
\end{lemma}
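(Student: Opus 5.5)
We need to verify the two conditions of Definition~\ref{2.4} in the reduced three-agent instance $\langle N_r=\{a_1,a_2,a_3\},M_r\rangle$, with $a_i$ the agent holding five indivisible items in $\hat G_1$. The nice bundle is $M'=\{g_1,g_2\}$ and its corresponding agent is some $a_k\in N_r\setminus\{a_i\}$. Call the other agent $a_l$; the two agents that remain are $a_i$ and $a_l$.

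\textbf{Step 1: the corresponding agent is satisfied.} By Condition (4), every item of $\hat G_1$ is worth more than $\frac13$ to every agent of $N_r$. Hence $v_k(\{g_1,g_2\})>\frac23\ge\frac23\MMS_k$, using $\MMS_k\le 1$ for a normalized instance. So $\{g_1,g_2\}$ alone is a $\frac23$-MMS allocation for $a_k$.

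\textbf{Step 2: the agent $a_l$ keeps enough value.} Since $A_4$ is a temporary bundle, $v_l(M_r)\ge 3$. Condition (1) says no item of $M_r$ is large for $a_l$, so $v_l(\{g_1,g_2\})<\frac43$. Therefore $v_l(M_r\setminus\{g_1,g_2\})>\frac53\ge\frac43$, which is the weaker of the two thresholds.

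\textbf{Step 3: the agent $a_i$ keeps at least $2$ (main step).} Only the Case~4 argument of Algorithm~\ref{alg1} uses maximin shares, so I will argue through $\MMS_i$ and not only through averages. The plan is the following.
\begin{enumerate}
\item Start from $v_i(M_r)\ge 3$.
\item Let $h_1\ge\dots\ge h_5$ be $a_i$'s values of the five items of $\hat G_1$, so $h_4,h_5$ are the values of $g_1,g_2$. All five items are indivisible for $a_i$, each is worth more than $\frac13$ and less than $\frac23$ to her, and by Condition (2) $v_i(\hat G_2)<\frac23-h_1$.
\item From these facts, $h_1+\dots+h_5>3-v_i(\hat G_2)>\frac73+h_1$, so $h_2+h_3+h_4+h_5>\frac73$.
\item If $h_4+h_5\le 1$, then $v_i(M_r\setminus\{g_1,g_2\})\ge 3-1=2$ and we are done.
\item Otherwise $h_4+h_5>1$. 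These are the third- and fourth-largest indivisible items among $a_i$'s items in the three-agent instance. This is exactly the situation handled by the second rule of the normalization (Definition~\ref{2.1}), used as in Lemma~\ref{3.7}. Indeed, in an MMS partition of $a_i$ over three bundles, two of the five indivisible medium items must share a bundle. Removing the smallest pair together with one agent therefore does not decrease $\MMS_i$, giving $v_i(M_r\setminus\{g_1,g_2\})\ge 2\MMS_i$.
\end{enumerate}

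The delicate point in Step~3 is that $\MMS_i$ here is measured in the original four-agent instance, while $A_4$ has already been removed. I plan to follow the paper's remark after Lemma~\ref{3.7}. Since all five items of $\hat G_1$ are indivisible for $a_i$, removing $A_4$ can be compared to removing one agent with a bundle worth at most $1$ to $a_i$. This keeps $v_i(M_r)\ge 3$, and the averaging bound in item 4 then suffices in both branches.

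If the normalization-based argument in item 5 fails in the first branch, the fallback is direct. Use $v_i(M_r\setminus\{g_1,g_2\})\ge h_1+h_2+h_3$. Because $h_1,h_2,h_3\ge h_4,h_5$, the bound $h_2+h_3+h_4+h_5>\frac73$ gives $h_2+h_3>\frac76$, and then $h_1+h_2+h_3>\frac76+\frac7{12}>\frac74$. This is short of $2$, so the fallback must be combined with the $G_2$ value. This numerical gap is the main obstacle, and I expect the MMS-monotonicity argument to be the right way to close it.

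Once Steps 1--3 hold, Lemma~\ref{2agent} finishes the allocation for $a_i$ and $a_l$.
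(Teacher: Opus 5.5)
Your Steps 1 and 2 agree with the paper. Step 3, however, has a genuine gap in the branch $h_4+h_5>1$, and it cannot be closed with the information you use there.

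\textbf{Why the branch $h_4+h_5>1$ fails as written.} From $v_i(M_r)\ge 3$ alone you only get $v_i(M_r\setminus\{g_1,g_2\})\ge 3-(h_4+h_5)$, which is below $2$ in this branch. So the claim that ``the averaging bound in item 4 then suffices in both branches'' is false. The pigeonhole/MMS-monotonicity argument is valid, but only for $a_i$'s three-agent maximin share on $M_r$. It yields $v_i(M_r\setminus\{g_1,g_2\})\ge 2\,\MMS_i^{(3)}(M_r)$, and nothing forces this reduced share to be at least $1$. The reduced instance is not normalized, and the fact that removing $A_4$ may lower $a_i$'s share is the whole reason check-and-withdraw exists. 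Moreover, normalization was performed on the original instance, and $g_1,g_2$ are $a_i$'s fourth and fifth items of $\hat G_1$, not the third and fourth.

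A concrete witness: take $h_1=\dots=h_5=0.6$, $v_i(\hat G_2)=0$ and $v_i(M_r)=3$. This satisfies Conditions (1)--(5) and your items 1--3. Yet $v_i(M_r\setminus\{g_1,g_2\})=1.8<2$ and $\MMS_i^{(3)}(M_r)=0.6$. So the shortfall in your fallback is structural, not a matter of sharper estimates.

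\textbf{The missing idea.} The paper's route is to return to the original normalized four-agent instance and to use an \emph{upper} bound on $v_i(A_4)$.
\begin{itemize}
\item Three items of $\hat G_1$ are indivisible for $a_i$ and at least as valuable as $g_1,g_2$. Hence $g_1,g_2$ are dominated by $a_i$'s fourth and fifth largest indivisible items in $M$.
\item Normalization (Definition~\ref{2.1} with $n'=4$) then gives $v_i(M\setminus\{g_1,g_2\})\ge 3$ when $v_i(\{g_1,g_2\})>1$. When $v_i(\{g_1,g_2\})\le 1$, the same bound follows from $v_i(M)\ge 4$.
\item $v_i(A_4)\le 1$ follows from how the temporary bundle is built. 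In Case 1 the bag-filling stops before $a_i$'s value can exceed $\frac23+\frac13$. In Cases 2 and 4 her value for it is at most $\frac23$.
\end{itemize}
Combining these gives $v_i(M_r\setminus\{g_1,g_2\})\ge 2$. Starting from the lower bound $v_i(M_r)\ge 3$ discards exactly this information. The witness above is excluded this way: $v_i(A_4)\le 1$ forces $v_i(M)=4$, and five indivisible items of value $0.6$ for $a_i$ are then incompatible with normalization, since the fourth and fifth sum to $1.2$.
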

\begin{proof}
For agent $a_i$, there are five indivisible items in $\hat{G}_1$.
According to the definition of normalization, for agent $a_i$, if $a_i$ values $\{g_1,g_2\}$ greater than 1, then the value of remaining items is $3$. 
If the value is smaller than 1, then the value of remaining items is still greater than $3$.
Since the value of the temporary bundle for agent $a_i$ is less than or equal to $1$, she values the remaining goods at least $2$.

For each agent $a_j$ in $N_r\setminus{\{a_i\}}$ who has not been allocated any item, the value of $A_4$ is less than $1$, and the value of the two items in $\hat{G}_1$ is at most $\frac{4}{3}$. 
Therefore, she values all the remaining items by at least $\frac{5}{3}$.
Hence, ${g_1, g_2}$ satisfy the definition of a nice bundle for the reduced instance and $a_j$ is the corresponding agent.
\end{proof}

Based on that, we add the sixth condition for further analysis:
\begin{itemize}
            \item (6) Each agent considers at least one item in $\hat{G}_1$ divisible.\label{3a5}
\end{itemize}

\begin{claim} \label{cla:rem1}
        If one remaining agent in $N_r$ values the temporary bundle $A_4$ by no more than $\frac{2}{3}$, Condition (1) or (2) cannot hold simultaneously.
    \end{claim}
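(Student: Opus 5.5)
The plan is a direct value-counting argument for the remaining agent $a_i \in N_r$ with $v_i(A_4) \le \frac{2}{3}$. We work in the setting of this subsection, where we are trying to land in Case 4 of Algorithm~\ref{alg1}, so $|N_r| = 3$ and $|\hat{G}_1| = 5$ by Condition (3). Suppose for contradiction that Conditions (1) and (2) both hold.

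First we lower-bound what $a_i$ sees in the residual instance. The instance is normalized, so $v_i(M) \ge 4$ by the observation following Definition~\ref{2.1}. Since $M_r = M \setminus A_4$ and $v_i(A_4) \le \frac{2}{3}$, we get
\[
v_i(M_r) = v_i(\hat{G}_1) + v_i(\hat{G}_2) \ge 4 - \frac{2}{3} = \frac{10}{3}.
\]

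Next we upper-bound the same quantity using (1) and (2). Fix any $g \in \hat{G}_1$ and let $g', g'', g''', g''''$ denote the other four items of $\hat{G}_1$.
\begin{itemize}
\item By Condition (1), each of these four items has $v_i$-value strictly less than $\frac{2}{3}$.
\item By Condition (2), $v_i(\hat{G}_2) + v_i(g) < \frac{2}{3}$.
\end{itemize}
Adding these bounds gives
\[
v_i(M_r) = v_i(\hat{G}_2 \cup \{g\}) + v_i(\hat{G}_1 \setminus \{g\}) < \frac{2}{3} + 4 \cdot \frac{2}{3} = \frac{10}{3}.
\]
This contradicts the lower bound, so (1) and (2) cannot hold together.

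The only delicate point is the dependence on $|\hat{G}_1| = 5$. The two-sided count closes exactly at $\frac{10}{3}$ only when $|\hat{G}_1| - 1 = 4$; if $\hat{G}_1$ were larger, the argument would give nothing. I expect this to be the main obstacle in stating the claim cleanly, and I would handle it as follows.
\begin{itemize}
\item I would state explicitly that the claim is read together with Condition (3), as the check-and-withdraw analysis only concerns the configuration leading to Case 4 of Algorithm~\ref{alg1}.
\item If $|\hat{G}_1| \ge 6$, the three-agent algorithm enters Case 2 and, by Lemma~\ref{3.4}, does not rely on MMS values, so nothing needs to be withdrawn.
\item The smaller counts $|\hat{G}_1| < 5$ are excluded under (2) by the same counting as in Lemma~\ref{3.3}, applied with $v_i(M_r) \ge \frac{10}{3}$ in place of $3$.
\end{itemize}
I would also note that both strict inequalities come straight from the statements of (1) and (2), so the contradiction is strict and no tie-breaking at exactly $\frac{2}{3}$ is needed.
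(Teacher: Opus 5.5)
Your argument is correct and is essentially the paper's proof. The paper likewise combines $v_i(\hat{G}_1\cup\hat{G}_2)\ge 4-\frac{2}{3}$ with $v_i(\hat{G}_1\setminus\{g\})\le 4\cdot\frac{2}{3}$ from Condition (1) to get $v_i(\hat{G}_2\cup\{g\})\ge\frac{2}{3}$, contradicting Condition (2); it uses $|\hat{G}_1|=5$ from Condition (3) tacitly, and you rightly make that explicit. One fine point is shared by both proofs: the step $v_i(M_r)\ge 4-\frac{2}{3}$ needs ``removing $A_4$ costs $a_i$ at most $\frac{2}{3}$,'' which is stronger than $v_i(A_4)\le\frac{2}{3}$ when $A_4$ cuts an item indivisible for $a_i$, but this stronger reading holds everywhere the claim is applied.
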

    \begin{proof}
        Denote the agent who values the temporary bundle $A_4$ no more than $\frac{2}{3}$ by $a_i$.
        Suppose Condition (1) is satisfied,
        for agent $a_i$ and any item $g\in \hat{G}_1$, the value of $\hat{G}_1\setminus \{g\}$ is at most $4\cdot\frac{2}{3}=\frac{8}{3}$.
        Hence, 
        
        \[
        v_i(\hat{G}_2\cup \{g\})\geq v_i(\hat{G}_1\cup \hat{G}_2)-v_i(\hat{G}_1\setminus \{g\})\geq 4-\frac{2}{3}-\frac{8}{3}=\frac{2}{3},
        \]
        contradicting to Condition (2).
    \end{proof}
    
We now consider the situation when we reduce the instance with 4 agents by giving one single item $g$ to one agent who values it by no less than $\frac{2}{3}$.
For any agent $a_i$ in $N_r$, if $g$ is indivisible, the maximin share of her does not decrease.
If $g$ is divisible for agent $a_i$, $g$ is allocated partially and the value of the allocated fraction of $g$ for her is no larger than $\frac{2}{3}$.  
By Claim \ref{cla:rem1}, Condition (1) or (2) must be broken and Algorithm \ref{alg1} can still work. 
Hence we have an additional condition:
\begin{itemize}
            \item (7) At least two items are removed.\label{3a6}
\end{itemize}

\begin{lemma}\label{lem:fix}
    After the removal of $a_4$ and $A_4$ with Conditions (1)-(7) satisfied, we withdraw the agent $a_4$ and the temporary bundle $A_4$.
    A new reduction can be found in polynomial time such that the new reduced instance doesn't meet at least one of the seven conditions. 
\end{lemma}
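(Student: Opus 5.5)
The plan is to exploit the rigid structure that Conditions (1)--(7) force on the original four-agent instance. Only two sources of temporary bundles exist: Case~1 of Algorithm~\ref{alg2}, where $A_4=\{g\}\cup B'$ with $g\in G_1$ and $B'\subseteq G_2$, and Case~4, where $A_4$ is the $a_4$-maximal item of $G_1$ together with $G_2$. The single-large-item reduction is already excluded by Claim~\ref{cla:rem1} and Condition~(7). The proof therefore splits by which of these two cases produced $A_4$.

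First I would record what the conditions imply about $M_r$. By (3), $|\hat G_1|=5$. By (4), each of $a_1,a_2,a_3$ values every $g_p\in\hat G_1$ in $(\frac13,\frac23)$. By (2), $v_i(\hat G_2)<\frac23-v_i(g)<\frac13$. By Claim~\ref{cla:rem1}, every $a_i\in N_r$ values $A_4$ above $\frac23$, hence $v_i(\hat G_1)$ is close to $3$, which pins each $v_i(g_p)$ near $\frac{3}{5}$ in a narrow window. By (5) and (6), each of $a_1,a_2,a_3$ has at least one divisible item in $\hat G_1$, and these divisible items are pairwise distinct across agents.

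Next I would look at $A_4$ from these agents' viewpoint. Since $v_i(A_4)>\frac23$ while small items are worth at most $\frac13$ and Condition (7) gives at least two items, $A_4$ contains a $G_1$-item $g_0$ (it must, in either case) plus further items. Together with the five items of $\hat G_1$, this gives six items that are medium for $a_1,a_2,a_3$, and this is the key resource.

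The withdrawal then builds a new reduction from $g_0$ and the items $g_p\in\hat G_1$. Let $g_d\in\hat G_1$ be divisible for $a_1$. I would first try to give $a_4$ a bundle formed by one medium item of hers from $\hat G_1$ plus small items, or give $a_4$ two items of $G_1$ outright, as in Algorithm~\ref{alg:2n}. Either way $g_0$ returns to the pool. The reduced instance then contains six items that are medium for every agent of $N_r$, so Condition~(3) fails, and Lemma~\ref{3.4} (Case~2 of Algorithm~\ref{alg1}) applies. If $a_4$ cannot be satisfied this way, then $a_4$ values the $\hat G_1$-items low. In that case I would instead remove another agent, say $a_1$, together with $\{g_0\}\cup(\text{small items})$, since $a_1$ values $g_0$ highly. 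I would then verify that for the new remaining triple either Condition~(1)/(2) fails via Claim~\ref{cla:rem1}, or some item is now shared-divisible, breaking (5). In the latter case Lemma~\ref{3.6} gives the allocation directly. I also need to check that the new bundle is a nice or temporary bundle, that is, that the remaining agents keep value at least $3$ (or $2$ and $\frac43$). This follows from the upper bounds $v_i(g)<\frac23$ and $v_i(\hat G_2)<\frac13$.

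\textbf{Main obstacle.} The hardest step is the sub-case where $a_4$ (from Case~4) has six indivisible $G_1$-items but values the $\hat G_1$ items unevenly, so that neither candidate new reduction obviously works. Here I would use the normalization argument from Lemma~\ref{3.7}: removing $a_4$'s third- and fourth-largest indivisible items keeps her value at least $2$. From this I would derive a contradiction with (4) and the tight value window above, concluding that some agent of $N_r$ must value $A_4$ at most $\frac23$. This is excluded by Claim~\ref{cla:rem1}. Each candidate reduction checks only constantly many bundles, so the whole procedure runs in polynomial time.
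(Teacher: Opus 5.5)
\textbf{There is a genuine gap.} It lies in the case that carries the content of the lemma: the bag-filling temporary bundle from Case 1, $A_4=\{g_0\}\cup B'$ with $B'\subseteq G_2$.

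Your first reduction gives $a_4$ one item of $\hat G_1$ plus small items, or two $G_1$-items. This does not break Condition (3). Afterwards, the items medium for some agent of $\{a_1,a_2,a_3\}$ are at most the four surviving items of $\hat G_1$ together with $g_0$. Moreover, $g_0$ need not be medium for any of them: $g_0\in G_1$ only means that some agent, possibly only $a_4$, values it above $\frac13$, and $v_i(A_4)>\frac23$ may come mostly from $B'$. So the claimed six items medium for every agent of $N_r$ do not exist. Worse, the removed bundle now contains an item that every remaining agent values above $\frac13$. Hence Claim \ref{cla:rem1} is unavailable, and all seven conditions may hold again; with two $\hat G_1$-items the bundle need not even be temporary.

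Your fallback gives $a_1$ the set $\{g_0\}\cup$(small items) \emph{since $a_1$ values $g_0$ highly}, which rests on the same unfounded premise. The dichotomy \emph{either (1)/(2) fails or some item becomes shared-divisible} is asserted rather than derived: $a_4$ may value that bundle above $\frac23$, and nothing forces a shared divisible item. The \emph{narrow window near $\frac35$} is also not implied. The conditions leave each $v_i(g_p)$ anywhere in $(\frac13,\frac23)$, for example one item near $\frac13$ and four near $\frac23$.

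\textbf{The missing idea.} The paper redoes the bag-filling, choosing the seed item and the tie-breaking so that whoever takes the bag, a specific condition fails.
If $|G_1|\ge 7$, some surviving $G_1$-item is medium only for $a_4$. Reseed with $g_0$ and give priority to agents of $N_r$. If one of them takes the bag, $a_4$ stays, and her extra medium item together with $\hat G_1$ yields six medium items, breaking (3). If $a_4$ takes it, the others value it below $\frac23$ and Claim \ref{cla:rem1} applies.
If $|G_1|=6$, the seed is chosen from the divisibility pattern, in one of three ways:
\begin{itemize}
\item the unique divisible $G_1$-item of some $a_1\in N_r$: if anyone else takes the bag, (6) fails for $a_1$; if $a_1$ takes it, Claim \ref{cla:rem1} applies;
\item a divisible item shared by two agents, allocated fractionally to the one needing less of it;
\item when each agent of $N_r$ has two private divisible items, the fact that $a_4$ then sees all of $G_1$ as indivisible.
\end{itemize}

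\textbf{Misplaced difficulty.} Case 4 is immediate, not the hardest step. Its entry condition $v_i(G_2\cup\{g\})<\frac23$ for all agents gives $v_i(A_4)<\frac23$ for every $a_i\in N_r$. So Claim \ref{cla:rem1} already excludes (1)--(7) without any normalization argument. You also omitted Case 2, where Algorithm \ref{alg:2n} removes a single agent with two items. It is dispatched the same way, via property (ii) of Lemma \ref{4-2} and Claim \ref{cla:rem1}.
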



\begin{proof}
    Depending on Condition (7), there are at least two items in $A_4$.
    Hence, here we consider four cases discussed in the Algorithm \ref{alg2}. 
    Now we withdraw the agent $a_4$ and the set of items $A_4$, trying to find a new reduction $N^*$ and $(A_{i})_{a_i\in N^*}$.
    Denote the new remaining agents as $N'_r=N\setminus N^*$ and the new remaining items as $M'_r=M\setminus \bigcup_{a_i\in N^*}A_i$.


    \textit{Case 1:} There exist $a_i \in N$ and $g \in G_1$ such that $v_i(G_2) \geq \frac{2}{3} - v_i(g)$.
    This case is the most intricate one. Algorithm \ref{alg2} only allocates one item in $G_1$ to agent $a_4$.
    Therefore, $|G_1|\geq |\hat{G_1}|+1=6$ and we discuss two subcases according to the cardinality of $G_1$.

    \textit{Subcase 1.1:} $|G_1|=6$.
    Denote $G_1=\{g_1,\ldots,g_6\}$.
     Case 1 is divided into the following 3 subcases.
     Under Condition (7), every agent in $N_r$ considers at least one item in $G_1$ divisible. 
    Based on Conditions (4) and (5), at least one agent considers there is only one divisible item in $\hat{G}_1$.
    For any agent $a_i\in N_r$ and any item $g\in G_1$, the value of $G_2\cup \{g\}$ satisfies $v_i(G_2\cup \{g\}) \geq 4-5\cdot \frac{2}{3}=\frac{2}{3}$. 

    \begin{itemize}

    \item \textit{Subsubcase 1.1.1:} There exists one agent in $N_r$ who considers there is only one divisible item in $G_1$.
    Let the agent and item be $a_1$ and $g_1$ respectively.
    We put item $g_1$ into the bag $B$ and keep adding the items in $G_2$ until some agent thinks the value of $B$ is no less than $\frac{2}{3}$.
    If there is an agent $a_i$ in $N\setminus \{a_1\}$ who values $B$ no less than $\frac{2}{3}$, allocate $B$ to this agent and remove them.
    After this reduction, agent $a_1$ in $N'_r$ thinks the items in $G_1$ are all indivisible, which breaks Condition (6).
    Otherwise, there are three agents who value $B$ less than $\frac{2}{3}$.
    We allocate $B$ to agent $a_1$ as $N_1$ and remove them to get a new instance with 3 agents.
    Based on Claim \ref{cla:rem1}, the new instance does not satisfy Conditions (2) and (3) at the same time.


    \item \textit{Subsubcase 1.1.2:} There are two agents who share a divisible item in $G_1$.
    Let the two agents and the item be $a_1,a_2$ and $g_1$, respectively.
    $g_1$ is firstly added into bag $B$ and then items in $G_2$ are added into $B$ one by one until some agent values $B$ by no less than $\frac{2}{3}$.
    If one agent in $N\setminus\{a_1,a_2\}$ values $B$ no less than $\frac{2}{3}$, allocate $B$ to the agent and remove them.
    After this reduction, the reduced instance violates Condition (6).
    Otherwise, choose the agent in $\{a_1,a_2\}$ whoever values $B\setminus \{g_1\}$ plus part of $g_1$ by $\frac{2}{3}$ with a smaller fraction of $g_1$.
    Assume that agent is $a_1$. We allocate $B\setminus \{g_1\} \cup \{len_1(g_1,\frac{2}{3})\}$ to agent $a_1$ as $A_1$ and remove them from the instance.
    For the other three agents in $N'_r$, the value of $M'_r$ is at least $\frac{10}{3}$.
    As proved above, Condition (1) or (2) cannot be satisfied for the new reduced instance.

    \item \textit{Subsubcase 1.1.3:} Every agent in $N_r$ considers at least two of the items in $G_1$ divisible and there is no shared divisible item in $G_1$ for every two agents.
    So agent $a_4$ considers all the items in $G_1$ to be indivisible since $|G_1|=6$.
    We arbitrarily choose an item $g$ in $G_1$ to add to an empty bag $B$, and keep adding items in $G_2$ to $B$ until some agent values $B$ no less than $\frac{2}{3}$.
    Considering the proof of Subcase 1.1, Conditions (2), (3), and (6) must not simultaneously hold.

    \end{itemize}

    \textit{Subcase 2:} $|G_1|\geq 7$.
    Denote $G_1=\{g_1,g_2,\ldots,g_7,\ldots\}$.
    Assume the item in $A_4\cap G_1$ is $g_6$ and those in $\hat{G}_1$ are $\{g_1,\ldots,g_5\}$.
    Since item $g_7$ is not removed, only $a_4$ thinks $g_7$ is medium.
    Similar to the discussion of Subcase 1.1, we withdraw $A_4$ and $a_4$. $g_6$ is first put into the bag $B$, and the items in $G_2$ are added into $B$ until some agent considers the value of the bag to be no less than $\frac{2}{3}$.
    If there is one agent in $N_r$ who values $B$ by no less than $\frac{2}{3}$, allocate $B$ to her and remove them.
    Since Condition (4) shows any item in $\hat{G}_1$ is medium for any agent in $N_r$, there are 6 items $\{g_1,\ldots,g_6\}$ which are medium for at least one agent in $N'_r$, breaking Condition (3).
    Otherwise, the three agents consider the value of $B$ is less than $\frac{2}{3}$.
    By Claim \ref{cla:rem1}, Condition (1) or (2) is broken.

    \textit{Case 2:} $v_i(G_2) < \frac{2}{3} - v_i(g_0)$ for each $a_i \in N, g_0 \in G_1 ;|G_1| \geq 8$.
    In this case, if we find a temporary bundle, only one agent and two goods in $G_1$ are removed. 
    As mentioned in Lemma \ref{3.4}, the value of the temporary bundle for the remaining agents is less than $\frac{2}{3}$, which breaks Condition (1) or (2) by Claim \ref{cla:rem1}.

    Since two agents are removed in \textit{Case 3}, there is no need to discuss this case. 

    \textit{Case 4:} $v_i(G_2) < \frac{2}{3} - v_i(g_0)$ for each $a_i \in N, g_0 \in G_1; |G_1|=8$ and there are no shared divisible items in $G_1$.
    In this case, since no agent in $N_r$ values $A_4$ at more than $\frac{2}{3}$, Conditions (1) and (2) cannot be satisfied at the same time by Claim \ref{cla:rem1}.

    Based on the above discussion, Lemma \ref{lem:fix} is proved.
\end{proof}

    Combining everything together, the correctness of Theorem \ref{4agent} follows from Lemmata \ref{4-1} to \ref{lem:fix} stated above. 
    Therefore, Algorithm \ref{alg2} returns a $\frac{2}{3}$-MMS allocation for four agents in polynomial time.

\end{document}